\numberwithin{equation}{section}
\numberwithin{equation}{section}
\newtheorem{thm}{Theorem}[section]
\newtheorem{lm}[thm]{Lemma}
\newtheorem*{thm*}{Theorem}
\newtheorem{corollaire}[thm]{Corollary}
\newtheorem*{corollaire*}{Corollary}
\newtheorem{prop}[thm]{Proposition}
\newtheorem*{prop*}{Proposition}
\newtheorem*{definition}{Definition}
\newtheorem*{num_res}{Numerical result}
\newtheorem*{num_ress}{Numerical results}
\newtheorem*{conj}{Conjecture}
\newcommand{\R}{\mathbb{R}}
\newcommand{\cC}{{\ensuremath{\mathcal C}} }
\newcommand{\cO}{{\ensuremath{\mathcal O}} }
\newcommand{\bbS}{{\ensuremath{\mathbb S}} }
\DeclareMathOperator{\vspan}{Span}
\begin{document}

\title[Asymptotic dynamics for the Kuramoto model]{
Global attractor and asymptotic dynamics in the Kuramoto model for coupled noisy phase oscillators}

\author{Giambattista Giacomin}
\address{
  Universit{\'e} Paris Diderot (Paris 7) and Laboratoire de Probabilit{\'e}s et Mod\`eles Al\'eatoires (CNRS),
U.F.R.                Math\'ematiques, Case 7012 (site Chevaleret)
             75205 Paris Cedex 13, France
}
\author{Khashayar Pakdaman}
\address{Institut Jacques Monod, CNRS, UMR 7592, Univ Paris Diderot,
Sorbonne Paris Cit\'e, F-750205 Paris, France}

\author{Xavier Pellegrin}
\address{Institut Jacques Monod, CNRS, UMR 7592, Univ Paris Diderot,
Sorbonne Paris Cit\'e, F-750205 Paris, France}

\date{
21th July 2011, 
revised version 5th January 2012}

\begin{abstract}
We study the dynamics of the large $N$ limit of the Kuramoto model of coupled phase oscillators, 
subject to white noise. We introduce the notion of shadow inertial manifold and we prove their existence for this model, 
supporting the fact that the long term dynamics of this model is finite dimensional. 
Following this, we prove that the global attractor of this model takes one of two forms. 
When coupling strength is below a critical value, the global attractor is a single equilibrium point corresponding to an incoherent state.
 Otherwise,  when coupling strength is beyond this critical value, the global attractor is a two-dimensional disk 
composed of radial trajectories connecting a saddle-point equilibrium (the incoherent state) to an invariant closed curve 
of locally stable equilibria (partially synchronized state). Our analysis hinges, on the one hand, 
upon sharp existence and uniqueness results and their consequence for the existence of a global attractor, and, on the other hand, 
on the study of the dynamics in the vicinity of the incoherent and coherent (or synchronized) equilibria. We prove in particular non-linear stability 
of each synchronized equilibrium, and normal hyperbolicity of the set of such equilibria. We explore mathematically and 
numerically several properties of the global attractor, in particular we discuss the limit of this attractor as noise intensity decreases to zero.
   \\
  \\
  2010 \textit{Mathematics Subject Classification: 37N25, 37B25, 92B25, 82C26}
  \\
  \\
  \textit{Keywords: synchronization, Kuramoto model, Fokker-Planck PDE, global attractor, inertial manifold.}
\end{abstract}

\maketitle

\section{Introduction}

The Kuramoto model is a classical model of interacting phase oscillators, coupled through a mean-field term, widely used to investigate synchronization. 
It has been applied in various fields such as physics, chemistry and biology (see review \cite{RevModPhys.77.137} and reference therein).
 In different contexts the same model is also referred to as the Sakaguchi model \cite{MR1783382},
 the Smoluchowski model \cite{MR2164412,MR2109485,MR2425333}, or the mean-field classical spin XY model \cite{silver:468}.
 One remarkable feature of this model is that, for large populations, it undergoes a transition from an incoherent state to a 
(partially) synchronous one as the coupling strength between oscillators is increased. 
Many numerical and theoretical studies have analyzed this transition under various hypotheses \cite{MR1783382}.
 In the present work, we are interested in the long term dynamics of infinitely many interacting identical noisy phase oscillators. 
For such an infinite population, the following mean field model can be derived from the microscopic description (see \cite{MR2594897} and references therein)
\begin{equation}\label{eq}
\begin{cases}
\partial_{t} q(t, \theta) = \frac{1}{2} \partial_{\theta}^2q(t, \theta) - 
   K \partial_{\theta} \left [ q(t,\theta) (J \ast q)(t, \theta) \right ] & \; \; t > 0,\,  \theta \in [-\pi, \pi],  \\
q(0, \theta) = q_{0}(\theta) & \; \; \theta \in [- \pi , \pi], \\
q(t, - \pi) = q(t, \pi) & \; \; t \geq 0, \\
\partial_{\theta} q(t, - \pi) = \partial_{\theta} q(t, \pi) & \; \; t \geq 0.    
\end{cases}
\end{equation}
where
\begin{equation}\label{eq:pot_kuramoto}
 J \ast q (t,\theta) = \int_{-\pi}^{\pi} \sin( \varphi - \theta) q(t,\varphi) d\varphi,
\end{equation} $K$ is a real constant representing the coupling strength $q \geq 0$ such that $\int_{\mathbb{S}} q (t,\theta) d \theta=1$. 
The variable $\theta \in [- \pi, \pi]$ accounts for the phase of the oscillators, and the unknown $q(t,\theta)$ 
for the density of oscillators at phase $\theta$ at time $t$. 
We have assumed that oscillators are homogeneous, i.e. they all have the same intrinsic frequency $\omega=0$, 
and without loss of generality we have assumed that
the intensity of the noise perturbing each oscillator is one.
\par For this system one encounters two distinct regimes depending on the ratio between the noise intensity and 
the coupling strength. When the coupling strength $K$ is smaller than a critical value $K_c$ the noise dominates, 
a uniform state is the only equilibrium of (\ref{eq}), and the population always tends to this incoherent state. 
When $K>K_c$ instead the coupling dominates, a family of non-trivial coherent (or synchronized) equilibria exists, and the population tends to synchronize.
%
%
When $K > K_c$, for a different coupling potential $\tilde J \ast q (t,\theta) = \int_{-\pi}^\pi \sin^2(\varphi - \theta) q(\theta) d\theta $, 
existence and uniqueness of a one dimensional circle  $\mathcal C = \{ f^*(\cdot + \varphi), \; \varphi \in [0,2\pi[ \}$, where 
$f^*$ is a non uniform probability on $[0, 2\pi]$,
 of non-trivial equilibria has been established in \cite{LuoZZ_2004}.
In a more general setting, estimates on the number of equilibria, and asymptotics in the large coupling limit $K \rightarrow + \infty$, 
have been established in \cite{CKT_2005}.
For equation (\ref{eq}) with (\ref{eq:pot_kuramoto}), when $K>K_c$, existence and uniqueness of a circle  of non-trivial equilibria 
$\mathcal C $ has been established in \cite{MR2109485} (see also \cite{MR2594897} and references therein).
A spectral gap estimate and linear stability of trivial equilibria $\hat q(\cdot + \varphi)$ have also been shown in \cite{MR2594897}.

\par Several major advances have recently been made in the understanding of the global dynamics of Kuramoto models, 
in the case of deterministic oscillators and in the case of noisy oscillators. 
For an infinite number of identical oscillators subject to no noise, equivalent to equation (\ref{eq}) with no diffusion, 
Ott and Antonsen have shown thanks to a well chosen ansatz
that a two-dimensional invariant manifold exists, and they have provided an explicit expression for the orbits on 
this manifold \cite{MR2464324}. These orbits are the paths followed by the oscillator population during synchronization, 
as they describe the trajectories that connect the incoherent state to synchronized ones. 
Generalizing these ideas to finite dimensional systems, Mirollo and Strogatz \cite{MR2603661} 
have shown that regardless of the (finite) number of oscillators, the dynamics of 
the homogeneous Kuramoto model can be reduced to three dimension, through an action of 
a three-dimensional Moebius group on the torus $\mathbb{T}^{N}$. 
While these constitute important breakthroughs in our understanding 
of the dynamics of coupled phase oscillators, they do not deal with the situation where noise perturbs the dynamics of the units.

\par In the case of an infinite number of noisy identical oscillators (\ref{eq}), 
Vukadinovic has established in \cite{MR2425333} 
 the existence of finite dimensional invariant exponential attractors (inertial manifolds) 
 in the invariant subspace of symmetric solutions $q(t,\theta) = q(t, - \theta)$ of (\ref{eq}). 
Inertial manifolds existence theorems do not apply directly to equations of this form, but 
Vukadinovic has developed methods to show their existence for a Smoluchowski equation 
on the circle \cite{MR2425333} and on the sphere \cite{MR2470912}, 
and for a Burgers equation \cite{MR2725294} for example. 
Although the dynamics of (\ref{eq}) is infinite dimensional, 
inertial manifolds show that it has some typical properties of finite dimensional systems. 
\par In this work we are interested in infinite populations of noisy coupled oscillators
(\ref{eq}) and the finite-dimensional behavior of its dynamics. This work is in line with previous findings and it extends them 
to give a complete rigorous description of the long term dynamics of the Kuramoto model (\ref{eq}).
This paper is organized so as to move progressively from general results concerning the solutions of equation (\ref{eq}) to more refined 
descriptions of the dynamics on its attractor and ending with a number of numerical investigations opening the way for some conjectures.
\par In section \ref{section_existence}, we start with sharp existence and regularity results for solutions of (\ref{eq}).
Existence and uniqueness of solutions for (\ref{eq}) in $L^2$ or Sobolev $H^s$ spaces is a 
classical result. Combining a result of \cite{MR2594897} and a method of \cite{MR2164412}, we show that
for any initial condition $q_0$ in a measure class the (unique) solution $q(t)$ of (\ref{eq}) is in a analytical functions space for all times $t >0$.
One consequence of the regularizing properties of equation (\ref{eq}) is that many convergence phenomena of solutions towards equilibria 
or invariant manifolds happen in an analytic-functions space and not just in the classical $L^2$ space. 
\par In section \ref{section_inertial}, we prove the existence of shadow inertial manifolds for the Kuramoto model (\ref{eq}). 
As inertial manifolds which have been introduced to overcome some flaws of global attractors \cite{MR943945} \cite{MR943276}, 
 shadow inertial manifolds attract all solutions exponentially fast, the dynamics on a shadow inertial manifold is given by an ODE system, 
and each solution of the system has a phase on an asymptotically complete shadow inertial manifold.
 In that respect 
 inertial manifolds and shadow inertial manifold give finite dimensional and accurate reduction of the long term behavior of a dynamical system. 
Our proof uses of original ideas of  \cite{MR2425333}, while avoiding some of its technical difficulties. 

\par These previous results establish essentially that the long term dynamics of equation (\ref{eq}) can be captured by finite dimensional ODEs.
From this point on, we focus on more specific properties of its  asymptotic dynamics. In section \ref{section_equilibria}, 
we analyze the stability (or lack of stability) of equilibria. 
For readers' convenience, we start by briefly recalling existing results for equilibria of equation (\ref{eq}) of \cite{MR2109485,MR2594897} mentioned above. 
Following this we perform local stability analyzes of equilibria. 
We write linearization theorems at the incoherent equilibrium $q(\theta)=\frac{1}{2\pi}$
in a rigorous mathematical setting, and we show that they confirm what has been largely expected in the literature.  
This thorough analysis also gives global stability information and estimates of escape times when the incoherent equilibrium is unstable,  
 and it will be a basic ingredient in the study on the global attractor in section \ref{section_global}.
At the synchronized equilibria, an ad hoc Hilbert structure and associated spectral gap estimates have recently been found \cite{MR2594897}. 
Thanks to these we show that, when the coupling strength is large enough, $K>K_c$, the family of synchronized equilibria is asymptotically stable. 
Due to rotational invariance of (\ref{eq}), this family forms a circle like closed invariant curve. 
We give an estimate of the phase-shifting effect of a small perturbation of a synchronized equilibrium on this invariant curve. 
Finally, another important consequence of these local stability results is the uniform normal hyperbolicity 
of the family of synchronized equilibria that is of interest on its own. More on this can be found in \cite{GPPP}.

\par In section \ref{section_global}, we establish several properties of the global attractor of (\ref{eq}).
The existence of a global attractor in a classical consequence of regularizing properties of equation (\ref{eq}), 
such as proved in \cite{MR2594897,MR2164412,MR1691574}. This is briefly recalled in section 
\ref{section_global1}.
In section \ref{section_global2}, relying on (un)stability analysis of equilibria in section \ref{section_equilibria}, 
we show that, just as without the diffusion term, equation (\ref{eq}) possesses a two-dimensional invariant manifold. 
Interestingly, this invariant manifold coincides with the global attractor of the model, 
whereas in the case of equation (\ref{eq}) without diffusion, the global attractor, if it exists, 
cannot be reduced to the two dimensional invariant manifold of \cite{MR2464324}:
there are many unstable equilibria in the no-diffusion case 
that the global attractor would have to contain. 
Global attractors are 
classical tools to obtain a finite dimensional reduction of the asymptotic behavior of a large system \cite{MR941371}, 
this means that Kuramoto model (\ref{eq}) can asymptotically be reduced to a simple two-dimensional dynamical system on a disk. 
We characterize the trajectories on this manifold as orbits connecting the incoherent state 
to a synchronized one and further provide an algorithm for their numerical computation. 
Several properties of the global attractor of (\ref{eq}), the dynamics on it and in its neighborhoods are also discussed 
numerically in section \ref{section_global3}. 
In the strong-coupling limit
(or equivalently in the small-diffusion limit ) $K \rightarrow + \infty$, we show that the global attractor of (\ref{eq}) converges formally 
to the two-dimensional invariant manifold found by Ott and Antonsen \cite{MR2464324} in the no-diffusion case of equation (\ref{eq})
 and supported by numerical evidences, we conjecture that this convergence holds in analytical functions spaces. 
\par In this way, we derive a full description at different levels of the long term dynamics of the mean field coupled Kuramoto model composed of
infinitely many identical phase oscillators in the presence of noise.

\section{Existence, uniqueness and regularity results}\label{section_existence}


\par 
In the following we consider Sobolev spaces 
 $ H^s(\mathbb S) = \left  \{ q \in L^2(\mathbb S): \, \sum_{1}^{+ \infty} (1+k^2)^s ( \alpha_k^2 + \beta_k^2 ) < \infty  \right \} $
for $s \geq 0$, and the Gevrey spaces
 $ \mathcal G_{a} = \left   \{ q \in L^2(\mathbb S): \, 
 \Vert q \Vert_{\mathcal G_a}^2 = \sum_{k=1}^{+ \infty} a^k (\alpha_{k}^{2} + \beta_{k}^{2}) < + \infty \right \}. $
When $a>1$, the Gevrey space $\mathcal G_a$ is a subspace on the space of real analytic functions, 
for any $k$ the natural injections $\mathcal G_a \rightarrow H^k$ are continuous and compact, 
and if $1< a_1 < a_2$ then the injection $ \mathcal G_{a_2} \rightarrow \mathcal G_{a_1}$ is continuous and compact. 
Notice that in a different context Gevrey functions are usually between $C^\infty$ regularity and analytic regularity, 
but for partial differential equations the Gevrey spaces $\mathcal G_a$ above are classical too \cite{MR1026858}.

If $ q(t, \theta) = \frac{1}{2 \pi} + \frac{1}{\pi} \sum_{n=1}^{+ \infty} x_n(t) \cos(n \theta) + y_n(t) \sin(n \theta) $ 
is a solution of (\ref{eq}) in $C^{1}([0, + \infty[, L^{2}(\mathbb{S}))$, taking $z_{n} = x_{n} + i y_{n}$ one finds that for all $n \geq 1$ and $t \geq 0$,
\begin{equation}\label{eq_FourierODEs}
 z_{n}^{\prime} = - \frac{n^{2}}{2} z_{n} + \frac{Kn}{2}[ z_{1} z_{n-1} - \overline{z_{1}} z_{n+1}  ]\, , 
\end{equation}
where $z_{0}(t) = 1$ for all times. 
This ODE system is the key point to show that solutions are in a Gevrey space for any positive time, 
details of computations are similar to those of theorem 3.1 in \cite{MR2164412} for the symmetric case $y_n=0$ ($\forall n \geq 0$ and $t \geq 0$).
In \cite{MR2594897}, it was shown that for $q_0$ in a measure class, the solution $q(t)$ of 
(\ref{eq}) is $C^\infty(]0, +\infty[ \times \mathbb S^1)$. 
Combining these, we can show that Gevrey regularization occurs in the general case 
($y_n \neq 0$) too, and regardless of the coupling strength $K$
the solution of Kuramoto equation is in $\mathcal G_a$ for any $a > 1$ (for all $t>0$).
This proof also inspires the following lemma \ref{Gevrey_Cv_Lemma}, establishing that equation (\ref{eq}) defines a continuous semiflow 
$L^2 \rightarrow  \mathcal G_a$, which will be used several times in the next sections. 

\begin{thm}
\label{th:G1}
Let $q_{0}$ be a probability measure on $\mathbb{S}$. 
There is a unique weak solution of the Kuramoto equation $q_{t} \in C^{0}([0 , + \infty[, \mathcal{M}(\mathbb{S}))$, 
absolutely continuous with respect to the Lebesgue measure for any $t>0$, 
and denoting $q(t, \cdot)$ its density we have 
$q(t, \theta) > 0$ for all $ t >0$ and $\theta \in [0, 2 \pi]$, we have $ q \in C^{\infty}( ]0 , + \infty[\times [0 , 2 \pi ] )$ and
 $q(t, \cdot) \in \mathcal G_{a}$ for any constant $a \geq 1$ and $t>0$.
Furthermore, for any $a>1$ there is a bounded absorbing set $B_a$ in Gevrey space $\mathcal G_a$ for equation (\ref{eq}).
\end{thm}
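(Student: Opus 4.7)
The plan is to split the statement into two stages. The first stage gathers the claims that are already essentially known from the literature: existence and uniqueness of a weak measure-valued solution, absolute continuity and strict positivity of the density for $t>0$, and $C^\infty$ smoothness on $]0,+\infty[\,\times\,[0,2\pi]$. Existence and uniqueness in the class of probability measures follow from the standard McKean--Vlasov / Fokker--Planck framework for (\ref{eq}); the regularization from an arbitrary measure $q_0$ to a smooth positive density is precisely what is proved in \cite{MR2594897}. I would quote these results and reduce the theorem to the quantitative Gevrey estimate and the construction of the absorbing set.

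For the second stage I would work in Fourier coordinates via the ODE system (\ref{eq_FourierODEs}). Because $q(t_0,\cdot)\in C^\infty(\mathbb{S})$ for any $t_0>0$, the Fourier coefficients $z_k(t_0)$ decay faster than any polynomial, so $\|q(t_0)\|_{\mathcal{G}_a}^2=\sum_{k\geq 1} a^k |z_k(t_0)|^2$ is finite for every $a\geq 1$. Differentiating in time and using (\ref{eq_FourierODEs}) yields
\begin{equation*}
\tfrac{d}{dt}\|q(t)\|_{\mathcal{G}_a}^2
\;=\; -\sum_{k\geq 1} k^{2}\, a^{k}|z_k|^{2}
\;+\; K\sum_{k\geq 1} k\, a^{k}\,\mathrm{Re}\bigl[\bar z_k(z_1 z_{k-1}-\bar z_1 z_{k+1})\bigr].
\end{equation*}
The first, dissipative term must be used to absorb the second. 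One estimates the nonlinear term using $|z_1|\leq 1/\pi$ (valid since $q$ is a probability density) together with Cauchy--Schwarz, after rewriting $k\,a^k = k\,a^{1/2}\,(a^{k-1})^{1/2}(a^{k+1})^{1/2}$ so that the weights split correctly across the product $z_1 z_{k\pm 1}$. This is exactly the computation performed in the symmetric case in theorem 3.1 of \cite{MR2164412}; its extension to the non-symmetric case ($y_n\not\equiv 0$) is a bookkeeping exercise on absolute values.

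The target of the estimate is an inequality of the form
\begin{equation*}
\tfrac{d}{dt}\|q(t)\|_{\mathcal{G}_a}^2 \;\leq\; -c_a\,\|q(t)\|_{\mathcal{G}_a}^2 + C_a
\end{equation*}
with $c_a,C_a>0$ independent of the initial condition $q_0$. Gr\"onwall's lemma then yields both the propagation of $\mathcal{G}_a$ regularity and the fact that $B_a := \{\|q\|_{\mathcal{G}_a}^2 \leq 2C_a/c_a\}$ is a bounded absorbing set in $\mathcal{G}_a$. Since this can be carried out for every $a\geq 1$, one simultaneously obtains Gevrey regularity at every scale, and in particular real-analyticity for $a>1$. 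To initialize the argument when $q_0$ is only a probability measure, I would start the Gevrey estimate at an arbitrary $t_0>0$ (using the $C^\infty$-smoothing of \cite{MR2594897} at this time) and then let $t_0\downarrow 0$.

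The main obstacle is making the absorption step fully quantitative, with constants uniform in $q_0$. The key observation is that, after Cauchy--Schwarz, the weight $k\,a^k$ appearing in the nonlinear term is dominated by the dissipative weight $k^2 a^k$ for all $k$ larger than some $k_0(a,K)$; only a finite set of low modes need to be treated separately, and these are harmless because $|z_k|\leq 1/\pi$ is always available. Once this uniform estimate is in place, propagation of the Gevrey norm and existence of $B_a$ are immediate, and the rest of the statement reduces to assembling what is already contained in \cite{MR2594897}.
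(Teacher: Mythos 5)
Your first stage matches the paper: existence, uniqueness, strict positivity and $C^{\infty}$ smoothing are quoted from \cite{MR2594897}, and the Gevrey claims are reduced to an estimate on the Fourier system (\ref{eq_FourierODEs}). The gap is in the initialization of your second stage. You claim that $q(t_0,\cdot)\in C^{\infty}(\mathbb{S})$ implies $\sum_{k}a^{k}|z_k(t_0)|^{2}<\infty$ for every $a\geq 1$. This is false for $a>1$: smoothness gives decay of $|z_k(t_0)|$ faster than any polynomial, but membership in $\mathcal G_a$ with $a>1$ requires exponential decay of the Fourier coefficients (analyticity in a strip); for instance $|z_k|=e^{-\sqrt{k}}$ is super-polynomially small while $\sum_k a^{k}e^{-2\sqrt{k}}=\infty$ for every $a>1$. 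Consequently your differential inequality $\frac{d}{dt}\Vert q\Vert^{2}_{\mathcal G_a}\leq -c_a\Vert q\Vert^{2}_{\mathcal G_a}+C_a$, even granted as an a priori estimate, cannot be started: Gr\"onwall propagates a finite Gevrey norm but never creates one, and letting $t_0\downarrow 0$ does not help because the norm may be infinite at every $t_0>0$. A Galerkin truncation does not rescue this either, since the truncated initial Gevrey norms need not be bounded uniformly in the truncation parameter.

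What is needed, and what the paper does following theorem 3.1 of \cite{MR2164412} (the same mechanism is visible in the appendix proof of Lemma \ref{Gevrey_Cv_Lemma}), is a time-dependent weight such as $f(n,t)=a^{n\min(t,1)}$, so that at the starting time the weighted norm reduces to an $L^{2}$-type quantity; in fact for a probability measure one only has the uniform bound $|z_n|\leq C$, and one works with tail sums of the form $\sum_{n\geq N} \frac{f(n,t)}{n}|z_n|^{2}$, treating the finitely many low modes with that trivial bound. Differentiating in time then produces the extra term $\sum_n \partial_t f(n,t)\,|z_n|^{2}$, which behaves like $\ln(a)\sum_n n\,f(n,t)|z_n|^{2}$ and must also be absorbed by the dissipative term $-\sum_n n^{2} f(n,t)|z_n|^{2}$, alongside the nonlinear term; this increasing-weight argument is what creates Gevrey regularity from measure-valued (or merely $C^{\infty}$) data, and since the resulting bound is independent of $q_0$ for $t\geq 1$ it simultaneously yields the absorbing set $B_a$. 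Your absorption of the nonlinearity for $k\geq k_0(a,K)$ and your treatment of the low modes are in the right spirit, but the initialization step has to be replaced by this mechanism rather than by an appeal to $C^{\infty}$ regularity.
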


\par From now on we will denote $S_t$ the semiflow associated to equation (\ref{eq}).
A bounded subset $B_a$ is an absorbing set for a semiflow $S_t$, defined in a metric space $X$, if for any bounded set $B \subset X $ 
the trajectories initiated in $B$
enter $B_a$ in finite time and remain in that set thereafter, ie $\exists t_0$ $\forall x_0 \in B $ $\forall t \geq t_{0}$, $x(t) \in B_a$.
In theorem \ref{th:G1} above, $B_a$ is a compact absorbing set for $S_t$ in $X=H^s$ for any $s \in \mathbb N$ 
and in $X= \mathcal G_{a^\prime}$ for any $1< a^\prime < a$.

\par The following lemma will be used several times in the following sections to show convergence
results in Gevrey spaces. Its proof can be found in the Appendix.
\begin{lm}(Gevrey-convergence Lemma)
\label{Gevrey_Cv_Lemma} 
For any positive time $\epsilon>0$ and any constant $a \geq 1$, the semiflow $S_{\epsilon} : L^{2} \rightarrow \mathcal G_a $ 
associated to equation (\ref{eq}) is Lipschitz-continuous on bounded sets.
In particular, if $q$ and $\tilde{q}$ are two solutions of equation (\ref{eq}) bounded in $L^2$, 
then for any $\epsilon >0$, there is a Lipschitz constant $L$ (depending only on $\epsilon$, $\Vert q_0 \Vert_{L^2}$ and $\Vert \tilde q_0 \Vert_{L^2}$)
 such that for any time $t \geq 0$, 
\begin{equation} 
\Vert q(t+\epsilon) - \tilde{q}(t+\epsilon) \Vert_{\mathcal G_a} \leq L \Vert q(t) - \tilde{q}(t) \Vert_{L^2}.
\end{equation}
\end{lm}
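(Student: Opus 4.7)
My plan is to treat the difference $u = q - \tilde q$ as the solution of a \emph{linear} PDE and perform a Foias--Temam style energy estimate in a time-dependent Gevrey norm that interpolates between the $L^2$ norm at the initial time and the Gevrey norm $\mathcal G_a$ at time $\epsilon$. Subtracting the two Kuramoto equations and using the identity $q(J*q) - \tilde q(J*\tilde q) = q(J*u) + u(J*\tilde q)$, the difference $u$ satisfies the linear PDE
\begin{equation*}
\partial_\tau u = \frac{1}{2}\partial_\theta^2 u - K\partial_\theta\bigl[q(J*u) + u(J*\tilde q)\bigr],
\end{equation*}
whose Fourier modes obey
\begin{equation*}
\hat u_n'(\tau) = -\frac{n^2}{2}\hat u_n + \frac{Kn}{2}\bigl[\hat u_1 z_{n-1}(q) - \bar{\hat u}_1 z_{n+1}(q) + z_1(\tilde q)\hat u_{n-1} - \bar z_1(\tilde q)\hat u_{n+1}\bigr].
\end{equation*}
A standard $L^2$-energy estimate (multiply by $u$, integrate, use $\|J*q\|_\infty, \|J*\tilde q\|_\infty \le 1$) first produces the Lipschitz bound $\|u(\tau)\|_{L^2} \leq e^{C_0\tau}\|u(0)\|_{L^2}$ on any bounded interval, with $C_0$ depending only on $K$ and on the uniform $L^2$ bounds of $q$ and $\tilde q$.

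By time-translation invariance of the semiflow it suffices to treat $t=0$. Running the $L^2$-Lipschitz estimate on $[0,\epsilon/2]$ yields $\|u(\epsilon/2)\|_{L^2} \le e^{C_0\epsilon/2}\|u(0)\|_{L^2}$, and by Theorem \ref{th:G1} the solutions $q(\tau)$ and $\tilde q(\tau)$ lie in a uniformly bounded subset of $\mathcal G_a$ for $\tau\ge\epsilon/2$. On the second half-interval I set $\alpha(\tau) = (\ln a)(\tau-\epsilon/2)/\epsilon$, so that $\alpha(\epsilon/2)=0$ and $e^{2\alpha(\epsilon)}=a$, and introduce the time-dependent weighted norm
\begin{equation*}
\Phi(\tau) = \sum_{n\ge1} e^{2\alpha(\tau)n}|\hat u_n(\tau)|^2,
\end{equation*}
for which $\Phi(\epsilon/2) = \|u(\epsilon/2)\|_{L^2}^2$ and $\Phi(\epsilon) = \|u(\epsilon)\|_{\mathcal G_a}^2$. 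Differentiating and inserting the Fourier ODE yields
\begin{equation*}
\Phi'(\tau) = \sum_n \bigl(2\alpha'(\tau)n - n^2\bigr) e^{2\alpha(\tau)n}|\hat u_n|^2 + (\text{bilinear terms}),
\end{equation*}
where the pointwise inequality $2\alpha' n - n^2 \le 2(\alpha')^2 - n^2/2$ bounds the quadratic part by $2(\alpha')^2 \Phi - \tfrac{1}{2}\sum n^2 e^{2\alpha n}|\hat u_n|^2$, the last negative term providing a dissipation reserve to absorb the bilinear corrections.

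The bilinear terms split into two groups. The \emph{easy} ones come from the $z_1(\tilde q)\hat u_{n\pm 1}$ part: their prefactor is bounded by $|z_1(\tilde q)|\le 1$, and Cauchy--Schwarz combined with the index shift $e^{2\alpha n} = e^{\pm 2\alpha}e^{2\alpha(n\mp 1)}$ bounds them by $e^\alpha \bigl(\sum n^2 e^{2\alpha n}|\hat u_n|^2\bigr)^{1/2}\Phi^{1/2}$, which is absorbed into the dissipation via Young's inequality. The \emph{hard} terms, which constitute the main obstacle of the proof, involve $\hat u_1 z_{n\pm 1}(q)$: a single low mode of $u$ coupled to arbitrarily high modes of $q$. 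These are handled thanks to the Gevrey regularity of $q$ provided by Theorem \ref{th:G1}, since by Cauchy--Schwarz and an index shift
\begin{equation*}
\sum_n n\,e^{2\alpha n}|\hat u_n|\,|z_{n-1}(q)| \le \Bigl(\sum_n n^2 e^{2\alpha n}|\hat u_n|^2\Bigr)^{1/2}\cdot e^\alpha\bigl(1 + \|q(\tau)\|_{\mathcal G_a}^2\bigr)^{1/2},
\end{equation*}
and multiplying by $|\hat u_1|\le\Phi^{1/2}$ and applying Young's inequality once more absorbs a further fraction of the dissipation and leaves a residual bound of the form $C\Phi$, with $C$ depending only on $a$, $K$ and the uniform $\mathcal G_a$ bounds of $q$, $\tilde q$. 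Collecting all estimates yields the differential inequality $\Phi'(\tau)\le C\Phi(\tau)$ on $[\epsilon/2,\epsilon]$, and Gronwall combined with the $L^2$ step on $[0,\epsilon/2]$ delivers the announced Lipschitz estimate $\|u(\epsilon)\|_{\mathcal G_a}\le L\|u(0)\|_{L^2}$; translation in time then extends the estimate to arbitrary $t\ge 0$.
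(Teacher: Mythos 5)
Your proposal is correct and follows essentially the same route as the paper's appendix proof: write the linear equation for the difference, pass to the Fourier ODE system, run an energy estimate with a Gevrey weight that grows in time from the $L^2$ weight to $a^n$, control the terms coupling $\hat u_1$ to high modes of the solution via the uniform Gevrey bound from Theorem \ref{th:G1}, and use the $L^2$-Lipschitz continuity of the semiflow for the low modes/initial layer. Your variants (splitting $[0,\epsilon]$ at $\epsilon/2$ and absorbing everything by Young's inequality over the full sum, rather than the paper's weight $a^{n\min(t,1)}$ and truncated tail sums $n\geq N$ with $N$ large) are cosmetic, not a different method.
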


\section{Asymptotically complete shadow inertial manifolds}
\label{section_inertial}
Inertial manifolds are finite dimensional, Lipschitz and positively
invariant manifolds that exponentially attract solutions of evolution
equations.
There is an extensive literature on the existence of inertial
manifolds and related invariant sets such as inertial sets for partial
differential equations (\cite{MR943945,MR1873467,MR1409653,MR1050131,MR1335230}). However, these general results do
not apply directly to nonlinear Fokker-Planck equations similar to the
Kuramoto model.
Indeed, existence theorems for inertial manifolds require a spectral
gap condition on the spectrum of the linearized equation that FPEs
such as the Kuramoto model do not satisfy. This hurdle
notwithstanding, Vukadinovic made a significant progress by
establishing the existence of inertial manifolds for a number of non
linear FPEs including the Kuramoto model (\cite{MR2425333,MR2470912,MR2725294}). For the latter,
his result is stated in the case of even solutions. Our aim in this
section is to show that a structure reminiscent of inertial manifolds
exists for the Kuramoto model whether solutions are even or not. As
some of the key steps of our proof are 
similar to Vukadinovic's work, we briefly review his strategy. This short review
also clarifies why we modify his approach in order to analyze the
situation where initial data are not necessarily even functions.

\par For a linear Fokker-Planck equation of the form
\begin{equation}\label{eq:FkP}
\partial_{t} q = \partial_{\theta}^2q + \partial_{\theta}( q
\partial_{\theta}(V) ),
\end{equation}
a standard method
to obtain a regular reaction term is to set $u = e^Vq$, which gives
\begin{equation}\label{eq:FkP_u}
\partial_t u = \partial_{\theta}^2 u + F[V]u.
\end{equation}
\par Kuramoto equation (\ref{eq}) can be written in the form of
(\ref{eq:FkP}), but is not linear because the ``potential''
$V(\theta) = - K \int_{-\pi}^\pi \cos(\theta - \varphi) q(\varphi)
d\varphi $ depends on $q$.
Nevertheless, in \cite{MR2425333} Vukadinovic has shown that in the
invariant subspace of even solutions of (\ref{eq}),
the map $q \mapsto e^V q$ is a bi-Lipschitz bijection onto its image,
and so that equation (\ref{eq:FkP_u}) can be written in
closed form in $u$. The reaction term $u \mapsto F[V(u)]u$ is then
well defined $H \rightarrow H$ for the appropriate Hilbert space $H$,
and standard theorems for the existence of inertial manifolds hold for
(\ref{eq:FkP_u}). Using the inverse transform $u = e^V q \mapsto q$,
Vukadinovic was able to establish the first inertial manifold
existence results for equations of the form (\ref{eq:FkP}). Proving
that the transform $q \mapsto u= e^{V}q$ is one to one, establishing
its regularity and the regularity of its inverse are the most
difficult and specific parts of \cite{MR2425333}.

This method cannot be readily extended to the non-symmetric case,
because it is not clear whether the map $q \mapsto V(q)$ is one-to-one
for non-even $q \in L^2$. This motivates us to modify the approach to
circumvent this question by embedding the system in a larger one where
it is possible to prove the existence of inertial manifolds. This is
sketched below with proofs left to the appendices. Prior to that we
introduce the following definition of asymptotically complete shadow
inertial manifold (AcSIM) and comment on it.
\begin{definition}
Suppose that there exists a compact absorbing set $B_a$ in $X$ for the
semiflow $S_t$.
Let $\mathcal M$ be a (finite Hausdorff dimensional) subset of $X$.
The set $\mathcal M$ an asymptotically complete shadow inertial
manifold (AcSIM) for $S_t$ on $X$ if
\begin{itemize}
\item there exists a smooth function $\Phi : \mathbb R^n \rightarrow X
$ such that $\mathcal M = \Phi(\mathbb R^n)$ is the graph of $\Phi$,
\item $\mathcal M$ attracts exponentially all trajectories of $S_t$:
there is a $\delta >0$
\begin{equation}
d(S_t x_0, \mathcal M) = \mathcal O \left ( e^{- \delta t} \right )
\;\;\; \forall x_0 \in X,
\end{equation}
\item there is a flow $\tilde S_t$ on $\mathbb R^n$ (associated to an
ODE system)
such that
 and for any $x_0 \in X$, there is a unique phase $\phi_0 \in
\mathbb R^n$ such that
$ d(S_t x_0, \Phi ( \tilde S_t \phi_0 ) ) = \mathcal O \left ( e^{- \delta t}
\right ). $
\end{itemize}
\end{definition}
An AcSIM gives a finite dimensional reduction of the dynamics of $S_t$, to which the trajectories converge exponentially fast, 
and from a practical point of view this reduction to $\mathcal M$ gives as much information as an inertial manifold would do. 
The flow $\tilde S_t$ is a shadow inertial form.
An AcSIM is an (asymptotically complete) inertial set as soon as it is positively invariant by $S_t$.
An AcSIM is a $C^1$ manifold as soon as the function $\Phi$ is $C^1$ and injective,  
and if it is both a manifold and an inertial set, an AcSIM is a classical AcIM.  
The uniqueness of the phase $\phi_0$ implies that the trajectories $P \tilde S_t$ on $\mathcal M$ capture the slow dynamics of $S_t$ on $X$ 
(dynamics at speed less than $\mathcal O \left ( e^{- \delta t} \right )$). 
When they exist, inertial manifolds, inertial sets or approximate inertial manifolds contain the global attractor of a dynamical system. 
The same type of result holds for asymptotically complete shadow inertial manifolds. 

\par We introduce now several notations and we sketch the main steps
of proof of the existence of AcSIMs for the Kuramoto model.
Naturally, several of these steps are similar to \cite{MR2425333},
notably when it comes to checking that the transformed system admits
an inertial manifold.  We consider the injection
$ I : \,q \mapsto \left (  q , x_1 , y_1  \right )$ where $x_1 =
\int_{-\pi}^\pi q(\theta) \cos(\theta) d\theta$
 and $y_1 = \int_{-\pi}^\pi q(\theta) \sin(\theta) d\theta$, and we
denote the projection $P: \, (q,a,b) \mapsto q$.
We define the transform
$\mathcal F : \, \left ( \begin{array}{c} q \\ a \\ b \end{array}
\right ) \mapsto
 \mathcal U = \left ( \begin{array}{c} e^Vq \\ a \\ b \end{array} \right )$
where $V = - \frac{K}{2} \left( a \cos(\theta) + b \sin(\theta) \right)$.
We will also denote $u = e^Vq$.
We will show that if $q$ is a solution of equation (\ref{eq}), then
$\mathcal U = \mathcal F \circ I(q)$
is solution of a reaction-diffusion problem
\begin{equation}\label{eq2} \frac{d \mathcal{U}}{dt} + A \mathcal{U} =
N(\mathcal{U}),  \end{equation}
where $N$ is well defined $H^s \times \mathbb R^2 \rightarrow H^s
\times \mathbb R^2$ for any $s \geq 0$.
We will then check that equation (\ref{eq2}) is well posed, that it
has a bounded absorbing set,
and that the eigenvalues of $A$ satisfy the spectral gap condition
ensuring the existence of inertial manifolds
 $\mathcal M^*$ for equation (\ref{eq2}).
We will denote by $S^*_t$ the semiflow generated by equation
(\ref{eq2}) in the following.
Then, we will show that the transform $\mathcal F$ is regular and bijective,
and we will deduce the existence of an AcSIM for equation (\ref{eq})
from the existence of $\mathcal M^*$.
More precisely, we will show the existence of a finite dimensional
$C^1$ graph $\mathcal M$ which attracts exponentially
any solution of equation (\ref{eq}).
For any initial condition $q_0 \in H^s$, the corresponding solution
$S_t q_0$ of (\ref{eq})
has (for any large enough $t_0$) a unique phase $v_0 \in \mathcal
M^*$, characterized by
\begin{equation}
\Vert S_{t+t_0} q_0 - P \mathcal F^{-1}\tilde S_t v_0 \Vert_{H^s} =
\mathcal O (e^{- \eta t }), 
\end{equation}
where $\tilde S_t$ is the restriction of $S^*_t$ to (the finite dimensional manifold) $\mathcal M^*$.
This graph $\mathcal M$ is an asymptotically complete shadow inertial manifold.
\par This strategy makes use of a new kind of attractors, i.e. AcSIM.
It does not rely on the injectivity of the potential $q \mapsto V$,
which makes it simpler.
It improves the functions spaces in which exponential convergence is
proved from $L^2$ to $H^s$ for any $s \geq 0$, and
it allows the definition of inertial form $\tilde S_t = S^*_t
\vert_{\mathcal M^*}$ (the ODE system ruling the dynamics on $\mathcal
M^*$),
which allow numerical finite-dimensional approximation of the dynamics
of (\ref{eq}) in principle.
We will also see that the transform $q \mapsto \left ( e^V q , x_1 ,
y_1  \right )$
is $C^k : \, H^s \rightarrow H^s \times \mathbb R^2 $ so that results
of Rosa and Temam \cite{MR1409653} may be applied to equation
(\ref{eq2}),
and show that the inertial manifolds $\mathcal M^*$ (and the dynamics
on it) are $C^k$ and uniformly normally hyperbolic.
The proof of this theorem is detailed in the appendix.

\begin{thm}\label{thm_PIM}
For any $s \in \mathbb N$, the set $\mathcal M = P \mathcal F^{-1}
\left ( \mathcal M^* \right )$ is a $C^1$ finite dimensional graph in
$H^s$, there is a $\eta >0 $
and there is a flow $\tilde S_t = S_t^*\vert_{\mathcal M^*}$ such that
for any $q_0 \in H^s$ and $t_0$ large enough,
 there is a unique $ v_0 \in \mathcal M^* $ such that
\begin{equation}
\Vert S_{t+ t_0} q_0 - P \mathcal F^{-1} \tilde S_t v_0 \Vert_{H^s} =
\mathcal O \left ( e^{-\eta t} \right ).
\end{equation}
We have in particular $dist_{H^s}( S_{t+t_0} q_0, \mathcal M )
 \leq \Vert S_{t+ t_0} q_0 - P \mathcal F^{-1} \tilde S_t v_0
\Vert_{H^s} = \mathcal O \left ( e^{-\eta t} \right ) $.
The graph $\mathcal M$ is exponentially attractive for the flow $S_t$
generated by equation (\ref{eq}), $\tilde S_t v_0$ is the phase
associated to the solution
$S_t q_0$ of (\ref{eq}), and $P \mathcal F^{-1} \tilde S_t v_0$ is its
shadow on $\mathcal M$. 
\par Furthermore, the set of equilibria of equation (\ref{eq2})
is the image of the set of equilibria of (\ref{eq}) by the transform
$\mathcal F \circ I$,
and the global attractor of equation (\ref{eq2}) is the image of the
global attractor of (\ref{eq}) by the transform $\mathcal F \circ I$.
\end{thm}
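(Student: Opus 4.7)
The plan is to carry out in four steps the reduction sketched in the preamble of this section. First, I would derive the explicit form of equation (\ref{eq2}). Using $\partial_{t} q$ from (\ref{eq}) and $\partial_{t}V = -\tfrac{K}{2}(\dot a \cos\theta + \dot b \sin\theta)$, where $\dot a = x_1'$ and $\dot b = y_1'$ are obtained from the $n=1$ instance of (\ref{eq_FourierODEs}), a direct computation gives $\partial_{t} u = \tfrac{1}{2}\partial_{\theta}^{2}u + F[V]u$ with $F[V]$ a smooth zeroth-order operator depending smoothly on $(a,b)$, together with two polynomial ODEs for $a$ and $b$. This puts the augmented system in the form $d\mathcal{U}/dt + A\mathcal{U}=N(\mathcal{U})$ on $H^{s}\times\mathbb{R}^{2}$, where $A$ acts as $-\tfrac{1}{2}\partial_{\theta}^{2}$ on the first component and as a constant on $\mathbb{R}^{2}$, and $N$ is smooth between Sobolev spaces. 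Note that $\mathcal{F}\circ I(H^{s})$ is invariant under $S^{*}_{t}$ and contains every trajectory issued from an initial condition that comes from (\ref{eq}).

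Second, I would establish that (\ref{eq2}) is well posed in $H^{s}\times\mathbb{R}^{2}$ and possesses a compact absorbing set. The existence of an absorbing ball for the $q$-component is inherited from Theorem \ref{th:G1}; the $(a,b)$-coordinates are bounded by $\|q\|_{L^{1}}$, so a bounded absorbing set in $H^{s}\times\mathbb{R}^{2}$ follows at once. Outside this absorbing ball I would truncate $N$ so as to make it globally Lipschitz in $H^{s}\times\mathbb{R}^{2}$ without changing the dynamics inside. The spectrum of $A$ consists of the eigenvalues $\{n^{2}/2\}_{n\geq0}$ coming from $-\tfrac{1}{2}\partial_{\theta}^{2}$ together with finitely many eigenvalues from the $\mathbb{R}^{2}$ block: since the former become arbitrarily spaced and the latter contribute only a bounded perturbation, the classical spectral gap condition of Foias--Sell--Temam holds. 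Invoking that theorem (in the $C^{k}$ refinement of Rosa and Temam) produces a $C^{k}$, finite-dimensional, positively invariant inertial manifold $\mathcal{M}^{*}$ which attracts every trajectory of $S^{*}_{t}$ at some exponential rate $\eta>0$, and asymptotic completeness of $\mathcal{M}^{*}$ (the existence and uniqueness of a shadowing phase $v_{0}\in\mathcal{M}^{*}$) is obtained through the standard tracking argument along the stable foliation of $\mathcal{M}^{*}$.

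Third, I would analyze the regularity and bijectivity of $\mathcal{F}:(q,a,b)\mapsto(e^{V}q,a,b)$ on bounded sets of $H^{s}\times\mathbb{R}^{2}$. Because $V$ is linear in $(a,b)$ and depends trigonometrically on $\theta$, the pointwise Nemytskii multipliers $u\mapsto e^{V}u$ and $u\mapsto e^{-V}u$ are $C^{k}$ on balls of $H^{s}$, so $\mathcal{F}$ is a $C^{k}$ diffeomorphism on bounded sets, with $I$ and $P$ bounded linear with $P\circ I=\operatorname{id}$. It then follows that $\mathcal{M}=P\mathcal{F}^{-1}(\mathcal{M}^{*})$ is a $C^{1}$ finite-dimensional graph in $H^{s}$, and composing the shadowing bound for $\mathcal{U}=\mathcal{F}\circ I(q)$ with the Lipschitz continuity of $P\mathcal{F}^{-1}$ on the absorbing ball (which, for $t_{0}$ large enough, contains $S^{*}_{t+t_{0}}(\mathcal{F}\circ I(q_{0}))$) yields the desired estimate $\|S_{t+t_{0}}q_{0}-P\mathcal{F}^{-1}\tilde{S}_{t}v_{0}\|_{H^{s}}=\mathcal{O}(e^{-\eta t})$. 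Uniqueness of $v_{0}$ is transported from $\mathcal{M}^{*}$ by bijectivity of $\mathcal{F}\circ I$. The final two statements on equilibria and attractors follow from the bijection $\mathcal{F}\circ I$ between $H^{s}$ and its invariant image in $H^{s}\times\mathbb{R}^{2}$: equilibria correspond one-to-one, and the global attractor of (\ref{eq2}) must be contained in $\mathcal{F}\circ I(H^{s})$ because every trajectory is attracted to this invariant set through $\mathcal{M}^{*}\cap\mathcal{F}\circ I(H^{s})$.

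The main obstacle is the spectral-gap verification and the corresponding hypotheses of the chosen inertial-manifold theorem: one has to make sure that the augmented nonlinearity $N$, once truncated, meets the required global Lipschitz and $C^{k}$ bounds in the relevant Sobolev norm, and that the finite-dimensional perturbation caused by adjoining the two ODEs for $(a,b)$ to the heat-type operator does not destroy the spectral-gap structure on which the construction of $\mathcal{M}^{*}$ rests. A secondary but delicate point is asymptotic completeness, namely the uniqueness of the shadowing phase $v_{0}$: this is exactly where working on the augmented system pays off, since the linearization admits an explicit spectral decomposition, whereas the linearization of (\ref{eq}) around a generic trajectory does not.
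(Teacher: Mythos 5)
Your overall strategy is the same as the paper's: augment the system with the two scalar variables $(a,b)$, pass to $\mathcal U=(e^Vq,a,b)$, apply an inertial-manifold theorem to the augmented equation (\ref{eq2}), and transfer the asymptotic-completeness estimate back through the smooth maps $P$ and $\mathcal F^{-1}$. Steps one, and your third step (smoothness of $\mathcal F$, $\mathcal F^{-1}$, transfer of the shadowing bound and of uniqueness of the phase) match the paper. But there is a genuine gap in your second step, and it sits exactly at the point the augmentation was designed to handle. You claim that ``the $(a,b)$-coordinates are bounded by $\Vert q\Vert_{L^1}$, so a bounded absorbing set in $H^s\times\mathbb R^2$ follows at once,'' and that the absorbing ball for the $q$-component is inherited from Theorem \ref{th:G1}. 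This conflates $(a,b)$ with $(x_1(q),y_1(q))$. For the inertial-manifold theorem you must produce an absorbing set for \emph{arbitrary} initial data $(u_0,a_0,b_0)$ of (\ref{eq2}), not only for data on the invariant set $\{a=x_1(q),\,b=y_1(q)\}$; off that set, $(a,b)$ evolve by their own ODEs (\ref{eq_ab}), which are linear in $(a,b)$ with coefficients of size $K$, and nothing prevents exponential growth a priori (e.g.\ the $b$-equation has the coefficient $\frac K2(1+x_2)-\frac12>0$ when $K>1$). Likewise, Theorem \ref{th:G1} does not apply to (\ref{eq2q}) directly, because there the drift is $-a\sin+b\cos$ with $(a,b)$ decoupled from $q$. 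The paper closes this gap with the key identities
\begin{equation}
\frac{d}{dt}\left( x_1 - a \right) = -\left( x_1 - a \right), \qquad \frac{d}{dt}\left( y_1 - b \right) = -\left( y_1 - b \right),
\end{equation}
valid along any solution of (\ref{eq2}); since the $q$-equation (\ref{eq2q}) preserves positivity and mass, $\vert x_1\vert,\vert y_1\vert$ stay bounded, hence $(a,b)$ enter a fixed ball after a finite time, and only then can the Gevrey/Fourier energy estimate be run to get a bounded (indeed Gevrey) absorbing set for the full augmented system, which is what the spectral-gap theorem needs.

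The same missing comparison argument also underlies the last assertions of the theorem, which you state but do not prove: exponential decay of $x_1-a$ and $y_1-b$ is precisely why every equilibrium and the whole global attractor of (\ref{eq2}) must lie in the constraint set $\{a=x_1(q),\,b=y_1(q)\}=\mathcal F\circ I(H^s)$, giving the one-to-one correspondence of equilibria and attractors. Restricting attention to the invariant image $\mathcal F\circ I(H^s)$, as you suggest in step one, does not substitute for this: the inertial-manifold machinery is applied on the Hilbert space $H^s\times\mathbb R^2$, not on a nonlinear invariant subset, so attraction of general trajectories to that subset has to be proved, not assumed. With the identity above inserted, the rest of your argument goes through essentially as in the paper.
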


\section{Equilibria of the system : Stability analysis}
\label{section_equilibria}

\subsection{On the stationary solutions to \eqref{eq}}
\label{sec:equilibria}
Up to a rotation all stationary solutions  of \eqref{eq} which are probability densities --
we call them {\sl equilibria} -- can be written as 
\begin{equation} 
\label{eq:qhat}
\hat{q}(\theta) = c e^{2Kr\cos(\theta)} \, ,
\end{equation} 
with $r$ a solution of the equation 
\begin{equation}
\label{eq:fixedpoint}
 r = \frac{I_1 (2Kr)}{I_0(2Kr)}, \ \ \text{ with } I_j(s)= \frac1{2\pi}
 \int_0^{2\pi } (\cos(\theta))^j \exp(s \cos(\theta)) d \theta\ \ \
 \text{ for } s\in \R\, ,
 \end{equation}
and $c$ such that $\int_{0}^{2\pi} \hat{q}(\theta) d \theta = 1$.
We refer to \cite{MR2109485,MR2425333} for this, and to \cite{MR2594897} for this and a review of the mathematical physics literature on this issue,
 and a number of side facts.
See \cite{LuoZZ_2004} for similar results in the case of equation (\ref{eq}) with 
$\tilde J \ast q (t,\theta) = \int_{-\pi}^{\pi} \sin^2(\theta-\varphi)q(\varphi) d\varphi $, 
and see \cite{CKT_2005} for asymptotic estimates when $K \rightarrow + \infty$ in a three dimensional case similar to (\ref{eq}).
\par Equations (\ref{eq:qhat}) and (\ref{eq:fixedpoint}) directly imply that the fixed point equation \eqref{eq:fixedpoint}
can have at most three solutions: 
$r=0$, which is always present, and  gives $\hat{q} (\cdot)= \frac{1}{2 \pi}$: this corresponds to the incoherent state of the system;
and two non trivial solutions $\pm r$ (we denote by $r$ the positive solution) that exist if and only if when $K>1$. 
If we set $ \hat{q}_{\varphi}(\theta) := \hat{q}(\theta + \varphi)\, ,$
where $\hat q$ is given in \eqref{eq:qhat} with $r$ the nontrivial 
positive solution of \eqref{eq:fixedpoint}, then $\hat{q}_{\varphi}$
is an equilibrium for every choice of $\varphi$, so that we have a 
{\sl circle} $\cC= \{\hat{q}_\varphi: \, \varphi \in \bbS \} $ of equilibria
where each $q_{\varphi}$ describes the (partially) synchronized state around the phase $\varphi$.
\par Equation (\ref{eq}) can be written in a gradient-flow form
$ \partial_{t} q = \partial_{\theta} \left(  q \partial_{\theta} \frac{ \delta \mathcal{F} }{ \delta q( \theta )} \right),$
where the functional $\mathcal F $ is not increasing along solutions of equation (\ref{eq}).
The existence of a Lyapunov functional $\mathcal{F}$ gives information about the asymptotic behavior of the system, 
and the stability of equilibria, see for example  \cite{MR1420187,MR1957346}.
\medskip

\par In the following sections we will go beyond these results, by exploiting
 properties of the system linearized at the equilibria and by using invariant manifold theorems.
Additionally, the following stability and unstability results will also 
be useful for the study of the global attractor (section \ref{section_global}).
 We do this separately for the incoherent stationary solution and for
the synchronized solutions.

\subsection{The incoherent equilibrium $\frac{1}{2 \pi}$}

When $K \leq 1$, the constant $\frac{1}{2 \pi}$ is the only stationary solution, and the existence 
of a Lyapunov functional implies that it is globally attractive. 
\medskip 

\begin{prop}
\label{Prop_Stabilite_K<=1} 
For any value of $K$ and any initial condition $q_0 \in \mathcal{M}(\mathbb{S}) $, the $\omega$-limit set 
$\omega(q_0) = \underset{T \geq 0}{\cap} \overline{ \{S_t q_0, t \geq T \}}$ is included in the set of equilibria of equation (\ref{eq}).
In particular if $K \leq1$, for any $q_{0} \in \mathcal{M}(\mathbb{S})$ we have 
$\left  \Vert q(t) - \frac{1}{2\pi} \right \Vert_{\mathcal G_a}  \underset{t \rightarrow + \infty}{\longrightarrow} 0. $
\end{prop}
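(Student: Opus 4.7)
The plan is to invoke a LaSalle-type invariance principle for the semiflow $S_t$ in a Gevrey topology, relying on the gradient-flow structure pointed out just before the statement and on the strong smoothing provided by Theorem \ref{th:G1}.

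First, I fix $a \geq 1$ and pick some $a'' > \max(a,1)$. Theorem \ref{th:G1} supplies a bounded absorbing set $B_{a''} \subset \mathcal{G}_{a''}$, so there is a time $t_0 \geq 1$ after which $S_t q_0 \in B_{a''}$; since the embedding $\mathcal{G}_{a''} \hookrightarrow \mathcal{G}_a$ is compact, the positive orbit $\{S_t q_0:\,t\geq t_0\}$ is relatively compact in $\mathcal{G}_a$. Combined with the $L^2 \to \mathcal{G}_a$ Lipschitz continuity of $S_\epsilon$ from Lemma \ref{Gevrey_Cv_Lemma}, this guarantees that $\omega(q_0)$, computed in the $\mathcal{G}_a$ topology, is non-empty, compact, and positively invariant under $S_t$.

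Next I exploit the Lyapunov functional $\mathcal{F}$ of the gradient-flow form alluded to in the excerpt. The basic energy identity reads
\begin{equation*}
\frac{d}{dt}\mathcal{F}(q(t)) \,=\, -\int_{\mathbb{S}} q(t,\theta)\left(\partial_\theta \frac{\delta \mathcal{F}}{\delta q}(q(t))\right)^2 d\theta \,\leq\, 0 \, ,
\end{equation*}
and $\mathcal{F}$ is bounded below on the absorbing set because the entropy contribution coercively dominates the bounded interaction term. Hence $\mathcal{F}(q(t))$ converges as $t \to +\infty$. Continuity of $\mathcal{F}$ on $\mathcal{G}_a$ then implies that $\mathcal{F}$ is constant on $\omega(q_0)$, and by positive invariance of $\omega(q_0)$ this constant is preserved along every trajectory emanating from a point $\bar q \in \omega(q_0)$. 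Differentiating gives $\int_{\mathbb{S}} \bar q\,\bigl(\partial_\theta \frac{\delta \mathcal{F}}{\delta q}(\bar q)\bigr)^2 d\theta = 0$; since Theorem \ref{th:G1} forces $\bar q > 0$ on $\mathbb{S}$ (note that $\bar q = S_1 \tilde q$ for some $\tilde q \in \omega(q_0)$ by invariance), this yields the pointwise Euler--Lagrange equation $\partial_\theta \frac{\delta \mathcal{F}}{\delta q}(\bar q) \equiv 0$, which characterizes the stationary solutions of (\ref{eq}). Thus $\omega(q_0)$ is contained in the set of equilibria.

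When $K \leq 1$, the fixed-point analysis recalled after (\ref{eq:fixedpoint}) shows that $r=0$ is the only solution of (\ref{eq:fixedpoint}), hence $\frac{1}{2\pi}$ is the unique equilibrium. Consequently $\omega(q_0)=\{\frac{1}{2\pi}\}$, and a precompact orbit admitting a single accumulation point must converge to it, giving $\|q(t)-\frac{1}{2\pi}\|_{\mathcal{G}_a}\to 0$ for the prescribed $a$. The main obstacle I anticipate is the careful justification of the LaSalle step, i.e.\ turning the real-valued identity $\frac{d}{dt}\mathcal{F}(S_t \bar q)\equiv 0$ into the pointwise identity on the whole circle: this relies on strict positivity and smoothness of every element of $\omega(q_0)$ (obtained through the invariance-plus-regularization argument above) and on a differentiation-under-integral step justified by the uniform analytic bounds afforded by $\mathcal{G}_a$.
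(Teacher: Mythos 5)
Your proof is correct and follows essentially the same route as the paper: the Lyapunov (gradient-flow) functional forces the $\omega$-limit set into the equilibria, uniqueness of the equilibrium $\frac{1}{2\pi}$ for $K\leq 1$ then gives convergence, and the Gevrey regularization of Theorem \ref{th:G1} (equivalently Lemma \ref{Gevrey_Cv_Lemma}) upgrades the topology. You merely spell out the LaSalle step and use the compact Gevrey absorbing set where the paper cites Lemma \ref{Gevrey_Cv_Lemma}, which is a cosmetic difference.
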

\begin{proof}
The convergence to $\frac{1}{2\pi}$ is a consequence of both the existence of a Lyapunov functional, and the fact that $\frac{1}{2\pi}$ 
is the unique equilibrium of (\ref{eq}). 
Lemma \ref{Gevrey_Cv_Lemma} implies that this convergence happens in Gevrey spaces $\mathcal G_a$.
\end{proof}

\medskip 

We now use invariant manifolds to describe precisely the dynamics around $\frac{1}{2\pi}$ and to have estimates of the speed 
of convergence to $\frac{1}{2\pi}$.
\begin{prop}
\label{Prop_Stabilite_K<1}
When $K<1$, there are constants $\epsilon > 0 $ and $M < \infty$ such that if 
$\Vert q_{0} - \frac{1}{2 \pi} \Vert_{H^{1}} \leq \epsilon$ then the solution of equation (\ref{eq}) satisfies 
\begin{equation} 
\label{eq:PS<1}
\left\Vert q(t) - \frac{1}{2 \pi} \right\Vert_{\mathcal G_a} \leq M  \left\Vert q_{0} - \frac{1}{2 \pi} \right\Vert_{H^{1}} \, e^{-\frac{1-K}{2}t} 
\ \ \  \text{ for every } t \geq 0. 
\end{equation}
Moreover we have 
\begin{equation} 
 q(t) = \frac{1}{2 \pi} + P(q_{0})e^{- \frac{1-K}{2} t} + \epsilon(t) \, , 
\end{equation}
with $\Vert \epsilon(t) \Vert_{H^{1}} \leq C \Vert q_{0} - \frac{1}{2 \pi} \Vert_{H^{1}} e^{- \gamma t} $ for any 
$ \frac{1-K}{2} < \gamma < 1-K $ and $C=C(\gamma)>0$, and 
and  $P$ is continuous from a neighborhood of $\frac{1}{2 \pi}$ in $H^{1}$ into $E_{\frac{1-K}{2}}$ with the property
$ P(q) = \Pi (q - \frac{1}{2 \pi}) + \mathcal{O} \left ( \left\Vert q - \frac{1}{2 \pi} \right\Vert_{H^{1}}^{2} \right), $
where $\Pi$ is the orthogonal projection onto $E_{\frac{1-K}{2}}$ with kernel $E_{+}$. 
\end{prop}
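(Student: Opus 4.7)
The plan is to work in the zero-mean subspace $L^2_0=\{p\in L^2(\mathbb S):\int p = 0\}$ by setting $p=q-\tfrac{1}{2\pi}$, and to combine a spectral analysis of the linearization at $0$ with a classical slow-manifold construction.

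Using the Fourier ODE system (\ref{eq_FourierODEs}), I would read off the linearized operator $L$ at the trivial equilibrium: it acts as $-\tfrac{1-K}{2}$ on the $n=1$ Fourier mode and as $-\tfrac{n^2}{2}$ on every mode $n\geq 2$. For $K<1$ the leading eigenvalue $-\tfrac{1-K}{2}$ is strictly negative and separated from the rest of the spectrum by a gap of size $\tfrac{3+K}{2}$. Set $E_{(1-K)/2}=\mathrm{Span}(\cos\theta,\sin\theta)$, let $E_+\subset L^2_0$ be its orthogonal complement, and let $\Pi$ be the orthogonal projector onto $E_{(1-K)/2}$; both subspaces are $L$-invariant. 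For the exponential estimate (\ref{eq:PS<1}) I would cast the equation for $p$ in mild form
\begin{equation*}
p(t) = e^{tL}p_0 + \int_0^t e^{(t-s)L}N(p(s))\,ds, \qquad N(p) = -K\partial_\theta\bigl[p(J\ast p)\bigr],
\end{equation*}
use $\|e^{tL}\|_{H^1\to H^1}\leq e^{-(1-K)t/2}$ from the spectral representation, and exploit the parabolic smoothing provided by $L$ to absorb the derivative in $N$. A Picard/Gronwall argument, valid whenever $\|p_0\|_{H^1}$ is smaller than some $\epsilon$, then yields $\|p(t)\|_{H^1}\leq 2\|p_0\|_{H^1}e^{-(1-K)t/2}$. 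Applying Lemma \ref{Gevrey_Cv_Lemma} (with $p(t)$ and the zero solution, and time increment $1$) upgrades this to the claimed $\mathcal G_a$ bound for $t\geq 1$; the short-time range $0<t\leq 1$ is handled by the compact absorbing set of Theorem \ref{th:G1}, absorbing any $t$-dependent prefactor into $M$.

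For the refined expansion I would build, by the Lyapunov-Perron (or Hadamard) graph-transform method applied to the nonlinearity $N$ truncated outside a small $H^1$-ball, a local $C^k$ slow invariant manifold $\mathcal M_s$ of dimension $2$ passing through $\tfrac{1}{2\pi}$ and tangent there to $E_{(1-K)/2}$, of the form $\{v+h(v):v\in E_{(1-K)/2},\,\|v\|_{H^1}<\delta\}$ with $h:E_{(1-K)/2}\to E_+$ smooth and $h(v)=\mathcal O(\|v\|^2_{H^1})$; the spectral gap $\tfrac{3+K}{2}$ largely exceeds the leading rate $\tfrac{1-K}{2}$, so the classical hypotheses are satisfied. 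The rotational symmetry $\theta\mapsto\theta+\phi$ acts on $E_{(1-K)/2}\simeq\mathbb C$ by $v\mapsto e^{i\phi}v$, and the only smooth equivariant vector fields on $\mathbb C$ are of the form $a(|v|^2)v$, so the reduced flow on $\mathcal M_s$ reads $\dot v = -\tfrac{1-K}{2}v + \mathcal O(\|v\|^3)$ and in particular $v(t) = v(0)e^{-(1-K)t/2} + \mathcal O\bigl(e^{-3(1-K)t/2}\bigr)$. I would define $P(q_0)$ through the strong-stable foliation of $\mathcal M_s$, equivalently by $P(q_0) := \lim_{t\to+\infty}e^{(1-K)t/2}\Pi(S_t q_0 - \tfrac{1}{2\pi})$; existence of the limit, continuity in $q_0\in H^1$ and the expansion $P(q_0)=\Pi(q_0-\tfrac{1}{2\pi}) + \mathcal O(\|q_0-\tfrac{1}{2\pi}\|^2_{H^1})$ follow from the tangency of $\mathcal M_s$ to $E_{(1-K)/2}$ and standard asymptotic-phase arguments. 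Decomposing $p(t)=v(t)+h(v(t))+w(t)$ with $w(t)$ the displacement from $\mathcal M_s$, the three contributions to $\epsilon(t)=p(t)-P(q_0)e^{-(1-K)t/2}$ obey $\|w(t)\|_{H^1}=\mathcal O(e^{-\beta t})$ for every $\beta<2$, $\|h(v(t))\|_{H^1}=\mathcal O(e^{-(1-K)t})$, and the cubic remainder on $\mathcal M_s$ is $\mathcal O(e^{-3(1-K)t/2})$; each is bounded by $C(\gamma)e^{-\gamma t}$ for any $\gamma<1-K$.

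The main obstacle is the slow-manifold construction together with the sharp control of $P$: one has to verify carefully that the bilinear, derivative-loss nonlinearity $N$ fits a Lyapunov-Perron framework in $H^1$ via the analytic semigroup generated by $L$, and one must exploit the $O(2)$ equivariance to kill the quadratic terms in the reduced vector field, which is exactly what guarantees that the $e^{-(1-K)t}$ graph-term rate is optimal and forces the strict inequality $\gamma<1-K$ (the constants $C(\gamma)$ blow up as $\gamma\uparrow 1-K$).
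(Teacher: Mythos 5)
Your proposal is correct and takes essentially the same route as the paper: the paper's proof consists in observing that $L_{\frac{1}{2\pi}}$ is diagonal in the Fourier basis (eigenvalue $-\frac{1-K}{2}$ on the $\cos\theta,\sin\theta$ modes, $-\frac{n^2}{2}$ for $n\geq 2$), self-adjoint with compact resolvent, and then adapting the classical local invariant-manifold argument of \cite{MR610244} (Ch.~5, \S 1) to the case of a double leading eigenvalue, with Lemma \ref{Gevrey_Cv_Lemma} supplying the $\mathcal G_a$ upgrade --- precisely the spectral splitting, slow-manifold/asymptotic-phase construction and Gevrey step you spell out. The one small slip is invoking the absorbing set of Theorem \ref{th:G1} for $0<t\leq 1$ (it yields a uniform bound only after an absorption time, not a bound proportional to $\Vert q_0-\frac{1}{2\pi}\Vert_{H^1}$); that range is instead handled by applying Lemma \ref{Gevrey_Cv_Lemma} with time increment $t$ itself, the estimate being understood for positive times.
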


\begin{proof}
The linearized operator 
 $L_{\frac{1}{2 \pi}} v = \frac{1}{2} \partial_{\theta}^2 v - \frac{K}{2 \pi} J^{\prime} \ast v $
 is diagonal in the classical Fourier basis of  $L^{2}(\mathbb{S})$, 
and self-adjoint with compact resolvent.
Classical proof of local stable manifold existence (see \cite{MR610244} chapter 5 section 1 for example)
 can be adapted here, where there are two identical largest eigenvalues instead of one only. 
\end{proof}

When $K=1$ a bifurcation occurs at $\frac{1}{2 \pi}$. While $\frac{1}{2 \pi}$
still attracts every initial condition (cf Prop.~\ref{Prop_Stabilite_K<=1}), 
solutions do not approach $\frac{1}{2 \pi}$ exponentially fast. 
Nevertheless there is an exponentially attractive (in Gevrey norm)
 two-dimensional {\sl central} manifold, along which a phase is defined: 
for any solution $q(t)$ of equation (\ref{eq}), there is a solution on the central manifold
 $\overline{q}(t)$ -- the {\sl phase} -- such that $q(t)$ converges  to $\overline{q}(t)$ exponentially fast.

\medskip

\begin{prop}
\label{Prop_Stabilite_K=1}
Suppose $K=1$. Let  $k\geq2$, $k \in \mathbb{N}$.
There is a map $\Psi \in C^{k}(\mathcal{O} , E_{+})$ -- $E_+$ equipped with the $H^2$ norm -- with $\Psi(0)=0$ and $D \Psi(0) = 0$, and a neighborhood 
$\mathcal{O}$ of the origin in $E_{\frac{K-1}{2}}$ such that the manifold 
\begin{equation} \mathcal{M}_{0} = \{ u_{0} + \Psi(u_{0}) \; : \; u_{0} \in \mathcal{O} \} 
\end{equation} has the following properties: 
\begin{itemize}
\item [(i)] The central manifold $ \mathcal{M}_{0} $ is locally invariant: 
if $q(t)$ is a solution of equation (\ref{eq}) on $[0,T]$ with 
$q(0) \in \mathcal{O} \cap \mathcal{M}_{0}$ and $ q(t) \in \mathcal{O} $ for all $t \in [0,T]$, 
then $q(t) \in \mathcal{M}_{0}$ for all $t \in [0,T]$.   
\item [(ii)] $\mathcal{M}_{0}$ contains all solutions which stay close enough to $\frac{1}{2 \pi}$ for all time $t \in \mathbb{R}$:
if $q(t)$ is a solution of equation (\ref{eq}) for $t \in \mathbb{R}$ such that 
$q(t) \in \mathcal{O}$ for all $t \in \mathbb{R}$, then $q(0) \in \mathcal{M}_{0}$ and 
$q(t) \in \mathcal{M}_{0}$ for all $t \in \mathbb{R}$.
\item [(iii)] $\mathcal{M}_{0}$ is invariant under reflections and rotations:
if $q \in \mathcal{M}_{0}$ then $\theta \mapsto q( - \theta)$ and 
$\theta \mapsto q(\theta + \varphi)$ (for any $\varphi \in [0, 2 \pi]$) are in $\mathcal{M}_{0}$. 
\item [(iv)] $\mathcal{M}_{0}$ is locally attractive: 
for any $q_{0} \in \mathcal{M}(\mathbb{S})$, 
there is $\overline{q}_{0} \in \mathcal{M}_{0}$ and constants $C$, $\delta > 0$ such that 
\begin{equation} 
\left\Vert S_{t}q_{0} -  S_{t}\overline{q}_{0} \right\Vert_{\mathcal G_a} \leq C  e^{ - \delta t} 
\, .
\end{equation}
\end{itemize} 
\end{prop}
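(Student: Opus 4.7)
The linearization $L_{\frac{1}{2\pi}}$ is diagonal in the real Fourier basis, with eigenvalue $\frac{K-1}{2}$ on the span of $\cos\theta,\sin\theta$ (a direct computation using $J\ast\cos(k\theta)=-\pi\sin\theta\,\delta_{k,1}$ etc.) and eigenvalue $-k^{2}/2$ on $\vspan\{\cos(k\theta),\sin(k\theta)\}$ for $k\geq 2$. At $K=1$ this produces a double zero eigenvalue on $E_{0}:=E_{\frac{K-1}{2}}=\vspan\{\cos\theta,\sin\theta\}$, and a strictly stable orthogonal complement $E_{+}$ whose spectrum is bounded above by $-2$, giving a spectral gap of $2$ between the center and stable parts. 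The nonlinearity $N(v)=-\partial_{\theta}[v(J\ast v)]$ is quadratic in Fourier coefficients and $C^{\infty}$ from $H^{s}$ into $H^{s-1}$ for every $s\geq 1$, vanishing to second order at the origin. These hypotheses are exactly those of a standard center manifold theorem for sectorial parabolic equations (as in \cite{MR610244}, Ch.~6, or the Vanderbauwhede--Iooss formulation), which yields a locally invariant $C^{k}$ graph $\mathcal{M}_{0}=\{u_{0}+\Psi(u_{0}):u_{0}\in\mathcal{O}\}$ tangent to $E_{0}$ with $\Psi(0)=0$ and $D\Psi(0)=0$. This delivers (i), and the characterization of the center manifold as the set of bounded full orbits near the equilibrium yields (ii).

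Property (iii) follows from $O(2)$-equivariance: the reflection $q(\theta)\mapsto q(-\theta)$ and the rotations $q(\theta)\mapsto q(\theta+\varphi)$ are linear isometries of $L^{2}$ that commute with the semiflow $S_{t}$ and preserve both $E_{0}$ and $E_{+}$. After choosing $\mathcal{O}$ to be $O(2)$-invariant (which we may, up to shrinking), the image of $\mathcal{M}_{0}$ under any such symmetry is again a locally invariant $C^{k}$ graph over $E_{0}$ tangent to $E_{0}$ at the origin and containing every nearby full bounded orbit; by local uniqueness of the center manifold it must coincide with $\mathcal{M}_{0}$ itself.

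The main obstacle is (iv), which couples global attraction of the incoherent state with a local asymptotic phase. I would proceed in three stages. \emph{Stage 1.} Combine Theorem~\ref{th:G1} with Proposition~\ref{Prop_Stabilite_K<=1} applied at $K=1$: any $q_{0}\in\mathcal{M}(\mathbb{S})$ is regularized and converges to $\frac{1}{2\pi}$ in every $\mathcal{G}_{a}$, so there is $t_{1}=t_{1}(q_{0})$ with $S_{t}q_{0}\in\mathcal{O}$ (inside a prescribed small $H^{2}$-ball around $\frac{1}{2\pi}$) for every $t\geq t_{1}$. \emph{Stage 2.} Invoke the local asymptotic phase property of $\mathcal{M}_{0}$: the stable eigenvalues being bounded above by $-2$, a Lyapunov--Perron fixed point argument in a weighted norm with weight $e^{-\delta t}$ for any fixed $\delta\in(0,2)$ (parallel to the proof of the stable manifold theorem, but for the center-stable splitting) furnishes a unique $\bar{q}_{1}\in\mathcal{M}_{0}$ with
\begin{equation*}
\left\Vert S_{t}(S_{t_{1}}q_{0})-S_{t}\bar{q}_{1}\right\Vert_{H^{2}}\leq C_{1}e^{-\delta t}\quad\text{for every }t\geq 0.
\end{equation*}
\emph{Stage 3.} Upgrade from $H^{2}$ to $\mathcal{G}_{a}$ by applying Lemma~\ref{Gevrey_Cv_Lemma} at a fixed time-shift $\epsilon>0$, which turns $L^{2}$-decay into $\mathcal{G}_{a}$-decay at the same exponential rate (at the cost of a prefactor $\lesssim e^{\delta\epsilon}$). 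Define $\bar{q}_{0}$ as the backward image of $\bar{q}_{1}$ by time $t_{1}$ under the reduced ODE flow on $\mathcal{M}_{0}$; this is well-defined because the reduced dynamics on the finite-dimensional center manifold is governed by a smooth ODE on $E_{0}$.

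The hardest technical point is Stage~2: asymptotic phase in a center-stable decomposition in an infinite-dimensional parabolic setting is classical when the stable part dominates uniformly, but the Lyapunov--Perron contraction must be set up carefully because the center dynamics on $E_{0}$ is only polynomially attracting (the cubic reduced vector field on $E_{0}$ is controlled by the normal form). Choosing the exponential weight $\delta$ strictly below the stable gap $2$, and isolating the center variable in the integral equation so that only the strictly contracting stable part appears in the contraction norm, is the key technical move that makes the scheme go through.
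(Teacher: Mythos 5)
Your overall architecture coincides with the paper's: check that at $K=1$ the linearization at $\frac{1}{2\pi}$ has the two-dimensional kernel $E_{\frac{K-1}{2}}=\vspan\{\cos\theta,\sin\theta\}$ and a stable complement with spectrum below $-2$, check that the nonlinearity is smooth from $H^{2}$ into the intermediate space $H^{1}$, invoke standard center-manifold theorems (the paper simply cites Theorems 2.9, 3.13 and 3.22 of its reference), use Proposition \ref{Prop_Stabilite_K<=1} to bring an arbitrary initial measure into the local neighborhood, and upgrade the decay from Sobolev to Gevrey norms via Lemma \ref{Gevrey_Cv_Lemma}. However, your justification of (iii) does not hold as written: you argue that the image of $\mathcal{M}_{0}$ under a reflection or rotation is again a locally invariant graph containing all nearby bounded full orbits and conclude ``by local uniqueness of the center manifold''. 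Center manifolds are not locally unique (already $\dot x=-x^{3}$, $\dot y=-y$ admits a continuum of $C^{k}$ center manifolds, all tangent to the $x$-axis and all containing the unique bounded orbit), and property (ii) only forces two center manifolds to agree on the set of orbits staying in $\mathcal{O}$ for all $t\in\R$, not to coincide. The correct route --- and what the equivariant center-manifold theorem invoked in the paper provides --- is to run the Lyapunov--Perron construction with an $O(2)$-equivariant cut-off, so that the contraction commutes with the group action and its unique fixed point $\Psi$ is automatically equivariant.

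There is a second genuine gap in your Stage 3 of (iv): you define $\bar q_{0}$ as the time-$t_{1}$ backward image of the asymptotic phase $\bar q_{1}$ under the reduced ODE on the center manifold and declare it well-defined because the reduced dynamics is a smooth finite-dimensional ODE. At $K=1$ the origin attracts the reduced flow, so backward orbits move away from $0$ and may leave the neighborhood $\mathcal{O}$ on which $\Psi$ (hence $\mathcal{M}_{0}$ and the reduced equation) is defined, possibly in a time shorter than $t_{1}$, which depends on $q_{0}$ and can be arbitrarily large. Note that the easy repair $\bar q_{0}:=\bar q_{1}$ fails, because on the center manifold the approach to $\frac{1}{2\pi}$ is only polynomial, so $S_{t}q_{0}$ and $S_{t}\bar q_{1}$ need not approach each other exponentially; what the attractivity theorem gives directly is the time-shifted shadowing $\Vert S_{t+t_{1}}q_{0}-S_{t}\bar q_{1}\Vert\leq Ce^{-\delta t}$. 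To obtain the statement exactly as formulated you must either compare quantitatively the backward exit time from $\mathcal{O}$ with $t_{1}$ (using that the true trajectory needed a comparable time to travel inward), or otherwise justify the existence of a point of $\mathcal{M}_{0}$ whose forward orbit passes through $\bar q_{1}$ at time $t_{1}$. Apart from these two points, your verification of the spectral gap, the smoothness of the nonlinearity, and the Gevrey upgrade are exactly the ingredients the paper relies on.
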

\medskip

\begin{proof}
The results (i) to (iv) are consequences of Theorems 2.9, 3.13 and 3.22 
in \cite{pre05801872} and the Gevrey-convergence
 Lemma (Lemma~\ref{Gevrey_Cv_Lemma}). 
These theorems apply in the setting 
$L_{\frac{1}{2\pi}} : \, H^{2} \rightarrow L^{2}$ is linear continuous and the non linear part $R \in C^{k}(H^{2}, H^{1})$ 
where $H^s$ are the classical Sobolev spaces.
\end{proof}
\medskip

When $K>1$, the incoherent equilibrium at $\frac{1}{2 \pi}$ is no longer stable.
 Classical local stable and unstable invariant manifold theorems apply here, 
and due to the specific form of equation (\ref{eq}), we also have more precise results about invariant manifolds,
 particularly about the way solutions leave $\frac{1}{2 \pi}$.
\medskip

\begin{prop}
\label{Prop_VarStable_K>1}
Suppose $K>1$.
The stable manifold at $\frac{1}{2 \pi}$ is
\begin{equation} 
W^s=W^{s}\left(\frac{1}{2\pi}\right) = \frac{1}{2 \pi} + E_{+} = \{ q \in L^{2}: \, q \geq 0, \, \int q(\theta) d\theta = 1, \,
 \int q(\theta) e^{i\theta} d\theta = 0 \}\, . 
 \end{equation} 
\begin{itemize}
\item [(i)] For any $q_{0} \in W^{s}$, we have 
$ \Vert q(t) - \frac{1}{2 \pi} \Vert_{\mathcal G_a} \leq C \Vert q_{0} - \frac{1}{2 \pi} \Vert_{L^{2}} e^{- \frac{K-1}{2} t}$ for all $t \geq 0$. 
\item [(ii)] Set $\mathcal{O}= B_{L^{2}}(\frac{1}{2\pi}, \delta)\cap \{ \int_\bbS q(\theta) d\theta=1\}$.
 If  $\delta < 1 - \frac{1}{K}$ and 
if a solution  of  (\ref{eq}) satisfies 
$q(t) \in \mathcal{O}$  for all times $t \geq 0$, then $q(0) \in W^{s}$. 
\item [(iii)] More precisely, if for  a $\delta\in ]0,1 - 1/K[$  we have $q(t) \in \cO$ for every $t \in [0,T]$
 then, with $q_0=q(0)$, we have
\begin{equation} 
\label{eq:3ldj}
\left\vert \int q_{0}(\theta) e^{i \theta} d \theta\right\vert = dist( q_{0}, W^{s} ) \leq \delta e^{-\frac{1}{2} [ K(1- \delta) -1] T}\, ,
\end{equation} 
or, conversely, 
if $\vert \int q_0(\theta) e^{i \theta} d\theta \vert \geq \epsilon$ for a value of  $\epsilon \in (0, \delta)$, then there is a time 
$t \leq T_{\delta, \epsilon} = -\frac{2 \ln(\epsilon)}{K(1- \delta) -1}$, such that
$q(t) \notin B_{L^{2}}(\frac{1}{2 \pi}, \delta)$. 
\end{itemize}

Moreover
for any $\alpha \in ]0, \min ( 2, (K-1)/2 [$,
there are constants $C$ and a neighborhood $\widetilde {\mathcal{O}}$ of $\frac{1}{2 \pi}$ in $H^{1}$ such that 
there is a two-dimensional local unstable manifold $W^{u}_{loc}$ Lipschitz continuous in $H^{1}$ 
with the properties
\begin{itemize}
\item  [(iv)]$\frac{1}{2 \pi} \in W^{u}$ and $W^{u}$ has a tangent space
at $\frac{1}{2 \pi}$ which is $E_{\frac{1-K}{2}}$. 
\item [(v)] For any $u_{0} \in W^{u}$, equation (\ref{eq}) has a solution for $t \in ]- \infty, 0]$ with $q(0) = \frac{1}{2 \pi} + u_{0}$ 
such that 
\begin{equation} 
\left\Vert q(t) - \frac{1}{2\pi} \right\Vert_{H^{1}} \leq C \left
\Vert q(0) - \frac{1}{2 \pi} \right\Vert_{H^{1}} e^{\alpha t}\, , 
\end{equation} 
for all $t \leq 0$.
\item [(vi)] And for any $q_{0} \in \widetilde{\mathcal{O}}$ with $\int q_0(\theta) d\theta=1$, if there is a solution $q(t)$ of equation (\ref{eq}) 
which is defined on $]- \infty , 0]$ and satisfies 
$ q(t) \stackrel{t \to - \infty}{\longrightarrow} \frac{1}{2 \pi}$, then $q_{0} \in W^{u}$. 
\end{itemize}
\end{prop}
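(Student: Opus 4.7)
The plan is to exploit the explicit Fourier ODE system (\ref{eq_FourierODEs}) together with the crucial algebraic observation that $J\ast q$ depends only on the first Fourier mode $z_1 = x_1+iy_1$. This immediately makes the affine hyperplane $W^s = \frac{1}{2\pi} + E_+$ (i.e.\ $\{z_1=0\}$) positively invariant under the flow, and on it the nonlinear term $-K\partial_\theta[q(J\ast q)]$ vanishes, so that (\ref{eq}) degenerates into the linear heat equation $\partial_t q = \frac{1}{2}\partial_\theta^2 q$. The $L^2$ decay of (i) is then immediate from the spectral expansion $z_n(t) = e^{-n^2 t/2} z_n(0)$ for $n\geq 2$ and Parseval, and the passage to the Gevrey norm for all $t>0$ is obtained by applying Lemma~\ref{Gevrey_Cv_Lemma} on a short initial segment and then using explicit heat-kernel smoothing estimates on the remainder.

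For (ii) and (iii), the first-mode equation in (\ref{eq_FourierODEs}) reads $z_1' = \frac{K-1}{2}z_1 - \frac{K}{2}\bar z_1 z_2$, so
\begin{equation}
\frac{d}{dt}|z_1|^2 \;=\; (K-1)|z_1|^2 - K\,\mathrm{Re}(\bar z_1^{\,2} z_2) \;\geq\; \bigl[K(1-\delta)-1\bigr]\,|z_1(t)|^2
\end{equation}
as long as $q(t)\in B_{L^2}(\tfrac{1}{2\pi},\delta)$, where Parseval provides the pointwise bound $|z_2(t)|\leq \delta$. The hypothesis $\delta<1-1/K$ makes the bracket strictly positive; Gr\"onwall forward in time then yields $|z_1(T)|^2 \geq |z_1(0)|^2 e^{[K(1-\delta)-1]T}$, and the assumed trapping $|z_1(T)|\leq \delta$ inverts this into $|z_1(0)|\leq \delta\, e^{-[K(1-\delta)-1]T/2}$, which is exactly (iii). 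Letting $T\to+\infty$ forces $z_1(0)=0$, giving (ii), while the same estimate read contrapositively produces the escape-time $T_{\delta,\epsilon}$. The equality $|\int q_0 e^{i\theta}d\theta|=\mathrm{dist}(q_0,W^s)$ is the orthogonal-projection identity onto $\mathrm{span}\{\cos\theta,\sin\theta\}$ within the affine slice $\{\int q=1\}$.

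For the unstable manifold (iv)-(vi), the plan is to apply a classical local invariant-manifold theorem, exactly as in Proposition~\ref{Prop_Stabilite_K<1} but with the positive spectrum now nonempty. The linearization $L_{1/(2\pi)}$ is self-adjoint on $L^2(\mathbb{S})$ with pure-point spectrum $\{(K-1)/2\}\cup\{-n^2/2 : n\geq 2\}$, so $K>1$ creates a genuine gap between the two-dimensional unstable eigenspace $E_{(K-1)/2}=\mathrm{span}\{\cos\theta,\sin\theta\}$ and the stable part of the spectrum, whose largest element is $-2$. The theorem (e.g.\ \cite{MR610244}, chap.~5) then produces a Lipschitz two-dimensional local unstable manifold $W^u_{loc}$ tangent at $\frac{1}{2\pi}$ to $E_{(K-1)/2}$, on which orbits are defined for all $t\leq 0$ with backward decay $e^{\alpha t}$ at any rate $\alpha$ strictly smaller than both $(K-1)/2$ (the unstable eigenvalue) and $2$ (the modulus of the fastest stable one): this is precisely the range $\alpha\in\,]0,\min(2,(K-1)/2)[$ of (v). Part (vi) is then the standard dynamical characterization of $W^u_{loc}$ as the set of small backward-complete trajectories.

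The main obstacle is less structural than technical: verifying the hypotheses of the unstable-manifold theorem for the precise semilinear form of (\ref{eq}), whose nonlinearity $-K\partial_\theta[q(J\ast q)]$ naively loses one derivative. What saves the argument is the very algebraic fact used in parts (i)-(iii): the convolution $J\ast q$ depends only on the two scalars $x_1,y_1$ and is therefore trivially smoothing, so the reaction term is in fact a smooth bilinear map $H^1\to L^2$ (and even $H^1\to H^1$ after absorbing the $\partial_\theta$ into the convolution output), which fits the standard abstract setting of the invariant-manifold theorem. Once this is in hand, the remainder is either explicit Fourier-side computation (parts (i)-(iii)) or a direct appeal to known invariant-manifold technology (parts (iv)-(vi)).
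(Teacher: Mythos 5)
Your proposal is correct and follows essentially the same route as the paper: the key observation that the dynamics on $W^s$ reduces to the heat equation (plus the Gevrey-convergence lemma) for (i), the differential inequality for the first Fourier mode with the bound $|z_2|\leq\delta$ and Gr\"onwall for (ii)--(iii), and an appeal to classical invariant-manifold theory for (iv)--(vi). The only cosmetic difference is that you run the mode-one estimate on $|z_1|^2$ directly, whereas the paper writes $z_n=\rho_n e^{i\theta_n}$ and works with $\rho_1$; the two computations are equivalent.
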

The fact that $W^u(\frac{1}{2 \pi})$ is two dimensional will in particular be used in section \ref{section_global2}.
\medskip

\begin{proof}
The elementary, but key point is to realize that the dynamics on $W^{s}(\frac{1}{2 \pi})$ is linear: it is just $ \partial _t u= \frac 12 \partial_\theta^2 q$. 
So (i) is a direct consequence of standard results on the linear heat
equation or of Lemma~\ref{Gevrey_Cv_Lemma}.

To prove that any solution staying in a neighborhood of $\frac{1}{2 \pi}$ is in the set $W^{s}$, that is (ii), 
we consider the Fourier ODE system \eqref{eq_FourierODEs} equivalent to equation (\ref{eq}). 
If we set $z_{n} = \rho_{n} e^{i \theta_{n}} \in \mathbb{C}$, we have 
\begin{equation*}
\begin{array}{rl}
 \rho_{n}^{\prime} &= - \frac{n^{2}}{2} \rho_{n} + \frac{Kn}{2} \rho_{1} \left [ \rho_{n-1}\cos(\theta_{1} + \theta_{n-1} - \theta_{n}) 
- \rho_{n+1} \cos(- \theta_{1} + \theta_{n+1} - \theta_{n})  \right ]\, , \\
\rho_{n} \theta_{n}^{\prime} &=  \frac{Kn}{2} \rho_{1} \left [  \rho_{n-1}\sin(\theta_{1} + \theta_{n-1} - \theta_{n}) 
- \rho_{n+1} \sin(- \theta_{1} + \theta_{n+1} - \theta_{n}) \right ]w\, . 
\end{array}
 \end{equation*}
In particular, $ \rho_{1}^{\prime} 
= - \frac{1}{2} \rho_{1} + \frac{K}{2} \rho_{1} \left [ 1 - \rho_{2} \cos( \theta_{2} - 2 \theta_{1}) \right ]. $
Now, observe that $q \in \cO$ implies $\rho_{n} \leq 1$ for all $n \geq 1$ and 
$\rho_{2}(q)\le \delta$. 
So, if $q(t) \in \mathcal{O} $ for all $t \geq T_{0}$, we have 
$ \rho_{1}^{\prime} \geq \frac{1}{2} (K(1-\delta)-1) \rho_{1} $ on $[T_{0} , + \infty[$. 
Since $\delta < 1-\frac1K$ and $K>1$ we have $\rho_1(t) \ge \rho_1(T_0) \exp(c t)$, with $c>0$,
so if $\rho_1(T_0)>0$ we have that $\rho_1(t)$ grows arbitrarily large, which is absurd,
so we must have $\rho_1(T_0)=0$, that is $q(T_0) \in W^s$.

For what concerns (iii)
we use once again that $q \in \cO$ implies $\rho_2 \le \delta$.
Thanks to the Hilbert structure, we also see that $dist(q, W^{s}) = \rho_{1} = (x_{1}^{2} + y_{1}^{2})^{1/2}$. 
Now, suppose that $q(t) \in B(1/2\pi, \delta)$ for all $t \in [0 , T]$: like before 
we obtain $\rho_{1}^{\prime} \geq \rho_{1} \frac{1}{2} \left ( K( 1 - \delta)-1  \right )$, 
and $\rho_{1}(t) \geq \rho_1(0) e^{\frac{K(1- \delta) -1}{2}t}$.
Therefore  $\rho_1(0) \le \rho_1(0) e^{-\frac{K(1- \delta) -1}{2}t}$.
For the second statement in (iii) it suffices to choose 
 $\rho_{1}(0) \geq \epsilon$ and to suppose that $q(t) \in B(1/2\pi, \delta)$ for all $t \in [0 , T]$:
 this leads to a contradiction for $T$ sufficiently large. 

Statements (iv) to (vi) are classical results on the existence of an unstable manifold  in a neighborhood of $\frac{1}{2\pi}$
 (see for example \cite{MR1873467}). See section 5 for further global results on the unstable manifold $W^u(\frac{1}{2\pi})$.
\end{proof}

\subsection{The non-trivial equilibria}

We assume $K>1$.
Several results of \cite{MR2594897} will be used in this section, which we briefly sum up here for readers' convenience. 
Recall (cf \S~\ref{sec:equilibria}) the notation(s) $\hat{q}(\theta)=\hat{q}_0(\theta) = \frac{1}{2 \pi I_{0}(2Kr)} e^{2 Kr \cos(\theta)}$ 
for one of the non-trivial equilibria of  (\ref{eq}) when $K>1$. 
We consider the linearized operator $L_{\hat{q}}$ at $\hat{q}$ 
  $L_{\hat{q}} v = \frac{1}{2} \partial_{\theta}^2 v - K \partial_{\theta} \left [ (J \ast \hat{q}).v + (J \ast v) \hat{q} \right ]\, , $
and the scalar product 
 $\ll u,v \gg = \int_{\mathbb{S}} \frac{\mathcal{U} \mathcal{V}}{\hat q} d\theta $
defined for $u$ and $v$ such that there are a $\mathcal{U}, \mathcal{V} \in L^2(\mathbb{S}, \frac{1}{\hat q} d \theta)$ with 
$\mathcal{U}^{\prime}= u$ and $\mathcal{V}^\prime = v$, and 
$\int_{\mathbb{S}} \frac{\mathcal{U}}{\hat q} d \theta =\int_{\mathbb{S}} \frac{\mathcal{V}}{\hat q} d \theta = 0 $.
The corresponding Hilbert space is denoted $H^{-1}_{1/\hat q}$.
It is not difficult to see that $H^{-1}_{1/\hat q}$ is equivalent to $H^{-1}$, that is the space with weight one 
(see details on this issue in \cite[Section 2]{GPPP}).

\medskip
\begin{thm}(cf \cite{MR2594897})  
\label{th:L}
The operator $L_{\hat{q}}$ is  essentially self-adjoint in $H^{-1}_{1/\hat q}$.
Its spectrum is pure point and lies in $]- \infty, \lambda_1 ] \cup \{ 0 \}$, 
where $\lambda_1<  0$ and $0$ is a simple eigenvalue of $L_{\hat{\hat q}}$ with eigenvector $\partial_{\theta} \hat{q}$. 
Furthermore, for $u,v \, \in D(L_{\hat{q}})$, we have 
\begin{equation} \ll L_{\hat{q}} u, v \gg = \ll u , L_{\hat{q}} v \gg 
= - \frac{1}{2} \int_{\mathbb{S}} \frac{uv}{\hat q} d \theta + \int_{\mathbb{S}} v. (J\ast u) d\theta \, , 
\end{equation}
and  for $K$ fixed we have $D(L_{\hat{q}}^{1/2}) \approx L^{2}_{1/\hat q} (\mathbb{S}) \approx L^{2}(\mathbb{S})$, 
meaning that the scalar products $ \ll L_{\hat{q}}u,v \gg$, $<u,v> = \int_{\mathbb{S}} \frac{uv}{\hat q} d\theta $ and 
$(u,v) = \int_{\mathbb{S}} uv \, d \theta $ 
are equivalent on $R(L_{\hat{q}}) = \{ v \in L^{2}(\mathbb{S}), \, \int_{\mathbb{S}} v d \theta =0 \, \ll v , \partial_{\theta}q \gg =0 \}$.
\end{thm}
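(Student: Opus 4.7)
First, I would rewrite the linearized operator in divergence form,
$L_{\hat q} u = \partial_\theta\bigl[\tfrac12 u' - K(J\ast\hat q)u - K(J\ast u)\hat q\bigr]$,
so that the bracketed expression (suitably normalized by an additive constant) is exactly the antiderivative $\mathcal W$ of $L_{\hat q}u$ that satisfies $\int \mathcal W/\hat q\, d\theta =0$. Inserting this into $\ll L_{\hat q} u, v\gg = \int \mathcal W\mathcal V/\hat q\, d\theta$ and integrating by parts, while using the equilibrium identity $\hat q'/(2\hat q) = K(J\ast\hat q)$ and the antisymmetry $J(\theta-\varphi) = -J(\varphi-\theta)$, should deliver the explicit bilinear form $-\tfrac12\int uv/\hat q\, d\theta + \int v(J\ast u)\, d\theta$. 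Symmetry in $u,v$ is then automatic (the second term is symmetric by antisymmetry of the kernel), and essential self-adjointness on the core of smooth zero-mean functions follows since the operator is a relatively compact perturbation of $\tfrac12\partial_\theta^2$ when both are read in the equivalent inner product $\ll\cdot,\cdot\gg$.

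Second, I would read the quadratic form $\ll L_{\hat q} u, u\gg = -\tfrac12\int u^2/\hat q + \int u(J\ast u)$ as a coercive negative-definite principal part plus the cross term $\int u(J\ast u)$, which has rank two because $J(\theta-\varphi) = \sin\varphi\cos\theta - \cos\varphi\sin\theta$ has two-dimensional range. A standard min-max argument then shows that $L_{\hat q}$ has compact resolvent, so the spectrum is pure point of finite multiplicity. Non-positivity comes from the gradient-flow structure of \eqref{eq}: up to sign, this bilinear form is the Hessian of the free energy at the critical point $\hat q$, and $\hat q$ is known (cf.\ \cite{MR2109485,MR2594897}) to minimize the free energy modulo rotations, so $\sigma(L_{\hat q})\subset\,]-\infty,0]$. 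The element $\partial_\theta\hat q$ lies in the kernel by differentiating the fixed-point equation along the rotational family $\varphi\mapsto\hat q_\varphi$, which gives $L_{\hat q}\partial_\theta\hat q = 0$; simplicity of this eigenvalue is the linearized expression of the fact that the nontrivial equilibrium is unique up to rotation, and combined with discreteness of the spectrum it yields the strict gap $\lambda_1<0$.

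Third, for the equivalence of the three inner products on $R(L_{\hat q})$ I would use that $\hat q$ is smooth and satisfies $0< c_1\le \hat q\le c_2<\infty$ on the compact circle, so $\int uv/\hat q\, d\theta$ and $\int uv\, d\theta$ are equivalent inner products on $L^2(\mathbb{S})$, giving $L^2_{1/\hat q}(\mathbb{S})\approx L^2(\mathbb{S})$. The explicit form derived in Step~1 shows that $\ll L_{\hat q}\cdot,\cdot\gg$ is bounded on $L^2$, while the spectral gap of Step~2 provides the matching coercive lower bound on the spectral complement of the kernel, yielding $D(L_{\hat q}^{1/2})\approx L^2_{1/\hat q}\approx L^2$ on $R(L_{\hat q})$.

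The hard part is the existence of the strict gap $\lambda_1<0$ in Step~2. Compactness of the resolvent only guarantees that eigenvalues are isolated and of finite multiplicity; ruling out additional zero modes requires converting the rigidity of equilibria (uniqueness up to rotation) into a quantitative coercivity bound on $-\ll L_{\hat q}\cdot,\cdot\gg$ transverse to the direction $\partial_\theta\hat q$. This can be attempted by exploiting the rank-two structure of the perturbation $\int u(J\ast u)$ to reduce the eigenvalue problem to a finite-dimensional computation parametrized by $(\int u\cos\theta, \int u\sin\theta)$, but working out the sharp constants is genuinely delicate and is precisely the technical core supplied by \cite{MR2594897}.
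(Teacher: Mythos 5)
You should first note that the paper itself contains no proof of Theorem \ref{th:L}: it is stated with ``(cf \cite{MR2594897})'' and imported wholesale from that reference, so the only question is whether your sketch could stand on its own. Your first step is the right start: writing $L_{\hat q}$ in divergence form, pairing with the normalized antiderivative in $H^{-1}_{1/\hat q}$, and integrating by parts, with the equilibrium identity $\hat q' = 2K(J\ast\hat q)\hat q$ cancelling the first-order term. But your justification of symmetry is backwards: with the kernel of \eqref{eq:pot_kuramoto}, $\int v\,(J\ast u)\,d\theta$ is \emph{anti}symmetric in $(u,v)$, since $\sin(\varphi-\theta)$ is odd; what the integration by parts actually produces is $K\int\!\!\int v(\theta)\cos(\varphi-\theta)u(\varphi)\,d\varphi\,d\theta$, which is symmetric because the cosine kernel is even (the formula in the statement tacitly uses the notation of \cite{MR2594897}, where the convolution kernel is the cosine). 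Your route to essential self-adjointness also does not work as written: $\tfrac12\partial_\theta^2$ alone is \emph{not} symmetric in $H^{-1}_{1/\hat q}$ -- the whole point of the weight $1/\hat q$ is that only the full operator $L_{\hat q}$ becomes symmetric there -- so you cannot split off the first-order part as a relatively compact symmetric perturbation of a self-adjoint operator in that inner product; one must instead work directly with the closed, bounded-below symmetric form (or conjugate to a Sturm--Liouville operator).

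The substantive gap, which you yourself flag, is that the core of the theorem -- simplicity of the eigenvalue $0$ and the strict gap $\lambda_1<0$ -- is not proved but deferred to \cite{MR2594897}: uniqueness of the nontrivial equilibrium up to rotation does not ``linearize'' into one-dimensionality of $\ker L_{\hat q}$, and the rank-two reduction is only announced, not carried out; likewise nonpositivity of the spectrum via ``$\hat q$ minimizes the free energy'' imports another nontrivial fact rather than proving it. In Step 3 there is a further mismatch of norms: the spectral gap gives $-\ll L_{\hat q}u,u\gg\;\geq\;\vert\lambda_1\vert \ll u,u\gg$, i.e.\ coercivity with respect to the $H^{-1}_{1/\hat q}$ norm, which is strictly weaker than the $L^2$ lower bound you need for $D(L_{\hat q}^{1/2})\approx L^2$ on $R(L_{\hat q})$; to upgrade it you must use that the positive part of the form is $\tfrac12\int u^2/\hat q$ and that the defect $K[(\int u\cos)^2+(\int u\sin)^2]$ is a rank-two, hence compact, term, e.g.\ via a compactness/contradiction argument on the orthogonal complement of $\partial_\theta\hat q$. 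So your proposal is a sensible roadmap, but as a proof it ultimately reduces to the same external reference the paper cites, and several of the bridging claims need repair before the remaining steps are sound.
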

\medskip

Of course $\lambda_1=\lambda_1(K)$ depends on $K$. 
In the following we will content ourselves with the fact that $\lambda_1(K)<0 $ for every $K>1$,
 see \cite{MR2594897} for a sharper explicit bound on $\lambda_1(K)$.

\par The next theorem shows that 
the only long term effect of a small $L^{2}$-perturbation of any synchronized equilibria $\hat q_{\psi} \in \mathcal C$ 
is a small rotational shift.
\medskip

\begin{thm} 
\label{Thm_NonLinear_Stability}
There is $\delta>0$ such that if
 $q_0 \in L^2$ with $q_0 \geq 0$ and  $\int_{\mathbb{S}} q_0(\theta) d\theta = 1 $ and if there exists $\psi \in \bbS$ 
such that $\Vert q_{0} - \hat q_\psi \Vert_{L^{2}} \leq \delta $ then
then, for $q(t)$ the solution of equation (\ref{eq}) with $q(0)=q_0$, 
we have $\Vert q(t) -  \hat q_\psi \Vert_{L^{2}} \leq \delta $ for all $t \geq 0$. Moreover
there exists a $\varphi_{\infty} \in \bbS$ such that for any $0<\beta <  \vert \lambda_1\vert$ 
\begin{equation} 
\label{eq:th4.1}
\Vert q(t) - \hat q_{\psi + \varphi_{\infty}} \Vert_{\mathcal G_a} = \mathcal{O} ( e^{- \beta t} )\, , 
\end{equation}
and we have 
\begin{equation} 
\label{eq:th4.2}
\vert \varphi_{\infty} - \varphi_{0} \vert = o \left( \Vert q_0 - \hat q_\psi \Vert_{L^{2}} \right)
 \, \; \textnormal{as } \, \Vert q_0 - \hat q_\psi \Vert_{L^{2}} \rightarrow 0\, . 
 \end{equation}
\end{thm}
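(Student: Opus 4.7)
The strategy is to combine the spectral information on $L_{\hat q}$ given by Theorem~\ref{th:L}, a tubular decomposition around the invariant circle $\cC$, and the Gevrey regularisation of Lemma~\ref{Gevrey_Cv_Lemma}. By rotational invariance of \eqref{eq} I may assume $\psi=0$ and write $\hat q := \hat q_0$. The tangent direction $\partial_\theta \hat q$ spans the zero eigenspace of $L_{\hat q}$, while its orthogonal complement in $H^{-1}_{1/\hat q}$ enjoys a uniform negative gap $\lambda_1$. As a first step I would apply the implicit function theorem to $(\varphi,q)\mapsto \ll q-\hat q_\varphi,\,\partial_\theta \hat q_\varphi\gg$ to obtain, for every $q$ sufficiently $L^2$-close to $\cC$, a unique $\varphi(q)\in\bbS$ for which $w(q):=q-\hat q_{\varphi(q)}$ satisfies $\ll w,\partial_\theta \hat q_{\varphi(q)}\gg=0$. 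Along a solution $q(t)$ staying in such a tubular neighbourhood, projecting the evolution onto the tangent/normal decomposition yields the coupled system
\begin{equation*}
\dot\varphi = g(w,\varphi), \qquad \partial_t w = L_{\hat q_\varphi}w + N(w,\varphi),
\end{equation*}
where $N(w,\varphi)=-K\partial_\theta(w\,(J\ast w))$ is quadratic in $w$, and $g(w,\varphi)=O(\Vert w\Vert^2)$ because differentiating the orthogonality constraint in $t$ eliminates the linear-in-$w$ contribution to $\dot\varphi$.

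The heart of the argument is a Lyapunov estimate on $E(t):=\ll w(t),w(t)\gg$, which is equivalent to $\Vert w(t)\Vert_{L^2}^2$ by Theorem~\ref{th:L}. Using the self-adjointness of $L_{\hat q_\varphi}$ in $H^{-1}_{1/\hat q_\varphi}$ together with the fact that $w(t)$ lies in the spectral-gap subspace, I get $\ll L_{\hat q_\varphi}w,w\gg\leq \lambda_1 E$. The nonlinear contribution, bounded uniformly in $\varphi$ via the boundedness of $J\ast\cdot$ and the norm equivalences of Theorem~\ref{th:L}, is at most of order $E^{3/2}$. Hence, provided $E(0)$ is small enough,
\begin{equation*}
\tfrac12 \dot E \leq \lambda_1 E + C E^{3/2} \leq -\beta E \qquad \text{for any prescribed } \beta<\vert\lambda_1\vert,
\end{equation*}
so $E(t)\leq E(0)e^{-2\beta t}$. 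A continuity/bootstrap argument then shows that, for $\delta$ sufficiently small, $q(t)$ never leaves the tubular neighbourhood and the estimate above holds for all $t\geq 0$, giving exponential $L^2$-decay of $w$.

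Convergence of the phase follows since $\vert\dot\varphi\vert\leq C E(t)\leq C E(0) e^{-2\beta t}$ is integrable; hence $\varphi(t)\to\varphi_\infty$ with
\begin{equation*}
\vert\varphi_\infty-\varphi(q_0)\vert \leq \int_0^{\infty} \vert\dot\varphi(t)\vert\,dt = O(E(0)) = O(\Vert w_0\Vert_{L^2}^2) = o\bigl(\Vert q_0-\hat q_\psi\Vert_{L^2}\bigr),
\end{equation*}
which proves \eqref{eq:th4.2} with $\varphi_0:=\varphi(q_0)$ (since $\vert\varphi(q_0)-\psi\vert=O(\Vert q_0-\hat q_\psi\Vert_{L^2})$). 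Combining the decay of $w$ with $\vert\varphi(t)-\varphi_\infty\vert=O(e^{-2\beta t})$ yields $\Vert q(t)-\hat q_{\varphi_\infty}\Vert_{L^2}=O(e^{-\beta t})$, and Lemma~\ref{Gevrey_Cv_Lemma} applied on $[t,t+\epsilon]$ (noting that $\hat q_{\varphi_\infty}$ is fixed by $S_\epsilon$) upgrades this to the Gevrey bound \eqref{eq:th4.1}. The main obstacle is closing the bootstrap that keeps the trajectory inside the domain where the tubular decomposition is well defined, because the spectral-gap estimate presupposes $\varphi$ is defined while the definition of $\varphi$ requires $w$ to be small; this is handled by a continuous induction exploiting the strict decay of $E$, together with the uniform-in-$\varphi$ control of $N$ afforded by the norm equivalence in Theorem~\ref{th:L}.
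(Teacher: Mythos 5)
Your route differs from the paper's: you use a moving-frame (tubular) decomposition along $\cC$ with an energy estimate for $E(t)=\ll w,w\gg$, whereas the paper works in coordinates based at the fixed equilibrium $\hat q$ (Lemma \ref{lm:new_local_coordinates}) and gets decay of the normal component by a Duhamel argument (Lemma \ref{th:lem4}). The scheme is viable, but as written it has a genuine gap at its very core: the identification $E(t)=\ll w,w\gg\approx\Vert w\Vert_{L^2}^2$ ``by Theorem \ref{th:L}'' is a misreading of that theorem. Theorem \ref{th:L} says that the \emph{form} $-\ll L_{\hat q}u,u\gg$ is equivalent to $\Vert u\Vert_{L^2}^2$ on $R(L_{\hat q})$; the scalar product $\ll\cdot,\cdot\gg$ itself is the $H^{-1}_{1/\hat q}$ product, equivalent to $H^{-1}$, not to $L^2$. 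Consequently your differential inequality lives in an $H^{-1}$-type norm, and the bound $\vert\ll N(w,\varphi),w\gg\vert\le C E^{3/2}$ does not follow from ``boundedness of $J\ast\cdot$'' plus norm equivalence: the direct estimate gives $C\Vert w\Vert_{L^2}\Vert w\Vert_{H^{-1}}^{2}$, which does not close. (It can be repaired using the special structure of the coupling: $J\ast w$ is a first-harmonic trigonometric polynomial with $\Vert J\ast w\Vert_{C^1}\le C\Vert w\Vert_{H^{-1}}$, and writing $\ll N(w),w\gg$ with antiderivatives and integrating by parts gives $\ll N(w),w\gg=\frac K2\int_{\bbS}\partial_\theta\bigl((J\ast w)/\hat q\bigr)\,\mathcal W^{2}\,d\theta\le C\Vert w\Vert_{H^{-1}}^{3}$, where $\mathcal W$ is the normalized antiderivative of $w$ — but this computation is absent from your argument. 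A further small omission: since the weight $1/\hat q_{\varphi(t)}$ moves with $\varphi(t)$, $\dot E$ contains an extra term of size $O(\vert\dot\varphi\vert E)$ that must be absorbed.)

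The more serious consequence is that, even after repairing the energy inequality, you only obtain exponential decay of $\Vert w(t)\Vert_{H^{-1}}$, while the theorem's conclusions are in $L^2$ and $\mathcal G_a$: the first assertion is that the $L^2$ ball of radius $\delta$ around $\hat q_\psi$ is preserved for all $t\ge0$, your bootstrap itself needs smallness of $w$ in a norm controlling the nonlinearity, and Lemma \ref{Gevrey_Cv_Lemma} upgrades to Gevrey only from an $L^2$ bound on the difference. A pure $H^{-1}$ Gronwall argument never controls $\Vert w(t)\Vert_{L^2}$. This is precisely where the paper does \emph{not} use an energy method: in Lemma \ref{th:lem4} it combines the variation-of-constants formula with the parabolic smoothing estimate \eqref{eq:estim_A_2}, $\Vert e^{tA_2}v\Vert_{D(A_2^{1/2})}\le C\,t^{-1/2}e^{-(1-\epsilon)\vert\lambda_1\vert t}\Vert v\Vert_{H^{-1}}$ together with $D(A_2^{1/2})\approx L^2$ (this is the place where Theorem \ref{th:L} is actually used), to convert $H^{-1}$ smallness of the nonlinearity into exponential decay of the normal component in $L^2$. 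Your proposal has no substitute for this bridge, so neither the ball-invariance statement nor \eqref{eq:th4.1} is reached as written. Once such a smoothing step is added, the remainder of your argument is sound — indeed your moving-frame bound $\vert\dot\varphi\vert\le CE$ would give the stronger estimate $\vert\varphi_\infty-\varphi_0\vert=O(\Vert q_0-\hat q_\psi\Vert_{L^2}^2)$, compared with the paper's $o(\Vert q_0-\hat q_\psi\Vert_{L^2})$ obtained from $\gamma(\rho)\to0$ in \eqref{eq:gammarho}.
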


From the quoted theorem of \cite{MR2594897} of from \ref{Thm_NonLinear_Stability}, 
we can deduce that $\mathcal{C}$ is an invariant and stable normally hyperbolic curve for (\ref{eq}). 
 (see definition in \cite{MR1602566} or \cite{MR1873467} for instance). 

\begin{corollaire}\label{Corollaire_Normal_Hyperbolicity} 
The smooth and invariant curve $\mathcal{C}$ is stable normally hyperbolic.
More precisely, at each point $q_{\varphi} \in \mathcal{C}$, 
we consider the splitting 
\begin{equation} 
L^{2}= T_{\hat q_{\varphi}}\mathcal{C} \oplus N_{\varphi}^{s},
\end{equation}
with $T_{\hat q_{\varphi}}\mathcal{C} = \mathbb{R}. \partial_{\theta} \hat q_{\varphi} $ is the kernel of the linearized 
operator $L_{\hat q_{\varphi}}$ at $\hat q_{\varphi}$, and $N^{s}_{\varphi}$ is its orthogonal with respect to 
the $H^{-1}_{1/\hat q_{\varphi}}$ scalar product 
($N^{s}_{\varphi}$ is also the range of the operator $L_{\hat q_{\varphi}}$).
This splitting depends continuously on $\varphi$ 
and there is a $T > 0$ such that
\begin{equation} 
\Vert D S^{nT}(\hat q_{\varphi}).v \Vert \leq \Vert v \Vert  
\;\; \text{ for any}\;\; v \in T_{\hat q_{\varphi}} \mathcal{C}  \, \text{ and}\;\; n \in \mathbb{Z}, \, \text{ and} \end{equation}
\begin{equation} \Vert D S^{nT}(\hat q_{\varphi}) . v \Vert \leq \frac{2}{10^{n}} \Vert v \Vert 
\;\; \text{for any} \;\; v \in N^{s}_{\varphi}, \, \text{ and}\;\; n \geq 1.\end{equation}
\end{corollaire}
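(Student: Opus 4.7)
The plan is to exploit the rotational invariance of $\mathcal C$ to handle the tangent direction explicitly, and then to transfer the spectral gap of Theorem~\ref{th:L} into an exponential contraction on the normal bundle via the identification of the linearized semiflow with the linear semigroup generated by $L_{\hat q_\varphi}$.

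First I would establish continuity of the splitting and preservation of the tangent direction. Since $\varphi \mapsto \hat q_\varphi$ is the rotation orbit of $\hat q$, it is real-analytic and its tangent is $\partial_\varphi \hat q_\varphi = \partial_\theta \hat q_\varphi$. The normal $N^s_\varphi$, which by Theorem~\ref{th:L} coincides with $R(L_{\hat q_\varphi})$, is a rigid rotation of $N^s_0$, so the splitting depends analytically on $\varphi$. Since $\hat q_{\varphi+s}$ is an equilibrium for every $s$, differentiating the identity $S^t \hat q_{\varphi+s} = \hat q_{\varphi+s}$ at $s=0$ yields
\begin{equation}
DS^t(\hat q_\varphi)\cdot \partial_\theta \hat q_\varphi = \partial_\theta \hat q_\varphi, \qquad t \in \mathbb R,
\end{equation}
so $\|DS^{nT}(\hat q_\varphi) v\| = \|v\|$ for every $v \in T_{\hat q_\varphi}\mathcal C$ and every $n \in \mathbb Z$, which already gives the first estimate (with equality).

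Next I would identify the linearized semiflow with a linear semigroup. Because $\hat q_\varphi$ is a fixed point, $DS^t(\hat q_\varphi)$ satisfies the variational equation $\frac{d}{dt}DS^t = L_{\hat q_\varphi}\circ DS^t$ with $DS^0 = \mathrm{Id}$, hence $DS^t(\hat q_\varphi) = e^{tL_{\hat q_\varphi}}$. By Theorem~\ref{th:L}, $L_{\hat q_\varphi}$ is essentially self-adjoint on $H^{-1}_{1/\hat q_\varphi}$ with pure-point spectrum lying in $\{0\}\cup (-\infty,\lambda_1]$, and $N^s_\varphi$ is the closed invariant subspace associated with the spectrum in $(-\infty,\lambda_1]$. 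The functional calculus then gives
\begin{equation}
\| e^{tL_{\hat q_\varphi}} v\|_{H^{-1}_{1/\hat q_\varphi}} \leq e^{\lambda_1 t}\|v\|_{H^{-1}_{1/\hat q_\varphi}}, \qquad v\in N^s_\varphi,\; t\ge 0.
\end{equation}
Theorem~\ref{th:L} also supplies the equivalence of the scalar products $\ll\cdot,\cdot\gg$, $<\cdot,\cdot>$ and $(\cdot,\cdot)$ on $R(L_{\hat q_\varphi}) = N^s_\varphi$, and this equivalence is uniform in $\varphi$ because all $L_{\hat q_\varphi}$ are conjugate by rotations. Hence there is a constant $C = C(K)$ with $\|DS^t(\hat q_\varphi) v\|_{L^2} \leq C e^{\lambda_1 t}\|v\|_{L^2}$ for every $v\in N^s_\varphi$ and $t \ge 0$.

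Finally, using the semigroup relation $DS^{nT}(\hat q_\varphi) = (DS^T(\hat q_\varphi))^n$, I would choose $T$ large enough so that $Ce^{\lambda_1 T} \leq 1/10$. Then for every $n\ge 1$ and $v\in N^s_\varphi$,
\begin{equation}
\|DS^{nT}(\hat q_\varphi) v\|_{L^2} \leq \bigl(Ce^{\lambda_1 T}\bigr)^n \|v\|_{L^2} \leq 10^{-n}\|v\|_{L^2} \leq \frac{2}{10^n}\|v\|_{L^2},
\end{equation}
which is the second estimate. The only delicate point is the norm-equivalence bookkeeping: the spectral gap is natural in the weighted negative Sobolev norm in which $L_{\hat q_\varphi}$ is self-adjoint, whereas the corollary is phrased in $L^2$; but Theorem~\ref{th:L} supplies exactly the equivalence needed precisely on the relevant subspace $N^s_\varphi = R(L_{\hat q_\varphi})$, and the rest is standard self-adjoint spectral calculus combined with the rigid rotational symmetry of the problem.
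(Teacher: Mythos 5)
Your strategy is sound and it is genuinely different from the proof the paper gives. The paper never writes $DS^t(\hat q)=e^{tL_{\hat q}}$: it obtains the bound on $DS^{nT}(\hat q)\vert_{N^s}$ from the \emph{nonlinear} local estimates of Lemma \ref{th:lem4} (in the $(\varphi,y)$ coordinates of Lemma \ref{lm:new_local_coordinates}), applying them to $q(0)=\hat q+\epsilon v$ and letting $\epsilon\to 0$ separately for each $n$ so that the phase-drift term $\gamma(\rho)/\beta$ becomes smaller than $10^{-n}$; this is why the constant $2/10^n$ appears. You instead linearize directly at the equilibrium and import the self-adjoint spectral gap of Theorem \ref{th:L}; this is exactly the ``direct'' route the paper alludes to just before the proof of Theorem \ref{Thm_NonLinear_Stability} (attributed to \cite{GPPP}), and it buys a cleaner geometric contraction $(Ce^{\lambda_1 T})^n$ at the price of two ingredients the paper's argument avoids: you must know that the semiflow is differentiable in $L^2$ at $\hat q_\varphi$ with $DS^t(\hat q_\varphi)=e^{tL_{\hat q_\varphi}}$ (standard for this semilinear parabolic problem, e.g.\ via \cite{MR610244}, but it should be said), whereas the paper's difference-quotient argument only uses the Lipschitz-type estimates of Lemma \ref{th:lem4}. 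Your treatment of the tangent direction and of the $\varphi$-uniformity by rotation conjugation is the same as the paper's.

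One step needs patching: the norm bookkeeping is not quite what you state. Functional calculus for the self-adjoint operator gives the contraction $e^{\lambda_1 t}$ in the $H^{-1}_{1/\hat q}$ norm, and Theorem \ref{th:L} does \emph{not} assert that the $H^{-1}_{1/\hat q}$ and $L^2$ norms are equivalent on $N^s_\varphi$ (they cannot be, on an infinite-dimensional subspace); what it asserts is the equivalence of the \emph{form} scalar product $\ll L_{\hat q}u,v\gg$ (i.e.\ the $D((-L_{\hat q})^{1/2})$ norm) with the weighted and unweighted $L^2$ products on $R(L_{\hat q})$. So the ``hence $\Vert DS^t v\Vert_{L^2}\le Ce^{\lambda_1 t}\Vert v\Vert_{L^2}$'' does not follow as written. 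The fix stays inside your framework: either run the same spectral-calculus estimate in the form norm, noting that $-\ll L_{\hat q}e^{tL_{\hat q}}v,e^{tL_{\hat q}}v\gg\le e^{2\lambda_1 t}\,(-\ll L_{\hat q}v,v\gg)$ and then use the $D((-A_2)^{1/2})\approx L^2$ equivalence on both sides, or invoke the parabolic smoothing bound \eqref{eq:estim_A_2} with $\alpha=1/2$ at the fixed time $T$ together with the continuous embedding $L^2\hookrightarrow H^{-1}_{1/\hat q}$. Either way you get $\Vert DS^{T}(\hat q_\varphi)v\Vert_{L^2}\le C\,e^{-(1-\epsilon)\vert\lambda_1\vert T}\Vert v\Vert_{L^2}$ on $N^s_\varphi$ with $C$ independent of $\varphi$, and your choice of $T$ and the semigroup property then give the claimed $2/10^n$ bound (indeed the stronger $10^{-n}$).
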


The proof of the theorem is outlined here, details and the proof of the corollary are postponed to the appendix. 
We mention that using that the invariant curve $\mathcal C$ is in fact not only invariant but a composed of equilibria only, 
its normal hyperbolicity can also be derived from the spectral gap result of \cite{MR2594897} directly, see \cite{GPPP}.

\begin{proof}
Thanks to rotation-symmetry of equation (\ref{eq}), without loss of generality
we can assume that $\psi=0$, ie $\hat q_\psi= \hat{q}$ and $\Vert q_0 - \hat{q} \Vert_{L^2} < \delta$.

\begin{lm}\label{lm:new_local_coordinates}
Consider $u(t, \theta) = q(t, \theta) - \hat{q}(\theta)$ and 
$x_{\varphi} (\theta) = \hat q_{\varphi}(\theta) - \hat{q}(\theta) = \hat{q}(\theta + \varphi) - \hat{q}(\theta) $.
There is a  neighborhood $\mathcal V$ of $0$ in $L^2$ and to smooth projections 
$$\begin{array}{rcl}
\mathcal V & \rightarrow & ]- \epsilon, \epsilon[ \subset \mathbb R \\
  u & \mapsto & \varphi(u)
\end{array} \; \text{ and } \;
\begin{array}{rcl}
\mathcal V & \rightarrow & R(L_{\hat q}) \subset L^2 \\
  u & \mapsto & y_u, 
\end{array}$$
such that $u = x_\varphi(u) + y_u $ for all $u \in \mathcal V$, 
and $u + \hat q$ is a solution of equation (\ref{eq}) if and only if $\varphi = \varphi(u)$ and $y=y_u$ are solutions of 
\begin{equation}  \frac{d \varphi}{dt} = \Phi(\varphi , y)  \, \; \textnormal{ and } \, \;
 \frac{dy}{dt} + A_{2} y = g(\varphi , y)\, , \end{equation}
where 
\begin{equation} 
\Phi(\varphi , y) = \frac{1}{\ll v , \partial_{\varphi} x_{\varphi}\gg} \ll v , f(x_{\varphi} + y) - f(x_{\varphi})\gg  \, ,
\end{equation}
 \begin{equation} 
 g(\varphi, y ) =f(x_{\varphi} + y) - f(x_{\varphi}) - \partial_{\varphi}x_{\varphi} \Phi(\varphi, y) ,
\end{equation} 
are smooth, with $D \Phi(0,0)=0$ and $Dg(0,0)=0$, 
and $A_{2}$ is the restriction of $L_{\hat{q}}$ on $R(L_{\hat{q}})$, ie  $A_{2} = L_{\hat{q}} \vert_{R(L_{\hat{q}})}$.
\end{lm}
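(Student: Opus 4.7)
My plan is to construct a moving-frame decomposition of a neighborhood of $\hat{q}$ in $L^{2}$, using the direct-sum splitting $L^{2} \simeq \ker L_{\hat{q}} \oplus R(L_{\hat{q}})$ provided by Theorem~\ref{th:L}, and to trade the one-dimensional kernel direction for motion along the invariant circle $\cC$ of equilibria. I would first rewrite the Kuramoto equation in perturbative form around $\hat{q}$,
\begin{equation*}
\partial_{t}u = L_{\hat{q}}u + f(u), \qquad f(u) := -K\partial_{\theta}(u\,J\ast u),
\end{equation*}
where $u = q-\hat{q}$ and $f$ is quadratic with $f(0)=0$ and $Df(0)=0$. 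Since each $\hat{q}_{\varphi}=\hat{q}+x_{\varphi}$ is itself an equilibrium, one gets the key identity $L_{\hat{q}}x_{\varphi}+f(x_{\varphi})=0$, which drives the whole computation.

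To build the projections, I would pick $v:=\partial_{\theta}\hat{q}\in\ker L_{\hat{q}}$ and apply the smooth implicit function theorem to the map $(u,\varphi)\mapsto \ll v,\,u-x_{\varphi}\gg$ at $(0,0)$: its $\varphi$-derivative there equals $-\Vert v\Vert^{2}_{H^{-1}_{1/\hat{q}}}\neq 0$, so one gets a smooth $u\mapsto\varphi(u)$ on a neighborhood $\mathcal{V}$ of $0\in L^{2}$ with $\varphi(0)=0$. Setting $y_{u}:=u-x_{\varphi(u)}$, the defining identity $\ll v,\,y_{u}\gg=0$ together with essential self-adjointness (Theorem~\ref{th:L}) forces $y_{u}\in R(L_{\hat{q}})$, since $R(L_{\hat{q}})$ is precisely the $\ll\cdot,\cdot\gg$-orthogonal complement of $\ker L_{\hat{q}}=\R v$.

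Next I would substitute $u(t)=x_{\varphi(t)}+y(t)$ into the PDE and invoke $L_{\hat{q}}x_{\varphi}=-f(x_{\varphi})$ to obtain
\begin{equation*}
\partial_{\varphi}x_{\varphi}\,\dot{\varphi} + \dot{y} \;=\; L_{\hat{q}}y + [f(x_{\varphi}+y)-f(x_{\varphi})].
\end{equation*}
Pairing with $v$ in $\ll\cdot,\cdot\gg$ annihilates $\ll v,L_{\hat{q}}y\gg$ (by self-adjointness together with $v\in\ker L_{\hat{q}}$) and $\ll v,\dot{y}\gg$ (since $y(t)\in R(L_{\hat{q}})$ for all $t$, so $\ll v,y\gg\equiv 0$), which yields exactly the announced formula for $\Phi(\varphi,y)$. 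Subtracting $\partial_{\varphi}x_{\varphi}\,\Phi$ off gives $\dot{y}=L_{\hat{q}}y+g(\varphi,y)$ with $g$ as stated, i.e.\ $\dot{y}+A_{2}y=g$ after the conventional sign choice on $A_{2}$; the spectral gap in Theorem~\ref{th:L} makes $A_{2}$ strictly positive with gap $|\lambda_{1}|$, which will be crucial later on.

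Smoothness of $\Phi$ and $g$ is then routine because $f$ is polynomial, $\varphi\mapsto x_{\varphi}$ is analytic, and the denominator $\ll v,\partial_{\varphi}x_{\varphi}\gg$ is nonzero on a neighborhood of $\varphi=0$. For the vanishing derivatives, note that $\Phi(\varphi,0)\equiv 0$ because $f(x_{\varphi})-f(x_{\varphi})=0$, so $\partial_{\varphi}\Phi(0,0)=0$, while $\partial_{y}\Phi(0,0)$ is proportional to $\ll v,Df(0)\cdot\gg=0$ by the quadratic character of $f$; the same reasoning yields $Dg(0,0)=0$. The main obstacle I anticipate is purely functional-analytic: the bilinear form $\ll\cdot,\cdot\gg$ is naturally defined on $H^{-1}_{1/\hat{q}}$ while the ambient space for the projections and the ODEs is $L^{2}$. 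This is resolved by the equivalence $H^{-1}_{1/\hat{q}}\simeq H^{-1}$ recorded right after Theorem~\ref{th:L} together with the continuous embedding $L^{2}\hookrightarrow H^{-1}$, ensuring that all maps are smooth in the relevant $L^{2}$ topology and that the decomposition extends consistently to the Gevrey spaces in which the actual solution lives for $t>0$ by Theorem~\ref{th:G1}.
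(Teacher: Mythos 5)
Your proposal is correct and follows essentially the same route as the paper's own proof: the splitting $L^2=\ker(L_{\hat q})\oplus R(L_{\hat q})$ from Theorem \ref{th:L}, the parametrization of the kernel direction by $\varphi\mapsto x_\varphi$ using $\partial_\varphi x_\varphi\vert_{\varphi=0}=\partial_\theta\hat q\neq 0$ (your implicit-function-theorem formulation is just a rigorous rendering of the paper's ``graph over $\ker(L_{\hat q})$'' argument), pairing with $v\in\ker(L_{\hat q})$ in the $\ll\cdot,\cdot\gg$ product to extract $\Phi$, and the quadraticity of $f$ (viewed as a map $L^2\to H^{-1}_{1/\hat q}$) for smoothness and $D\Phi(0,0)=Dg(0,0)=0$. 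The only deviations are cosmetic: the paper normalizes $v=\partial_\theta\hat q/\Vert\partial_\theta\hat q\Vert^2$, and the sign bookkeeping relating $A_2$ to $L_{\hat q}\vert_{R(L_{\hat q})}$, which you flag explicitly and which the paper itself handles loosely.
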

\begin{proof}
Details of proof of this lemma can be found in the appendix. 
\end{proof}

From this we can deduce a local stability result for the $y$ variable.
Let us introduce the natural Hilbert basis of eigenvectors $e_{k} \in D(L_{\hat{q}})$ so that 
$L_{\hat{q}}= \sum_{k \geq 0} \lambda_{k} \ll \cdot , e_{k} \gg e_{k}\, ,$
with $\lambda_{0} = 0$ and $0 > \lambda_1 \ge \lambda_{n} \ge  \lambda_{n+1} \rightarrow - \infty $ for $n \geq 1$. 
We have $A_{2}= \sum_{k \geq 1} \lambda_{k} \ll \cdot , e_{k} \gg e_{k}$ and 
$e^{tA_{2}} = \sum_{k \geq 1} e^{t \lambda_{k}} \ll \cdot , e_{k} \gg e_{k}$, so that
\begin{equation}
\Vert A_{2}^{\alpha} e^{tA_{2}} v \Vert^{2}_{H^{-1}} \leq 
e^{(1-\epsilon) \lambda_1 t} \max_{k} \left ( \vert \lambda_{k} \vert^{2\alpha} e^{-2 \epsilon \vert \lambda_{k} \vert t} \right ) 
\Vert v \Vert^{2}_{H^{-1}}\, , 
\end{equation}
for any $\epsilon \in ]0, 1[$
and from this we directly infer
\begin{equation}\label{eq:estim_A_2}  
\Vert e^{tA_{2}}v \Vert_{D(A_{2}^{\alpha})} \leq 
\frac{C_{\alpha,\epsilon}}{t^{\alpha}} e^{- (1-\epsilon)\vert \lambda _1 \vert  t} \Vert v \Vert_{H^{-1}} \end{equation}
for any $t>0$ and $\alpha >0$ (we have used 
$\Vert v \Vert_{D(A_{2}^{\alpha})} ^2$ for $\Vert A_2^\alpha v \Vert ^2_{H^{-1}}$). 

\begin{lm}
\label{th:lem4}
For every $ \beta \in ]0, \vert \lambda_1 \vert [$
there exist $\rho_{0} > 0$ and $M>0$ such that, 
if $\vert \varphi_{0} \vert + \Vert y_{0} \Vert_{L^{2}}  = \frac{1}{8} \rho \leq \frac{1}{8} \rho_{0} $, 
then  for all $t \geq 0$
\begin{equation}\label{th:lem4.1}
\vert \varphi(t) \vert + \Vert y(t) \Vert_{L^{2}} \leq \rho\, .
\end{equation}
Moreover for all $t \geq 0$ we have
 \begin{equation} 
 \label{eq:lem4.2}
 \Vert y(t) \Vert_{L^{2}} \leq 2 \Vert y_{0} \Vert_{L^{2}} e^{- \beta t}\, . 
 \end{equation}
\end{lm}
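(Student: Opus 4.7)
The plan is to handle the partially hyperbolic system from Lemma~\ref{lm:new_local_coordinates} via a standard Duhamel/bootstrap argument. The first step is to extract the right quadratic structure of the nonlinearities. Because $y=0$ corresponds to $u=x_\varphi$, which is an equilibrium of \eqref{eq}, both identities $\Phi(\varphi,0)\equiv 0$ and $g(\varphi,0)\equiv 0$ hold (each reduces to $f(x_\varphi)-f(x_\varphi)=0$). Combined with $D\Phi(0,0)=0$ and $Dg(0,0)=0$, this yields the key bound
\begin{equation*}
|\Phi(\varphi,y)|+\|g(\varphi,y)\|_{H^{-1}} \leq C\,(|\varphi|+\|y\|_{L^2})\,\|y\|_{L^2}
\end{equation*}
in a neighborhood of the origin, so the nonlinear terms are controlled by $\|y\|_{L^2}$ modulated by the size of the orbit.

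Next I would write Duhamel's formula $y(t)=e^{tA_2}y_0+\int_0^t e^{(t-s)A_2}g(\varphi(s),y(s))\,ds$ and apply the linear estimate \eqref{eq:estim_A_2} with $\alpha=1/2$ (so that $D(A_2^{1/2})\approx L^2$ by Theorem~\ref{th:L}) and with $\epsilon>0$ chosen so small that $\gamma:=(1-\epsilon)|\lambda_1|-\beta>0$. For the initial-data term I would also use the contractivity of the self-adjoint semigroup, $\|e^{tA_2}y_0\|_{L^2}\leq e^{-|\lambda_1|t}\|y_0\|_{L^2}$.

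The bootstrap works as follows. Define $T^\ast=\sup\{T\geq 0:|\varphi(s)|+\|y(s)\|_{L^2}\leq \rho\ \forall s\in[0,T]\}$ and, on $[0,T^\ast]$, introduce $M(T)=\sup_{s\leq T}\|y(s)\|_{L^2}e^{\beta s}$. Substituting the quadratic bound on $g$ and the assumed bootstrap bound into Duhamel's formula gives
\begin{equation*}
M(T)\;\leq\;\|y_0\|_{L^2} + C\rho\,M(T)\int_0^{\infty}\frac{e^{-\gamma u}}{\sqrt u}\,du \;=\; \|y_0\|_{L^2}+C'\rho\,M(T),
\end{equation*}
where the finiteness of $\int_0^\infty u^{-1/2}e^{-\gamma u}\,du=\sqrt{\pi/\gamma}$ is precisely what makes the singular kernel in \eqref{eq:estim_A_2} absorbable. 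Choosing $\rho_0$ small enough so that $C'\rho<1/2$ yields $M(T)\leq 2\|y_0\|_{L^2}$, which is exactly \eqref{eq:lem4.2}. Plugging this decay into $\varphi'=\Phi(\varphi,y)$ gives $|\varphi(t)-\varphi_0|\leq C\rho\int_0^t\|y(s)\|_{L^2}\,ds\leq 2C\rho\|y_0\|_{L^2}/\beta$, so
\begin{equation*}
|\varphi(t)|+\|y(t)\|_{L^2} \;\leq\; |\varphi_0|+\frac{2C\rho\|y_0\|_{L^2}}{\beta}+2\|y_0\|_{L^2} \;\leq\; \tfrac{\rho}{8}\bigl(3+\tfrac{2C\rho}{\beta}\bigr),
\end{equation*}
which is strictly less than $\rho$ after shrinking $\rho_0$ once more. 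Continuity of the solution then forces $T^\ast=+\infty$, proving \eqref{th:lem4.1} and \eqref{eq:lem4.2} simultaneously.

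I expect the only real subtlety to be the singular factor $(t-s)^{-1/2}$ in \eqref{eq:estim_A_2}, which reflects the gap between the natural energy norm for $A_2$ and the $L^2$-norm appearing in the statement; everything else is routine once the crucial identities $\Phi(\varphi,0)=g(\varphi,0)=0$ have been noticed, and once $\epsilon$ is chosen to put $\beta$ strictly inside the spectral gap $|\lambda_1|$.
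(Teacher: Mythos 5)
Your proposal is correct and takes essentially the same route as the paper's proof: the quadratic/small-Lipschitz bound $\vert\Phi(\varphi,y)\vert+\Vert g(\varphi,y)\Vert_{H^{-1}}\leq \gamma(\rho)\Vert y\Vert_{L^{2}}$ coming from $\Phi(\varphi,0)=g(\varphi,0)=0$ together with the vanishing derivatives at the origin, Duhamel's formula with the $\alpha=1/2$ smoothing estimate \eqref{eq:estim_A_2}, a weighted-supremum bootstrap absorbed by choosing $\rho_{0}$ small, and finally the integration of $\varphi'=\Phi(\varphi,y)$ plus a continuity argument to push $T^{\ast}$ to infinity. The only cosmetic difference is that you fix $\beta$ first and shrink $\epsilon$ so that $(1-\epsilon)\vert\lambda_{1}\vert>\beta$, whereas the paper works with the weight $e^{(1-2\epsilon)\vert\lambda_{1}\vert t}$ and sets $\beta=(1-2\epsilon)\vert\lambda_{1}\vert$.
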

\begin{proof} 
Details of the proof of this lemma can be found in the appendix. 
\end{proof}

Since $\Phi$ is smooth, we have
$ \left\vert \frac{d \varphi}{ dt}(t) \right\vert \leq C \Vert y_{0} \Vert e^{- \beta t}, $
 and
$\varphi(t)$ converges: $\vert \varphi(t) - \varphi_{\infty} \vert = \mathcal{O} \left ( e^{- \beta t} \right )$. 
So we have 
\begin{equation} \Vert y(t) \Vert_{L^{2}} + \vert \varphi(t) - \varphi_{\infty} \vert = \mathcal{O} \left ( e^{- \beta t} \right ), \end{equation}
which, combined with the Gevrey-convergence lemma, gives \eqref{eq:th4.1}. 

For \eqref{eq:th4.2} we argue instead that
 $ \vert \frac{d \varphi}{dt} \vert \leq {2\gamma(\rho)} \Vert y_{0} \Vert_{L^{2}} e^{- \beta t} $
and that $\vert \varphi_{\infty} \vert $ is bounded by $ {2\gamma(\rho)} \Vert y_{0} \Vert_{L^{2}} \frac{1}{\beta} 
= o \left ( \vert \varphi_0 \vert + \Vert y_0 \Vert \right )$, for some $\gamma(\rho)$ satisfying $\gamma(\rho) \underset{\rho \rightarrow 0}{\longrightarrow} 0$
(see the proof of lemma \ref{th:lem4} in the appendix), 
and $o \left ( \vert \varphi_0 \vert + \Vert y_0 \Vert \right ) = o \left ( \Vert q_0 - \hat{q} \Vert \right ) $ 
since $ u \mapsto (\varphi, y) $ is Lipschitz and its inverse is Lipschitz too.
This completes the proof of Theorem ~\ref{Thm_NonLinear_Stability}.
\end{proof}

\section{The global attractor}
\label{section_global}

This section is entirely devoted to the study of the global attractor of equation (\ref{eq}). 
It starts with general results and progressively moves to more refined descriptions of this set. 
The end of the section proposes conjectures supported by numerical investigations.

\subsection{Existence}
\label{section_global1}

The existence of a global attractor is a classical consequence of the existence of a compact absorbing set, 
and regularizing properties of equation (\ref{eq}) imply its existence (see \cite{MR2594897,MR2164412,MR1691574} for example). We state it here :
\begin{thm} \label{Thm_Existence_Attracteur}
There is a set $\mathcal{A}$ bounded in Gevrey space $\mathcal G_a$, such that for any space $E$ such 
that the injection $\mathcal G_a \rightarrow E$ is compact, for any bounded $B$ set in $E$, we have 
\begin{equation} dist (S_t B, \mathcal{A} ) \underset{t \rightarrow + \infty}{\longrightarrow} 0 
\end{equation}
where $dist$ is measured in the natural $E$ norm. 
This includes the cases $E= L^2$, $E= H^s$ for any $s >0$, $E=C^k$ for any $k \geq 1$,  and even $E=G_{a'}$ if $ a' < a$.
\end{thm}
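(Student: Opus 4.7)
The plan is to invoke the classical theorem on the existence of a global attractor for a semiflow possessing a compact absorbing set (e.g.\ Hale/Temam). All the hypotheses have essentially been collected in the preceding sections; what remains is to assemble them carefully so that the same set $\mathcal{A}$ works simultaneously for every admissible space $E$.

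First, I fix some $a>1$. Theorem~\ref{th:G1} furnishes a set $B_a$ that is bounded in $\mathcal{G}_a$ and absorbs the trajectory of any initial probability measure, hence a fortiori of any bounded set in any of the target spaces $E=L^2,H^s,C^k,\mathcal{G}_{a'}$ (with $1<a'<a$), since the semiflow $S_t$ sends initial data in these spaces to probability densities and the absorbing property is uniform in bounded sets. Because the injection $\mathcal{G}_a\hookrightarrow E$ is compact, $B_a$ is relatively compact in $E$; i.e.\ $S_t$ admits a \emph{compact} absorbing set in each such $E$. The candidate attractor is then
\begin{equation}
\mathcal{A} \;:=\; \omega_E(B_a) \;=\; \bigcap_{t\geq 0}\overline{\bigcup_{s\geq t}S_s B_a}^{\,E}.
\end{equation}

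Second, I would verify the standard properties. Continuity of $S_t$ on $E$ follows from well-posedness in $L^2$ together with Lemma~\ref{Gevrey_Cv_Lemma}, which upgrades $L^2$-closeness after any positive time to closeness in $\mathcal{G}_a$, hence in $E$. The compactness of the absorbing set and continuity then imply, by the classical argument, that $\mathcal{A}$ is nonempty, compact in $E$, and invariant ($S_t\mathcal{A}=\mathcal{A}$); this is the intersection of a nested family of nonempty compact sets in $E$. Attraction of any bounded $B\subset E$ reduces to attraction of $B_a$: since $B$ is absorbed into $B_a$ at a finite time $t_B$, one has $\operatorname{dist}_E(S_tB,\mathcal{A})\leq \operatorname{dist}_E(S_{t-t_B}B_a,\mathcal{A})\to 0$ by a standard extraction of subsequences using relative compactness of $\bigcup_{s\geq 0}S_sB_a$ in $E$.

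Third, to see that $\mathcal{A}$ is bounded in $\mathcal{G}_a$ and independent of $E$, I note that by the absorbing property $S_tB_a\subset B_a$ for all $t$ beyond some $t_0$, so $\mathcal{A}\subset\overline{B_a}^{\mathcal{G}_a}$ which is bounded in $\mathcal{G}_a$; and $\mathcal{A}$ admits the intrinsic characterization as the union of all complete bounded trajectories of $S_t$ in $\mathcal{G}_a$ (equivalently, in any $E$), which makes the construction manifestly independent of the space $E$ chosen to form the $\omega$-limit.

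The main obstacle, such as it is, is purely organizational: making sure that the \emph{single} set $\mathcal{A}\subset\mathcal{G}_a$ simultaneously serves as attractor in all the listed spaces $E$. This is handled by building $\mathcal{A}$ from the Gevrey absorbing set $B_a$ and then exploiting that $\mathcal{G}_a\hookrightarrow E$ is compact, so convergence and compactness estimates automatically transfer from $\mathcal{G}_a$ to every $E$ at stake. No substantially new analytical input beyond Theorem~\ref{th:G1} and Lemma~\ref{Gevrey_Cv_Lemma} is needed.
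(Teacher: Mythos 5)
Your argument is correct and follows essentially the same route as the paper, which simply records this theorem as the classical consequence of the compact Gevrey absorbing set $B_a$ from Theorem~\ref{th:G1} (together with continuity of the semiflow, cf.\ Lemma~\ref{Gevrey_Cv_Lemma}), citing the standard attractor-existence references rather than writing out the construction. Your write-up just makes explicit the standard $\omega$-limit construction and the independence of $\mathcal{A}$ from the ambient space $E$, which is exactly what the paper leaves to the cited literature.
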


\par When $K\leq 1$, the global attractor is $\{ \frac{1}{2\pi} \}$. In the remainder of section \ref{section_global}, 
we discuss the global attractor in the case $K>1$. 

\subsection{Description of the global attractor}
\label{section_global2}

\par Equation (\ref{eq}) is equivariant under $\mathcal O (2)$, and the subspace of even functions is invariant by the dynamics. 
However for $q \in L^{2}(\mathbb{S})$ there is in general no $\alpha \in [0 , 2\pi[$ such that $q( \cdot + \alpha)$ is even. 
More precisely, for $q = \frac{1}{2\pi} + \frac{1}{2\pi} \sum x_k \cos(k\theta) + y_k \sin(k\theta)$, such an $\alpha$ exists 
if and only if the quantities $\frac{x_k}{\sqrt{x_k^2 + y_k^2}}$ and $\frac{y_k}{\sqrt{x_k^2 + y_k^2}}$ are independent of $k$.
So the whole dynamics of (\ref{eq}) in $L^{2}$ are not captured by its restriction on the subspace of even solutions. 
Nevertheless, we show below that the dynamics on the global attractor $\mathcal{A}$ are characterized by its restriction on the subspace of even functions. 
In fact the global attractor is radial:  it is composed of one even heteroclinic solution $q_h$, with $q_{h}(- \infty) = \frac{1}{2\pi}$ 
and $q_{h}(+ \infty) = \hat{q}$, and its rotations $q_{h}(t, \cdot + \varphi)$. 

\begin{prop}\label{prop:GA_disk}
\par The unstable manifold $W^{u}(\frac{1}{2 \pi})$ is the global attractor of the system. 
\par The 2 dimensional unstable manifold of the constant equilibrium $\frac{1}{2 \pi}$ consists in a family of heteroclinic orbits, 
each one connecting $\frac{1}{2 \pi}$ to a non trivial equilibrium $\hat{q}( \cdot + \varphi)$ ($\varphi \in [0, 2 \pi[$). 
All these orbits are obtained by rotation from one heteroclinic connection in the even functions subspace. 
In particular, for any solution $q$ in $W^{u}(\frac{1}{2 \pi})$ of equation (\ref{eq}), 
there is an angle $\varphi$ independent of the time $t$, such that 
$q(t, \cdot + \varphi)$ is even for any $t \in ]- \infty , + \infty[$.
\end{prop}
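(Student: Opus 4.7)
The plan is to proceed in two movements. First, identify $\mathcal A$ with $W^u(\frac{1}{2\pi})$ by combining the Lyapunov structure of (\ref{eq}) with the normal hyperbolicity of $\mathcal C$. Second, exploit the $O(2)$-equivariance of (\ref{eq}) to show that $W^u(\frac{1}{2\pi})$ is swept out by rotations of a single even heteroclinic, which simultaneously yields the rotation-to-even statement.

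For the identification step, recall from Proposition~\ref{Prop_Stabilite_K<=1} that every $\omega$-limit set lies in $\{\frac{1}{2\pi}\}\cup\mathcal C$. Applying the same reasoning to the Lyapunov functional backward in time places the $\alpha$-limit of any complete bounded orbit there as well. Since Corollary~\ref{Corollaire_Normal_Hyperbolicity} gives that $\mathcal C$ is normally hyperbolic with trivial unstable bundle, no non-constant orbit can have its $\alpha$-limit meet $\mathcal C$. Hence the bounded complete orbits are exactly the equilibria in $\{\frac{1}{2\pi}\}\cup\mathcal C$ together with the orbits in $W^u(\frac{1}{2\pi})$. Each such heteroclinic converges in forward time to a single point of $\mathcal C$ by Theorem~\ref{Thm_NonLinear_Stability}, so $\mathcal C\subset\overline{W^u(\frac{1}{2\pi})}$ and therefore $\mathcal A=\overline{W^u(\frac{1}{2\pi})}$.

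For the foliation step, Proposition~\ref{Prop_VarStable_K>1}(iv) makes $W^u_{loc}$ a two-dimensional Lipschitz graph over a neighborhood of $0$ in the unstable eigenspace $E=\mathrm{span}\{\cos\theta,\sin\theta\}$. Reading the Fourier ODEs~(\ref{eq_FourierODEs}), where $z_1$ is the only unstable mode (eigenvalue $(K-1)/2$) and every other $z_n$ is damped at rate $n^2/2$, one finds that for every non-constant orbit $q$ on $W^u(\frac{1}{2\pi})$ the rescaled projection $e^{-(K-1)t/2}\Pi_E\bigl(q(t)-\tfrac{1}{2\pi}\bigr)$ has a non-zero limit $v_\infty\in E$ as $t\to-\infty$; since a time-shift multiplies $v_\infty$ by a positive scalar, $q$ is determined up to time-shift by the direction $\hat v=v_\infty/\Vert v_\infty\Vert$ on the unit circle of $E$. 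The equation (\ref{eq}) is equivariant under rotations $\sigma_\varphi\colon q(\theta)\mapsto q(\theta+\varphi)$ and the reflection $\tau\colon q(\theta)\mapsto q(-\theta)$, both of which fix $\frac{1}{2\pi}$ and act on $E$ by the corresponding linear actions (a planar rotation, and the reflection fixing $\cos\theta$ and flipping $\sin\theta$). Given $q$ with exit direction $\hat v=\cos(\theta-\varphi_0)$, the rotate $\sigma_{\varphi_0}q$ has exit direction $\cos\theta$ which is $\tau$-fixed; by uniqueness of orbit-from-direction, and using that a heteroclinic is non-periodic in time to rule out a half-period shift under $\tau^2=\mathrm{id}$, one obtains $\tau(\sigma_{\varphi_0}q)=\sigma_{\varphi_0}q$. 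Equivalently, $q(t,\cdot+\varphi_0)$ is even for every $t$. Defining $q_h:=\sigma_{\varphi_0}q$ with the sign chosen so $q_h(+\infty)=\hat q$ (the other sign giving $\sigma_\pi q_h$), every orbit on $W^u(\frac{1}{2\pi})$ is a time-shift of $\sigma_{-\varphi_0}q_h$ and connects $\frac{1}{2\pi}$ to $\hat q(\cdot-\varphi_0)$.

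The delicate point is the globalized bijection between non-constant orbits of $W^u(\frac{1}{2\pi})$ and directions $\hat v$ on the unit circle of $E$. Locally it is given by Proposition~\ref{Prop_VarStable_K>1}(iv); globally one must show that $v_\infty\neq 0$ for every non-constant orbit (no degenerate tangential decay) and that distinct $\hat v$'s yield distinct global orbits. Both are standard consequences of the dominance of $z_1$ over the strongly stable modes $z_n$ ($n\geq 2$) as $t\to-\infty$, which is transparent on the Fourier system (\ref{eq_FourierODEs}). The remaining ingredients—$O(2)$-equivariance and the rigidity extracted from $\tau^2=\mathrm{id}$ on a non-periodic orbit—are routine.
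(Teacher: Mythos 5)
Your overall architecture is viable and genuinely different in its second half from the paper's, but two load-bearing steps are asserted rather than proved. First, in your identification step you claim that every non-constant orbit of $W^{u}(\frac{1}{2\pi})$ ``converges in forward time to a single point of $\mathcal C$ by Theorem~\ref{Thm_NonLinear_Stability}''. That theorem is purely local: it applies only once the solution has come $\delta$-close to some $\hat q_\psi$. Proposition~\ref{Prop_Stabilite_K<=1} only places the $\omega$-limit inside $\{\frac{1}{2\pi}\}\cup\mathcal C$, so before invoking Theorem~\ref{Thm_NonLinear_Stability} you must exclude the homoclinic alternative $\omega(q_0)=\{\frac{1}{2\pi}\}$. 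This is exactly the step the paper does by hand: on the Fourier system \eqref{eq_FourierODEs} one checks $x_1>0$ along the orbit (backward Gronwall, since $\rho_1'\geq -\tfrac12 \rho_1$) and that $x_1(t)\to 0$ as $t\to+\infty$ is absurd, using the growth estimate $\rho_1'\geq \tfrac12\bigl(K(1-\delta)-1\bigr)\rho_1$ near $\tfrac{1}{2\pi}$ (equivalently Proposition~\ref{Prop_VarStable_K>1}(ii)--(iii), or the strict decrease of the Lyapunov functional along non-constant orbits). Your proposal never performs this exclusion.

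Second, the pivotal claim of your foliation step --- that every non-constant orbit of $W^{u}$ has a non-zero exit datum $v_\infty$ and is determined, up to time shift, by the direction $\hat v$ --- is precisely the content of the proposition's last assertion, and you dismiss it as ``standard consequences of the dominance of $z_1$''. Dominance of $z_1$ over the modes $z_n$, $n\geq 2$, separates $W^u$ from the strongly stable directions; it says nothing by itself about uniqueness of orbits \emph{within} the two-dimensional unstable manifold, where the linearization is the scalar $(K-1)/2$ times the identity (a star node). The claim is in fact provable with the paper's ingredients: the nonlinearity is bilinear, so on the Lipschitz graph of Proposition~\ref{Prop_VarStable_K>1}(iv) the reduced field is $\lambda v + \mathcal O(|v|^2)$ with $\lambda=(K-1)/2$, whence $e^{-\lambda t}v(t)$ converges as $t\to-\infty$, superexponential backward decay forces $v\equiv 0$, and a backward Gronwall comparison gives injectivity of the map orbit $\mapsto v_\infty$; but none of this is carried out, and it is the heart of your argument. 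The paper avoids this node analysis altogether: it restricts to the invariant even subspace, where the unstable manifold is one-dimensional, proves by the $x_1$-argument that its two branches are heteroclinic connections to $\hat q$ and $\hat q(\cdot+\pi)$, and then uses the $\mathcal O(2)$ equivariance (together with the graph property of $W^u_{loc}$) to sweep out the full two-dimensional manifold by rotations. So: the plan can be completed, and your exit-direction-plus-reflection route is a legitimate alternative to the paper's even-subspace reduction, but as written the homoclinic exclusion and the orbit-from-direction rigidity are genuine gaps.
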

\begin{proof}
\par In dissipative systems unstable manifolds of equilibria are included in the global attractor \cite{MR941371}. 
The converse inclusion is a rather general property, it relies essentially on the existence of a Lyapunov functional 
and on the fact that $\frac{1}{2\pi}$ is the {\em only} equilibrium with unstable directions. 
It implies that the global attractor $\mathcal{A}$ is exactly the unstable manifold $W^u(\frac{1}{2\pi})$ here.
\par We show now that in the even subspace, the 1-dimensional unstable manifold of the constant equilibrium $\frac{1}{2 \pi}$ consists of two heteroclinic orbits, 
connecting $\frac{1}{2 \pi}$ to $\hat{q}$ and $\hat{q}( \cdot + \pi) $.
Up to the symmetry $\theta \mapsto - \theta$ and time translations, with no loss of generality we can consider only 
one solution $q(t)$ escaping from $\frac{1}{2 \pi}$ with $x_{1}(t) >0$ when $t \rightarrow -\infty$. 
The dissipativity and regularity properties imply that the trajectory $\{ q(t), \; t \in \mathbb{R} \}$ is compact and
 that $q(t)$ it converges to one of the 3 equilibria of the system $\hat q$, $\frac{1}{2\pi}$ or $\hat q(\cdot + \pi)$. 
One has the estimate $\vert x_{k}(t) \vert \leq 1$ for all Fourier coefficients along solutions of our equation, and
we can easily see that $x_{1}(t) > 0$ for all time $t \in \mathbb{R}$. 
From the Fourier ODEs system we see that $\underset{t \rightarrow + \infty}{\lim} x_{1}(t) = 0$ is absurd if $x_1$ is not $0$ for all times.
Hence the solution $q(t)$ is not an homoclinic connection of $\frac{1}{2 \pi}$, and it is a heteroclinic solution.   
The $\mathcal O (2)$ invariance of equation (\ref{eq}) eventually implies the second part of proposition \ref{prop:GA_disk}. 
\end{proof}

\subsection{Global attractor and numerical explorations}
\label{section_global3}

\par As in the previous section, we denote $q_h$ the heteroclinic solution of (\ref{eq}) such that 
$q_h(t) \underset{t \rightarrow - \infty}{\longrightarrow} \frac{1}{2\pi}$ and $q_h(t) \underset{t \rightarrow + \infty}{\longrightarrow} \hat q$, 
where $\hat q (\theta)$ is the unique equilibrium of (\ref{eq}) with $x_1 >0$ in the even functions subspace. 
Since the structure of the global attractor is radial, without loss of generality, we can restrict ourselves to 
working in the subspace of even functions. 

This section is organized as a sketch of a proof of two conjectures, which are partially proved and partially supported by numerical evidences. 
The first one is that the global attractor is not only a two dimensional manifold homeomorphic to a disk, but it is an analytical graph 
over the disk  $\{ x_1^2 + y_1^2 \leq 1 \}$ in $L^2$, and an recursive procedure for the Taylor series coefficient is given. 
The second conjecture is that this graph, which depends on the coupling constant $K$, 
converges in analytic functions space when $K \rightarrow + \infty$ to Ott-Antonsen two-dimensional invariant manifold. 

\subsubsection{The global attractor is an analytic graph}

Consider the global attractor as a geometrical curve $ \mathcal{H} = \{ q_h(t), t \in ]- \infty , + \infty[ \}$. 
\begin{prop}\label{prop:graph}
There is a neighborhood $\mathcal V$ of $\frac{1}{2 \pi}$ in $L^2$ such that 
$\mathcal H \cap \mathcal V $ is a graph on $\vspan \{  \cos  \}$. 
That is to say there is $\epsilon >0$ such that
\begin{equation}\label{eq:bij_x1}
 \begin{array}{rcl}
\mathcal H \cap \mathcal V & \rightarrow & ]0, \epsilon[ \\
q & \mapsto & x_1 = \int_{- \pi}^\pi q(\theta) \cos(\theta) d\theta 
\end{array} 
\end{equation}
 is a smooth bijection. 
\par Moreover, if $t \mapsto x_1(t) = \int_{- \pi}^\pi q_h(t,\theta) \cos(\theta) d\theta $ is increasing on $]- \infty, + \infty[$ 
($q_h$ denotes the heteroclinic solution of (\ref{eq})), 
then $\mathcal H$ is globally a $C^\infty$ graph on $\vspan \{  \cos  \}$.
\end{prop}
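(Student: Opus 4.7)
My plan is to split the statement into a local assertion near the incoherent equilibrium $\frac{1}{2\pi}$, handled by the local unstable manifold theorem and the inverse function theorem, and a global assertion under the monotonicity hypothesis, handled by reparametrizing the heteroclinic by $x_1$.

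For the local part, I would use Proposition~\ref{Prop_VarStable_K>1}: it supplies a two-dimensional $C^1$ local unstable manifold $W^u_{\mathrm{loc}}$ at $\frac{1}{2\pi}$ tangent to $E_{\frac{K-1}{2}}=\vspan\{\cos\theta,\sin\theta\}$. Since the even subspace is invariant under (\ref{eq}) and cuts $E_{\frac{K-1}{2}}$ in the one-dimensional line $\vspan\{\cos\theta\}$, the intersection of $W^u_{\mathrm{loc}}$ with the even subspace is a one-dimensional manifold tangent to $\vspan\{\cos\theta\}$ at $\frac{1}{2\pi}$, and it contains the local portion of $\mathcal H$. The map $q \mapsto x_1(q)=\int_{-\pi}^{\pi}q(\theta)\cos\theta\,d\theta$ is, up to a constant, the $L^2$-projection onto $\vspan\{\cos\theta\}$, so its derivative at $\frac{1}{2\pi}$ restricted to this tangent line is a linear isomorphism. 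The inverse function theorem then yields a local diffeomorphism onto an open interval around $0$. Because $x_1(q_h(t)) > 0$ for every $t \in \mathbb R$ (as used inside the proof of Proposition~\ref{prop:GA_disk}), this interval takes the form $]0,\epsilon[$; the regularity can be boosted from $C^1$ to $C^\infty$ via the Gevrey regularization of Theorem~\ref{th:G1}.

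For the global assertion, the map $t \mapsto q_h(t)$ is smooth from $\mathbb R$ into $\mathcal G_a$ for any $a>1$ by Theorem~\ref{th:G1}, hence $\tau(t):=x_1(q_h(t))$ is smooth on $\mathbb R$. Under the assumption that $\tau$ is strictly increasing and that its derivative does not vanish, $\tau$ is a $C^\infty$ diffeomorphism $\mathbb R \to\, ]0,r[$, where $r=I_1(2Kr)/I_0(2Kr)$ is the value of $x_1$ at the synchronized equilibrium $\hat q$ (the endpoints being attained only at $t = \pm\infty$). Composing $\tau^{-1}$ with the smooth curve $t \mapsto q_h(t)$ produces the claimed global $C^\infty$ graph $x_1 \mapsto q_h(\tau^{-1}(x_1))$ over $]0,r[$.

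The main obstacle is converting the qualitative hypothesis that $\tau$ is increasing into the quantitative $\dot\tau(t) \neq 0$ for every $t$, which is what the inverse function theorem needs globally. The Fourier coefficients of $q_h$ are real-analytic in time through the ODE system (\ref{eq_FourierODEs}), so $\tau$ is real-analytic; a monotone real-analytic function with $\dot\tau(t_0)=0$ would vanish there to odd order $2k+1\geq 3$, forcing its inverse to be only H\"older of exponent $1/(2k+1)$ at $\tau(t_0)$ and breaking the $C^\infty$ conclusion. Ruling this out along the entire heteroclinic is the delicate step: using the even-subspace evolution $\dot x_1 = \tfrac{K-1}{2}x_1 - \tfrac{K}{2}x_1 x_2$ together with $x_1(t)>0$, $\dot\tau(t_0)=0$ forces $x_2(t_0)=(K-1)/K$, and one would combine this with the boundary behavior at both equilibria (exponential growth near $\frac{1}{2\pi}$ along the unstable direction, exponential approach to $\hat q$ from below) to derive a contradiction consistent with the monotonicity hypothesis.
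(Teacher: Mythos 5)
Your proposal is correct and follows essentially the same route as the paper: the local part via the (smooth) unstable manifold theorem at $\frac{1}{2\pi}$ restricted to the invariant even subspace, where the tangent direction is $\vspan\{\cos\}$ so that $q\mapsto x_1$ is a local diffeomorphism, and the global part by reparametrizing the heteroclinic by the monotone coordinate $x_1$. The ``delicate step'' you flag at the end --- upgrading the monotonicity hypothesis to $\dot x_1(t)\neq 0$ so that the inverse parametrization is genuinely $C^\infty$ --- is not addressed in the paper's own proof either (it passes directly from bijectivity of $t\mapsto x_1(t)$ to the graph conclusion), so your attempt matches, and in raising this point slightly exceeds, the paper's level of rigor.
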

\begin{proof}
By unstable manifold theorem at $\frac{1}{2\pi}$, we know that $\mathcal{H}$ is a $C^\infty$-graph over $\mathbb{R}.\cos$ 
in a neighborhood of $\frac{1}{2\pi}$, and
there is a $T_0 > - \infty$ such that $t \mapsto x_1(t)$ is increasing on $]- \infty, T_0[$. 
The first part of the proposition holds in particular for 
$\mathcal V = \{ q \in L^2 , \; 0< \int_{- \pi}^\pi q(\theta) \cos(\theta) d\theta < x_1(T_0)   \}$.
If we have $T_0 = + \infty$, then $t \mapsto x_1(t)$ is a bijection from $\mathbb R$ to its image, 
and $\mathcal H$ is globally a graph on $\vspan \{  \cos  \}$.
\end{proof}
In the following we denote the graph function by 
$ \Phi_\mathcal H : \left \{ \begin{array}{rcl}
]0,\epsilon[ & \rightarrow & \mathcal H  \\
x_1   & \mapsto & q
\end{array} \right .$  inverse of the bijection (\ref{eq:bij_x1}).

\begin{num_res} 
For all $t \in \mathbb R$, $x_1^\prime(t) > 0$. (see figure \ref{Figx1})
\end{num_res}
The property $ x^\prime_n(t) > 0$ is numerically true for any $n \geq 1$ and any time $t \in \mathbb R$ along the solution $q_h$.
All these curves have a unique inflexion point, they are convex for large negative times $t$ and concave when $t \rightarrow + \infty $. 
Their limit, the Fourier coefficients of $\hat{q}$, are given by the Bessel modified functions :
 $ \underset{t \rightarrow + \infty}{\lim} x_n(t) = \frac{I_n(2Kr)}{I_0(2Kr)}, $
so that all of them tend to $ \underset{x \rightarrow + \infty}{\lim} \frac{I_n(x)}{I_0(x)} = 1$ when $K \rightarrow + \infty$, 
but for any finite $K$ we have $  \underset{t \rightarrow + \infty}{\lim} \vert x_n(t) \vert \leq C \frac{1}{a^n} $ for some constants $C >0$, 
$a >1$ and all $n$ large enough. \\

\begin{figure}[h!tp]\label{Figx1}
\begin{center}
\psfrag{e}[B][B][1][0]{ \large {$t$}}
\psfrag{(a)}[B][B][1][0]{ \large {(a)}}
\psfrag{e log(T)}[B][B][1][0]{ \large {$x_1(t)$}}
\psfrag{K=2.0}[B][B][1][0]{  \small{$K=2.0$}}
\psfrag{K=4.0}[B][B][1][0]{  \small{$K=4.0$}}
\includegraphics[scale=0.85]{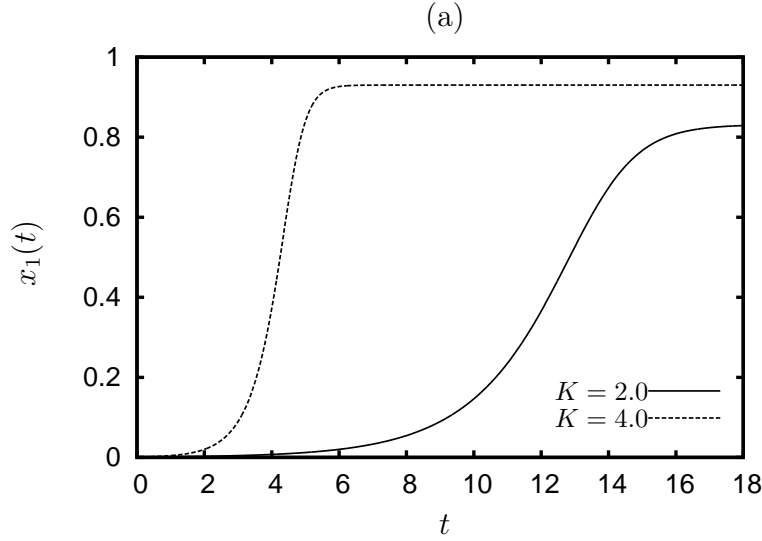}
\end{center}
\begin{center}
\vskip -4mm
\caption{The heteroclinic solutions $q_h$ is a graph on $x_1$. 
\small{Plots of $x_1(t)$ along the heteroclinic solution $q_h$ for $K=2.0$ (bottom) and $K=4.0$ (top).
These curves were obtained using the Fourier ODEs system equivalent to (\ref{eq}), truncated with $N=200$ equations, 
Euler scheme and $dt=0.00001$. Numerics were checked using $N=100$ and $dt=0.0001$ and classical Runge Kutta forth order scheme, 
the maximum difference between the two is of order $10^{-5}$.
The first Fourier coefficient $x_1$ is increasing along the heteroclinic solution. The same holds for all the $x_n(t)$, $n \geq 1$ (Figures not shown).}}
\end{center}
\end{figure}

\par Recall that $\int_{- \pi}^\pi \hat q(\theta) \cos(\theta) d\theta =  r $ (see section (\ref{sec:equilibria})), so that
 $ \underset{t \rightarrow + \infty}{\lim} \int_{- \pi}^\pi  q_h(\theta) \cos(t,\theta) d\theta = \underset{t \rightarrow + \infty}{\lim}  x_1(t) = r$.
If proved, the monotonicity of $x_1$ would imply the following. 
\begin{conj}
The graph function $\Phi_\mathcal H$ is $C^\infty([0,r])$, 
and the global attractor $W^\frac{1}{2\pi}$ is a $C^\infty$ graph on the disk $\{ (x_1, y_1), \; x_1^2 + y_1^2 < r \}$.
\end{conj}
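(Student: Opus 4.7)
The conjecture reduces, via Proposition~\ref{prop:graph} and the $\mathcal O(2)$-equivariance of~(\ref{eq}), to proving $x_1'(t) > 0$ for every $t \in \mathbb R$ along the even heteroclinic $q_h$. Granted this, Proposition~\ref{prop:graph} directly delivers a $C^\infty$ graph $\Phi_{\mathcal H}:[0,r] \to \mathcal H$, with endpoint smoothness inherited from the real analyticity of the two-dimensional unstable manifold at $\frac{1}{2\pi}$ (Proposition~\ref{Prop_VarStable_K>1}) and of the stable manifold of $\hat q$ (Theorem~\ref{th:L}). Since by Proposition~\ref{prop:GA_disk} the attractor is the union of the rotates $q_h(\cdot+\varphi)$, the assignment
\begin{equation*}
(x_1,y_1)\;\longmapsto\; \Phi_{\mathcal H}\bigl(\sqrt{x_1^2+y_1^2}\bigr)\bigl(\cdot + \arctan(y_1/x_1)\bigr)
\end{equation*}
is a smooth graph on the punctured disk, and joint smoothness at the origin descends from the $C^\omega$ structure of the two-dimensional unstable manifold at $\frac{1}{2\pi}$.

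\textbf{The core monotonicity estimate.} In the even subspace, equation~(\ref{eq_FourierODEs}) gives $x_1'(t)=\tfrac{x_1(t)}{2}(K-1-K\,x_2(t))$, so since $x_1>0$ on $q_h$ we must prove $x_2(t)<1-1/K$ for every finite $t$. The boundary values are compatible: $x_2(-\infty)=0$, while $x_1(+\infty)=r>0$ together with $x_1'(+\infty)=0$ forces $x_2(+\infty)=1-1/K$, consistent with the Bessel identity $I_0(2Kr)-I_2(2Kr)=I_1(2Kr)/(Kr)$. Near $-\infty$ the analyticity of the unstable manifold yields $x_2(t)=O(x_1(t)^2)$, so the inequality is automatic. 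Near $+\infty$, decomposing $q_h-\hat q$ on the eigenbasis of $L_{\hat q}$ from Theorem~\ref{th:L} shows that the approach is governed by the leading eigenvector $e_1$ of eigenvalue $\lambda_1<0$; computing the first two Fourier cosine modes of $e_1$ and fixing the sign of the leading coefficient (forced by $x_1\to r$ from below) should place the late-time trajectory inside $\{x_2<1-1/K\}$.

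\textbf{Main obstacle.} The hard part is ruling out an intermediate crossing $x_2(t_\ast)=1-1/K$. At such a first crossing one has $x_1'(t_\ast)=0$ and, from the $n=2$ equation of~(\ref{eq_FourierODEs}), $x_2'(t_\ast)=-2(1-1/K)+Kx_1(t_\ast)(x_1(t_\ast)-x_3(t_\ast))$; a contradiction requires $x_2'(t_\ast)\le 0$, which translates into a sign condition on $x_3(t_\ast)$ coupling into the whole Fourier cascade. My preferred strategy is to construct an invariant barrier $\{x_2<g(x_1)\}$ with $g(0)=0$, $g(r)=1-1/K$, on whose boundary the vector field of~(\ref{eq}) points strictly inward; a natural candidate is $g(x_1)=1-I_1(2Kx_1)/(Kx_1I_0(2Kx_1))$, the Bessel parametrization of the equilibrium locus, analytically continued to $x_1\in[0,r]$. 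The difficulty is that invariance of this barrier requires an infinite cascade of mode-by-mode inequalities, and neither the free-energy functional $\mathcal F$ (which controls only aggregate dissipation) nor the spectral gap of Theorem~\ref{th:L} (adapted to $\hat q$ and degenerate at $\frac{1}{2\pi}$) addresses this directly. A plausible workaround is a continuation argument in $K$: in the Ott--Antonsen limit $K\to+\infty$ the heteroclinic coincides with $x_n=x_1^n$, for which monotonicity is manifest, and one would propagate it down to $K>1$ via persistence of the heteroclinic orbit and a transversality argument on the codimension-one locus $\{x_2=1-1/K\}$.
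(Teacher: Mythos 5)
The statement you are addressing is labelled a \emph{conjecture} in the paper: the authors do not prove it. What the paper actually establishes is the conditional statement of Proposition~\ref{prop:graph} (if $t \mapsto x_1(t)$ is increasing on all of $\mathbb R$ along $q_h$, then $\mathcal H$ is globally a $C^\infty$ graph over $\vspan\{\cos\}$), and the monotonicity itself is only supported by numerics (the ``Numerical result'' $x_1'(t)>0$, Figure~\ref{Figx1}). Your reduction is therefore exactly the paper's: since in the even subspace $x_1' = \tfrac{x_1}{2}\left(K-1-Kx_2\right)$ and $x_1>0$ on $q_h$, everything hinges on showing $x_2(t)<1-1/K$ for all finite $t$. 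Up to that point your write-up is consistent with the paper (the boundary checks at $t=\pm\infty$, the value $\varphi_2(r)=1-1/K$, the quadratic tangency $x_2=O(x_1^2)$ near the incoherent state are all correct).

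The genuine gap is that the core monotonicity claim is never established, and you say so yourself: ruling out a first crossing $x_2(t_\ast)=1-1/K$ is precisely the open point, and both of your strategies are programmatic rather than proofs. The barrier $g(x_1)=1-I_1(2Kx_1)/(Kx_1I_0(2Kx_1))$ is a reasonable candidate, but invariance of $\{x_2<g(x_1)\}$ is a statement about the full infinite Fourier cascade (the sign of $x_2'$ at a crossing depends on $x_3$, whose control depends on $x_4$, and so on), and you supply no mechanism --- neither the Lyapunov functional $\mathcal F$ nor the spectral gap of Theorem~\ref{th:L} closes this hierarchy, as you note. The continuation-in-$K$ idea likewise needs a uniform transversality/persistence argument down to $K>1$ that is not sketched, and the Ott--Antonsen endpoint $K=+\infty$ is itself only a formal limit here (the paper treats the convergence of $\Phi_{\mathcal H}$ to $q_{OA}$ as a further conjecture). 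Two secondary points are also asserted rather than proven: smoothness of $\Phi_{\mathcal H}$ \emph{up to the endpoint} $x_1=r$ is delicate because $x_1'\to 0$ there, so one must check that $x_1$ remains a nondegenerate coordinate along the stable approach to $\hat q$ (e.g.\ that the first cosine mode of the leading eigenvector of $L_{\hat q}$ does not vanish); and joint $C^\infty$ regularity of the two-dimensional graph at the origin of the $(x_1,y_1)$ disk does not follow from the polar-coordinate formula with $\arctan(y_1/x_1)$ alone, though it can be extracted from the smooth local unstable manifold of Proposition~\ref{Prop_VarStable_K>1}. In short: your proposal correctly identifies the reduction the paper makes, but it does not prove the conjecture, and the paper does not either.
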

\medskip

\par With the notations of proposition \ref{prop:graph} above, 
there are functions $\varphi_n \in C^{\infty}$ such that for all $x_1 \in ]0, \epsilon[$, we have 
 $ x_n(t)= \int_{- \pi}^\pi q_h(t,\theta) \cos(n \theta) d\theta = \varphi_n(x_1(t)).$
So for $t \in ]- \infty, T_0[$ equation (\ref{eq}) is equivalent to 
\begin{equation} x_1^{\prime} = - \frac12 x_1 + \frac{K}{2} x_1 (1- \varphi_2 (x_1)) \; \; \textnormal{and } \, \; 
x_n = \varphi_n (x_1), \, \; \forall n \geq 2.
\end{equation}

\begin{prop}
The functions $\varphi_n$ are $C^{\infty}$ in a neighborhood of $0$, 
their Taylor series are of the form $$ \varphi_{n}(x) = x^{n} \sum_{p\geq 0} \alpha_{n,2p} x^{2p} ,$$ 
where the coefficients $\alpha_{n,2p}$ can be computed recursively.
\end{prop}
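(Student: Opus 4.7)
First I would extract smoothness from the unstable manifold theorem used in Proposition~\ref{Prop_VarStable_K>1}: since the even subspace is invariant under \eqref{eq} and contains $1/(2\pi)$, the intersection of $W^u(1/(2\pi))$ with it is a one-dimensional $C^\infty$ submanifold tangent to $\vspan\{\cos\}$ at $1/(2\pi)$. Parametrizing this graph by $x_1$ exhibits every higher Fourier coefficient as a $C^\infty$ function $\varphi_n$ of $x_1$ in a neighborhood of $0$.

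Substituting $x_n=\varphi_n(x_1)$ into the even reduction of the Fourier system \eqref{eq_FourierODEs} then yields the invariance equation
\begin{equation}
\label{eq:inv_phi}
\varphi_n'(x)\,h(x) + \tfrac{n^2}{2}\varphi_n(x) = \tfrac{Kn}{2}\,x\bigl[\varphi_{n-1}(x) - \varphi_{n+1}(x)\bigr], \qquad n\geq 1,
\end{equation}
with $\varphi_0\equiv 1$, $\varphi_1(x)=x$ and $h(x)=\tfrac{K-1}{2}x-\tfrac{K}{2}x\varphi_2(x)$. Expanding $\varphi_n(x)=\sum_{k\geq 0}c_{n,k}x^k$ and collecting the coefficient of $x^m$ in \eqref{eq:inv_phi} produces an identity of the form
\begin{equation}
\label{eq:inv_coef}
\tfrac{(K-1)m+n^2}{2}\, c_{n,m} = \tfrac{Kn}{2}\bigl(c_{n-1,m-1}-c_{n+1,m-1}\bigr) + R_{n,m},
\end{equation}
where $R_{n,m}$ is polynomial in the coefficients $c_{n,k}$ with $k<m$ and $c_{2,k}$ with $k<m-1$. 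For $K>1$ and $n\geq 1$ the prefactor on the left is strictly positive, so \eqref{eq:inv_coef} determines $c_{n,m}$ in terms of strictly lower-order data, which is the skeleton of the recursive computation.

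The parity $\varphi_n(-x)=(-1)^n\varphi_n(x)$ I would derive from the symmetry $\theta\mapsto\theta+\pi$ of \eqref{eq}, which fixes $1/(2\pi)$, preserves the even subspace and acts on Fourier modes by $x_k\mapsto(-1)^k x_k$: it maps the local graph to itself while reversing $x_1$, forcing the stated parity and hence $c_{n,k}=0$ unless $k\equiv n\pmod 2$. To obtain the factor $x^n$, I would then prove by strong induction on $m$ that $c_{n,m}=0$ for every $n>m$: the base case $m=0$ is $\varphi_n(0)=0$, and in the inductive step, $n>m$ implies $n\pm 1>m-1$ and $n>k$ for every $k<m$ involved in $R_{n,m}$, so the inductive hypothesis makes the right-hand side of \eqref{eq:inv_coef} vanish and the positive prefactor yields $c_{n,m}=0$. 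Combined with the parity, this gives exactly $\varphi_n(x)=x^n\sum_{p\geq 0}\alpha_{n,2p}x^{2p}$ with $\alpha_{n,2p}:=c_{n,n+2p}$. Specializing \eqref{eq:inv_coef} to $m=n+2p$ then expresses $\alpha_{n,2p}$ in terms of $\alpha_{n-1,2p}$, $\alpha_{n+1,2p-2}$ and coefficients $\alpha_{k,2q}$ with $q<p$, which provides the recursion; the leading diagonal $p=0$ reduces to $\alpha_{n,0}=\tfrac{K}{K+n-1}\alpha_{n-1,0}$ with $\alpha_{1,0}=1$, and each subsequent diagonal is computed by increasing $n$, using $\alpha_{1,2p}=0$ for $p\geq 1$.

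The hard part will be the vanishing step. A priori the forward coupling $\varphi_{n+1}$ in \eqref{eq:inv_phi} makes the infinite system look non-triangular, which threatens a straightforward induction; what ultimately saves the argument is that this forward coupling enters \eqref{eq:inv_coef} only at the shifted index $m-1$, so that the strong induction on $m$ closes despite the bidirectional coupling between the $\varphi_n$.
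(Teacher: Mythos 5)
Your proposal is correct and takes essentially the same route as the paper: substitute the Taylor expansions of the $\varphi_n$ into the Fourier system along the heteroclinic graph, obtain the coefficient recursion (your prefactor $\tfrac{(K-1)m+n^2}{2}$ with $m=n+2p$ is exactly the paper's no-resonance condition $(n+2p)K+n^2-n-2p\neq 0$ for $K>1$, and your specialized recursion coincides with the paper's relations), and solve it by a well-founded induction. The only cosmetic differences are that the paper orders the coefficients lexicographically in $(p,n)$ where you induct on the total degree $m=n+2p$, and it obtains the vanishing of the low-order and odd coefficients directly from the recursion where you add the $\theta\mapsto\theta+\pi$ symmetry argument.
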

\begin{proof}
We use the notation $\varphi_1(x) = x$, so that $\alpha_{1,0} = 1$ and $\alpha_{1,p}= 0$ for all $p \geq 1$.
We use the Fourier ODE system 
$x_n^{\prime} = -\frac{n^2}{2} x_n + \frac{Kn}{2} x_1 ( x_{n-1} - x_{n+1}) \;  (\forall n \geq 1) $,
equivalent to equation (\ref{eq}), and the Taylor expansion of the $\varphi_n$ to obtain relations between the $\alpha_{n,p}$.
First one finds that $p<n$ implies $\varphi_n^{(p)} = 0$, so that $\varphi_n(x) = x^{n} \sum a_{n,p} x^{2p}$, and then all odd 
coefficients $a_{n, 2p+1}$ are zero, so that the series has the form $ \varphi_{n}(x) = x^{n} \sum_{p\geq 0} \alpha_{n,2p} x^{2p}$.

Then one finds 
$\alpha_{2,0}= \frac{K}{K+1} \; \textnormal{and } \;
 \alpha_{n,0} = \frac{K}{K + (n-1)} \alpha_{n-1,0} = \Pi_{j=2}^{n-1} \frac{K}{K + j} $
and the recurrence relations 
\begin{equation}\label{eq_rec1}  ((1+p) K +(1-p)) \alpha_{2,2p} = 
\frac{K}{2} \left ( \sum_{i+j=p-1, i\geq0, j\geq0} (2+2j) \alpha_{2,2j} \alpha_{2,2i} \right )
- K \alpha_{3, 2(p-1)}  \end{equation}
for $n=2$ and all $p \geq 1 $ and
\begin{equation}\label{eq_rec2}
\begin{split} 
 \left( (n+2p)K + (n^{2} -n - 2p) \right) \alpha_{n,2p} &=  \\
K\left( \sum_{i+j=p-1} (n+2j)  \alpha_{n,2j} \alpha_{2,2i} \right)&
 + Kn ( \alpha_{n-1,2p} - \alpha_{n+1, 2(p-1)} ) 
 \end{split}
\end{equation}

for all $n \geq 3$ and $p \geq 1 $.
The no-resonance conditions 
$ (n+ 2p) +  \frac{n^{2} - n -2p}{K} \neq 0 $ are satisfied as soon as $K > 1$.

We introduce the order $ \ll $ on $\mathbb{N}^2$ : 
we have $(m_1, q_1) \ll (m_2, q_2)$ iff ($q_1 < q_2$ or ($q_1 = q_2$ and $m_1 < m_2$)). This is the lexicographical order for 
the inverted couples $(q_i, m_i)$. One can check that for that order, the recursion relation defining $\alpha_{n,2p}$ 
uses only $\alpha_{m,2q}$ with $(m,2q) \ll (n ,2p)$. 
There are no infinite strictly decreasing sequences in $\mathbb{N}^{2}$ for the order $\ll$, 
and so one can solve recursively the relations defining the $\alpha_{n,2p}$. 
\end{proof}

\par Consider $q_h$ an heteroclinic solution of (\ref{eq}) and $x_n(t)$ its Fourier coefficients. 
Since $t \mapsto x_1(t)$ is increasing and bijective $\mathbb R \rightarrow ]0,r[$, we can define 
$X_n(x)$ for all $x \in ]0,r[$ by $X_n( x_1(t) ) = x_n\left( x_1^{-1}( x_1(t) ) \right) $ for all $t \in \mathbb R$.

\begin{num_ress}
For any $K$ and any $n$, the Taylor series defining $\varphi_n$ has a positive radius, which is $r$ 
(see figure \ref{Fig2}).
For any $K$ and any $n$, the sum of the Taylor series defining $\varphi_n$ equals $X_n$ on $]0,r[$ (see figure \ref{Fig4}).
\end{num_ress}
For a series $\sum_1^\infty a_p x^p$, the convergence radius $R$ is given by 
$R^{-1} = \underset{p \rightarrow + \infty}{\limsup} \vert a_p \vert^{\frac 1p}$.
In figure \ref{Fig2}, we see that 
$r_{n,p} = \log ( \left( \alpha_{n,2p} \right)^{\frac{1}{p+1}}) \underset{p \rightarrow + \infty}{\longrightarrow}~l~>~0$, 
where $l> 0$ is a real number depending on $K$ but not on $n$. That is to say all series $\sum_p \alpha_{n,2p}x^{2p}$ have a positive radius
$R = e^{- l/2}$,
which depends on $K$ but does not depend on $n$. We also observe that for all $K$, the limit is $l \approx -2 \log(r)$, 
suggesting that the radius $R= r$. Since the Taylor series of $\varphi_n$ are $x^n \sum_p \alpha_{n,2p}x^{2p}$ the same results hold for them. 
We have also checked that d'Alembert's ratio test is numerically consistent with this: 
for all $n \geq 2$ and $K > 1$ we have 
$\frac{\alpha_{n,2p+2}}{\alpha_{n,2p}} \underset{p \rightarrow + \infty}{\longrightarrow} \tilde l$, 
and $\tilde l \approx \frac{1}{r^2}$  suggesting that $R=r$ too (figures not shown).
Figure  \ref{Fig4} shows that the troncated sum of the Taylor series $\varphi_n \approx \sum_0^N \alpha_{n,2p} x^{n + 2p}$ 
equals $X_n$ on $]0,r[$. Notice that the verification $\sum_{p} \alpha_{n,2p} x^{n + 2p} = X_n(x)$ is necessary since some functions, 
like $x \mapsto e^{- \frac{1}{x^2}} $, have converging Taylor series without being analytical. 

\begin{figure}[hhh!tp]
\begin{center}
\psfrag{(a)}[B][B][1][0]{ \small {(a)}}
\psfrag{(b)}[B][B][1][0]{ \small {(b)}}
\psfrag{p}[B][B][1][0]{ \tiny {p}}
\psfrag{rp}[B][B][1][0]{ \small {$r_{n,p}$}}
\includegraphics[scale=0.55]{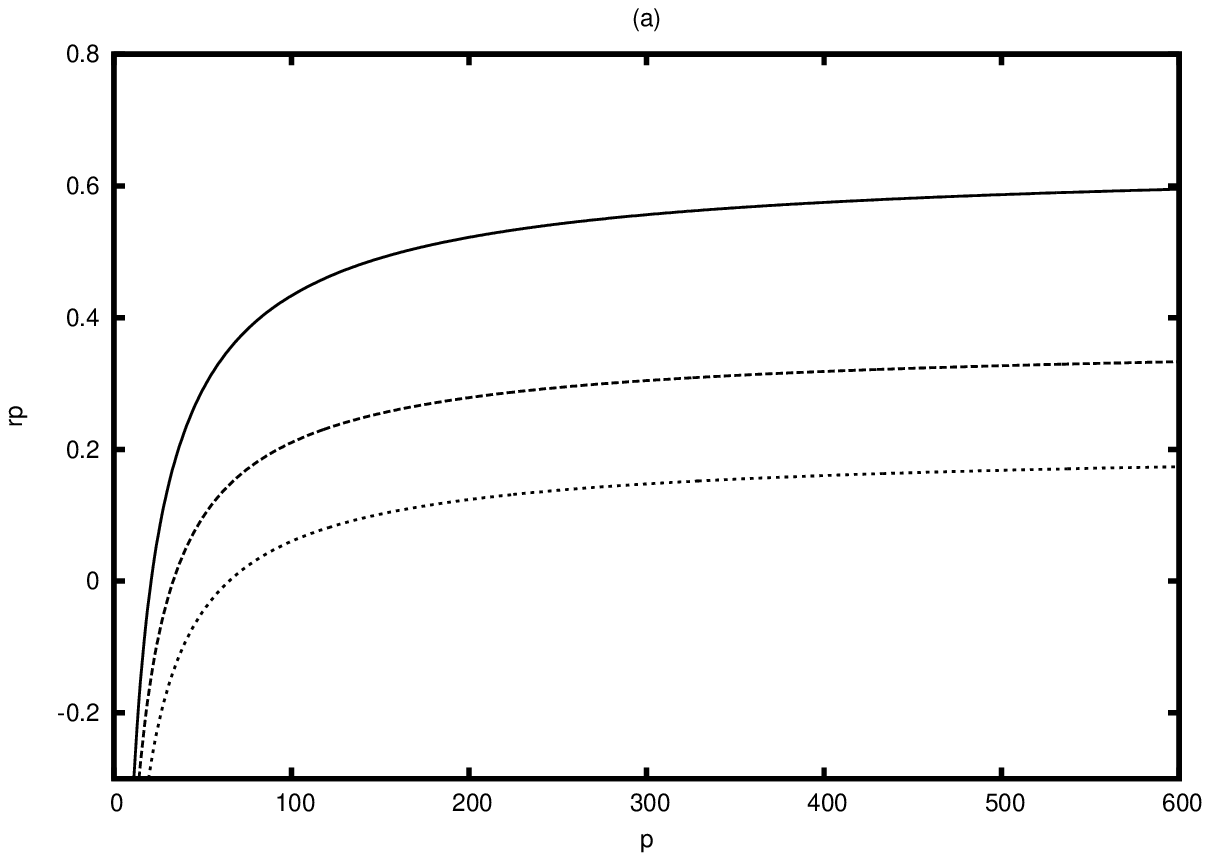}
\psfrag{(c)}[B][B][1][0]{ \small {(b)}}
\psfrag{p}[B][B][1][0]{ \tiny {p}}
\psfrag{rp}[B][B][1][0]{ \small {$r_{n,p}$}}
\includegraphics[scale=0.55]{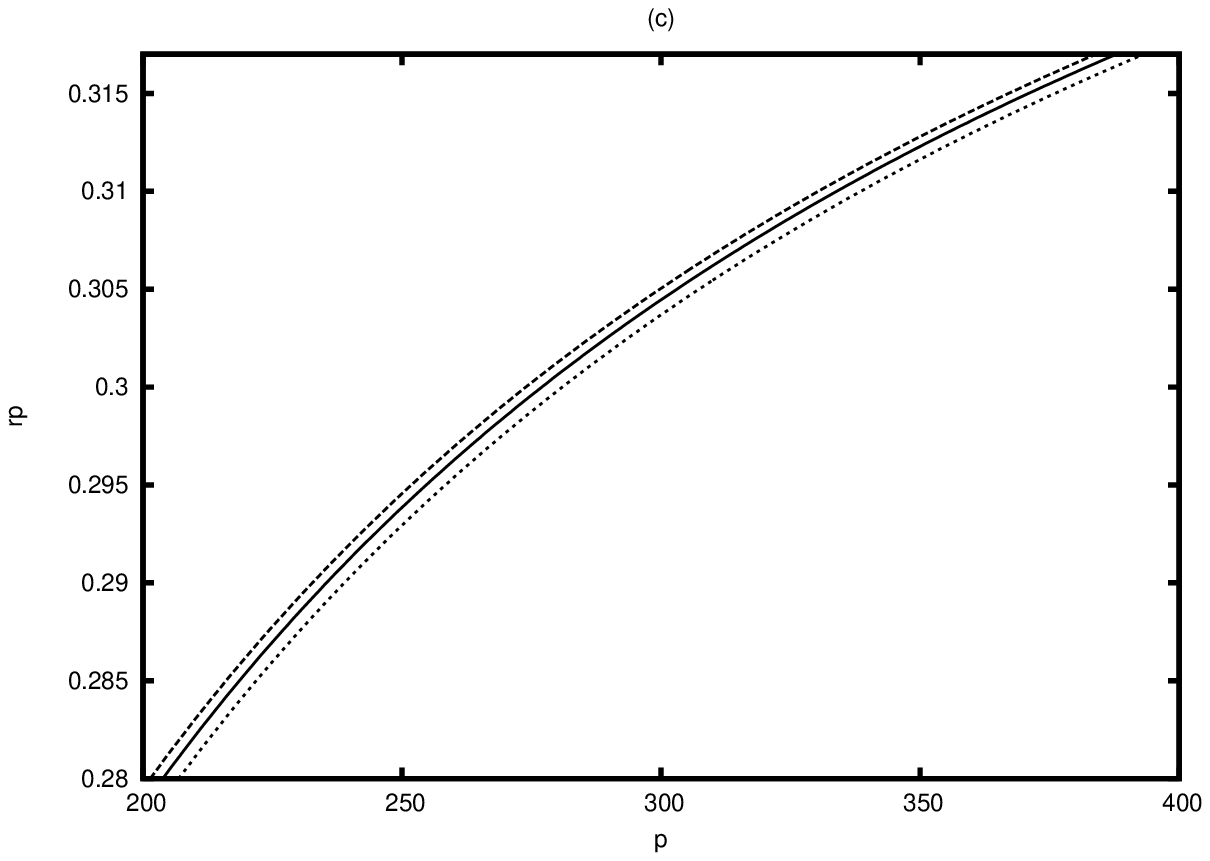}
\end{center}
\vskip -2mm
\caption{Positive convergence radius of the series $\sum_{p \geq 0}\alpha_{n,2p}x^{2p + n}$.
\small{Plots of $r_{n,p}= \frac{1}{p+1} \log(\alpha_{n,2p})$ for different $n$ and different values of $K$ :
(a) : for $n=2$, and from top to bottom $K=1.5$, $K=2.0$ $K=3.0$,
(b) : for $K=2.0$, and from top to bottom $a=3$, $a=2$ $a=4$. 
Notice that (a) and (b) show $r_{n,p}$ in different windows.
The radii of the series are positive since the sequences $r_{n,p}$ converge to positive values, 
with $\lim_{p \rightarrow + \infty} r_{n,p} = -2 \log (r)$. 
For any value of $K$, the radius $R(\varphi_n)$ does not seem to depend on $n$.
The radius increases when $K$ increases, but the convergence 
$r_{n,p} \stackrel{p \rightarrow  \infty}{\longrightarrow} -2 \log (r)$ is slower for large $K$. }
}
\label{Fig2}
\end{figure}

\begin{figure}[hhh!tp]
\begin{center}
\psfrag{(a)}[B][B][1][0]{ {(a)}}
\psfrag{x1}[B][B][1][0]{ {$x_1$}}
\psfrag{x2=phi2}[B][B][1][0]{ {$x_2=\varphi_2(x_1)$}}
\psfrag{x2(x1)}[B][B][1][0]{ \tiny{$x_2$}}
\psfrag{phi2}[B][B][1][0]{  \tiny{$\varphi_2$} }
\includegraphics[scale=1.2]{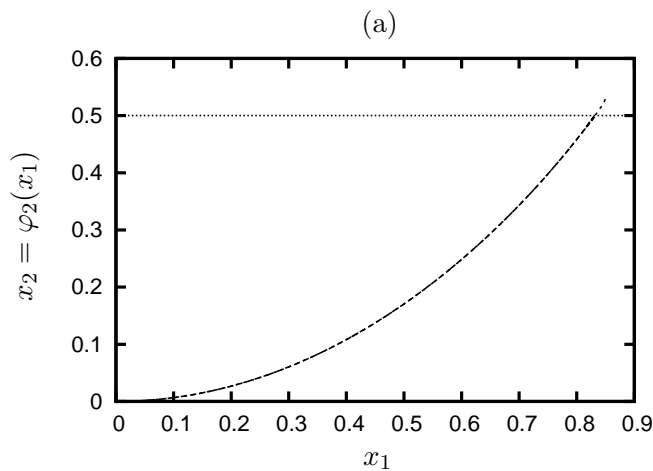} 
\end{center}
\vskip -3mm
\caption{The sum $\sum_{p}\alpha_{n,2p}x^{n+2p}$ coincides with the function $x_n~=~x_n(x_1)$.
\small{Plots of functions $y=\varphi_2(x)$ and $y(x)=x_2(x_1^{-1}(x))$. The sum of the series was computed with $N=200$ terms, 
and the drawn curve $ \{(x,y(x)), \; x~\in~[0,r]\}=\{(x_1(t), x_2(t)), \; t \in \mathbb{R} \}$ was computed with the Fourier ODEs system with $N=200$ equations, Euler scheme and $dt=0.00001$.
The numerical difference between the two is less than $10^{-5}$ for $x_1 \in [0,r]$.
The equilibria $\hat{q}$ is at $x=r$, characterized by $\varphi_2(r)= 1 -\frac{1}{K}= \frac12$ here.
The sum of the Taylor series coincides with the function $x_n = x_n(x_1)$ along the heteroclinic solution $q_h$ for all $n \geq 2$ .
In particular, this heteroclinic solution seen has a graph on $x_1$ is an analytic function on $[0,r]$.}
}
\label{Fig4}
\end{figure}

{
The previous numerical observations suggest the following.
The graph function satisfies
$$\Phi_\mathcal H(x) = \frac{1}{2\pi} + \frac 1\pi \sum_{n \geq 1} \varphi_n(x) \cos(n \cdot)
= \frac{1}{2\pi} + \frac 1\pi \sum_{n \geq 1} \left ( x^n \sum_{p=1}^{+ \infty} \alpha_{n,2p} x^{2p}  \right ) \cos(n \cdot),$$ 
and it is analytical on $[0,r[$ in the sense that all $\varphi_n$ are.
\begin{conj}
\par The global attractor $\mathcal H$ is an analytical graph on the disk $\{ (x_1,y_1), \; x_1^2 + y_1^2 < r  \}$.
\end{conj}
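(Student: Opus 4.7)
The plan is to exploit the $\mathcal O(2)$ symmetry to reduce to the even subspace: by Proposition~\ref{prop:GA_disk} the full attractor is obtained from the even heteroclinic $\mathcal H$ by rotation, and if $(x_1,y_1) = \rho e^{i\phi}$ the corresponding point satisfies $z_n = x_n+iy_n = e^{in\phi} \varphi_n(\rho) = z_1^n \sum_{p \ge 0}\alpha_{n,2p}(x_1^2+y_1^2)^p$. Consequently real-analyticity on the disk $\{x_1^2+y_1^2 < r^2\}$ follows from real-analyticity of the even graph function $\Phi_{\mathcal H}$ up to $x_1 = r$, and the proof then splits into three tasks: (i) extending the local graph of Proposition~\ref{prop:graph} globally, which amounts to monotonicity of $x_1(t)$ along $q_h$; (ii) establishing a positive radius of real-analyticity at $x_1 = 0$; and (iii) pushing that radius up to $r$ through estimates on the recursion \eqref{eq_rec1}--\eqref{eq_rec2}.

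For (i) the relevant ODE along $\mathcal H$ reads $x_1' = \tfrac12 x_1\bigl(K-1 - K\,X_2(x_1)\bigr)$ with $X_2(r) = 1 - 1/K$; the local unstable manifold analysis of Proposition~\ref{Prop_VarStable_K>1} gives $x_1'>0$ near $\frac{1}{2\pi}$, and the limit $x_1(t)\to r$ shows the field is positive at the right endpoint. One must therefore rule out an interior turning point $X_2(x_1)=1-1/K$ at some $x_1 < r$; a maximum principle argument on the Fourier ODEs, combined with the classification of even equilibria in Section~\ref{sec:equilibria} (a zero of $x_1'$ along $q_h$ would amount to a new stationary point), should exclude this. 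For (ii), equation~\eqref{eq} in Fourier modes is polynomial, hence real-analytic on weighted $\ell^1$ sequence spaces such as the Gevrey spaces $\mathcal G_a$ of Theorem~\ref{th:G1}; an analytic version of the unstable manifold theorem then yields a real-analytic local graph at $\frac{1}{2\pi}$, whose Taylor coefficients must agree with the $\alpha_{n,2p}$ constructed above by uniqueness of formal power series solutions of \eqref{eq_rec1}--\eqref{eq_rec2}.

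The bulk of the work is in (iii). I would set up a majorant/generating-function argument: introduce
\begin{equation*}
\Psi(\zeta,\xi) \; = \; \sum_{n \ge 1}\sum_{p \ge 0} \alpha_{n,2p}\,\zeta^n \xi^{2p},
\end{equation*}
rewrite the bilinear recurrence as a fixed-point equation for $\Psi$ in a Banach algebra of analytic functions on a bidisk, and solve by contraction. The no-resonance lower bound $(n+2p)K + n^2-n-2p \ge n(K-1) > 0$ (for $K>1$, used already in the proof of the recursion) prevents small-divisor issues and provides uniform coercivity, giving convergence on some bidisk $\{|\zeta| < R_0,\,|\xi|<R_0\}$. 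Together with (i) and (ii), analytic continuation along the heteroclinic then identifies the sum of the series with $\Phi_{\mathcal H}$ on the corresponding real interval.

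The principal obstacle is showing that the optimal radius is exactly $r$, not merely some smaller value $R_0 < r$ produced by a crude contraction estimate. This is a natural-boundary phenomenon at $\hat q$: by Theorem~\ref{th:L} the linearization $L_{\hat q}$ has a simple zero eigenvalue with eigenvector $\partial_\theta \hat q$ tangent to $\mathcal C$, so $W^u(\frac{1}{2\pi})$ meets $\mathcal C$ tangentially at $x_1 = r$ and one expects an algebraic singularity of $\Phi_{\mathcal H}$ there, consistent with the numerical observation $r_{n,p}\to -2\log r$. Upgrading the majorant estimate from $R_0$ to the sharp radius $r$ would require exploiting the global dissipative structure of \eqref{eq}, presumably the Lyapunov functional of Section~\ref{sec:equilibria} together with the normal hyperbolicity of $\mathcal C$ from Corollary~\ref{Corollaire_Normal_Hyperbolicity}, and constitutes the most delicate step of the argument.
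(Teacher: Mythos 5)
This statement is stated in the paper as a \emph{conjecture}: the paper itself proves only partial results (the local graph property of Proposition \ref{prop:graph}, conditional globality under monotonicity of $x_1(t)$, and the recursive computability of the coefficients $\alpha_{n,2p}$ from \eqref{eq_rec1}--\eqref{eq_rec2}), and supports the analyticity claim by numerics. Your proposal is a strategy, not a proof, and the places where it stops are exactly the places the paper could not close. Concretely: your step (i) rests on ``a maximum principle argument \dots should exclude this'', and the supporting remark that a zero of $x_1'$ along $q_h$ ``would amount to a new stationary point'' is false as stated --- $x_1'(t_0)=0$ at a single time only says the first mode is momentarily stationary, while the higher modes $x_n$ may still evolve, so no new equilibrium is produced. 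Monotonicity of $x_1$ along the heteroclinic is precisely the unproven ``Numerical result'' of the paper, and you have not supplied an argument for it.

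Your step (iii) is the heart of the matter and you concede it: a majorant/contraction treatment of the recursion (the no-resonance bound $(n+2p)K+n^2-n-2p>0$ is indeed available) can at best give some radius $R_0$, and the passage to the sharp radius $r$ --- or, what would suffice, analyticity of the graph function at every interior point $x_1\in\,]0,r[$ --- is left as ``the most delicate step''. Two further points are glossed over. First, even granting convergence of the series, one must identify its sum with $\Phi_{\mathcal H}$; the paper explicitly warns (via the $e^{-1/x^2}$ example) that convergence of the Taylor series does not give this, and your appeal to ``analytic continuation along the heteroclinic'' presupposes the analyticity of $\Phi_{\mathcal H}$ away from $0$ that is to be proved, so the argument is circular there. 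Second, step (ii) (an analytic local unstable manifold at $\frac{1}{2\pi}$ in a Gevrey or weighted sequence space, with Taylor coefficients matching the $\alpha_{n,2p}$ by uniqueness of formal solutions) is plausible but is only sketched. In sum, the proposal reproduces the paper's partial results and its heuristics (including the expected singularity structure at $\hat q$ related to the zero eigenvalue of $L_{\hat q}$), but it does not prove the statement; the decisive gaps --- monotonicity of $x_1$, the radius up to $r$, and the identification of the series with the graph --- remain open, as they do in the paper.
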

}
\par These Taylor series can also be used to estimate the eigenvalues at 
the incoherent equilibrium $\frac{1}{2\pi}$ and at the synchronized equilibrium $\hat{q}$.
 At $\frac{1}{2\pi}$ one can analytically compute the eigenvalues, and we have checked that 
the numerical estimates given by the Taylor series are accurate. At $\hat{q}$, the eigenvalues can also be computed 
by linearizing the truncated Fourier ODE system (\ref{eq_FourierODEs}), and we have checked that the two methods agree there too. 
\par The recurrences equations (\ref{eq_rec1}) and (\ref{eq_rec2}) can actually be used to compute recursively the 
Taylor coefficients $\alpha_{n,2p}$, and so one can compute directly the heteroclinic solution 
$$ q_h(x_1) = \frac{1}{2\pi} + \frac{1}{\pi} \sum \varphi_n(x_1) \cos(n\theta) $$ 
and the whole global attractor with arbitrary precision without discretizing the partial differential equation (\ref{eq}) 
or the equivalent infinite Fourier system (\ref{eq_FourierODEs}).

\medskip

\subsubsection{Convergence to Ott-Antonsen Ansatz for large $K$}

\begin{prop}\label{prop:cv_form}
Considering the Taylor series $\varphi_n (x)= x^n \sum_{p \geq 0} \alpha_{n,2p} x^{2p} $,
we have $\alpha_{n,0} \underset{K \rightarrow + \infty}{\longrightarrow} 1$ and 
$\alpha_{n,2p} \underset{K \rightarrow + \infty}{\longrightarrow} 0$ for all $n \geq 2$ and all $p \geq 1$. 
\end{prop}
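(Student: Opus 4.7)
The plan is to proceed by induction on the lexicographic order $\ll$ introduced in the previous proposition. The strategy in every step is to divide the recurrence $(\ref{eq_rec1})$ or $(\ref{eq_rec2})$ by $K$, so that the coefficient of $\alpha_{n,2p}$ on the left-hand side becomes $(n+2p) + O(1/K)$ (or $(1+p) + O(1/K)$ when $n=2$), staying bounded away from $0$ uniformly for $K$ large, and to read off $\lim_{K \to \infty} \alpha_{n,2p}$ from the limit of the normalized right-hand side. The base row $p=0$ is immediate: the recurrence $\alpha_{n,0} = \frac{K}{K+n-1}\alpha_{n-1,0}$ with $\alpha_{2,0} = K/(K+1)$ telescopes to a finite product whose every factor tends to $1$, so $\alpha_{n,0} \to 1$.

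For the inductive step at fixed $p \geq 1$, I would further induct on $n \geq 2$. When $p \geq 2$ this is routine: every term in the convolution sum $\sum_{i+j=p-1}$ carries a factor $\alpha_{m,2q}$ with $1 \leq q < p$, hence vanishes by the outer induction hypothesis, while the additional contributions $\alpha_{n-1,2p}$ and $\alpha_{n+1,2(p-1)}$ vanish by the inner and outer induction respectively. The normalized right-hand side therefore tends to $0$, forcing $\alpha_{n,2p}^\infty = 0$; the same reasoning treats $n=2$ via $(\ref{eq_rec1})$.

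The genuinely delicate case, and the main obstacle of the argument, is $p=1$. Here the sum $\sum_{i+j=0}$ collapses to the single term $n\,\alpha_{n,0}\alpha_{2,0}$, whose limit is $n$ (not $0$); similarly the contribution $-n\,\alpha_{n+1,0}$ has limit $-n$. Neither term vanishes individually, but they cancel exactly: $n\,\alpha_{n,0}\alpha_{2,0} - n\,\alpha_{n+1,0} \to 0$ as $K \to \infty$. Using this cancellation, the limit of $(\ref{eq_rec2})/K$ yields $(n+2)\alpha_{n,2}^\infty = n\,\alpha_{n-1,2}^\infty$ for $n \geq 3$. Combined with $(\ref{eq_rec1})$ at $(n,p)=(2,1)$, which gives the explicit identity $\alpha_{2,2} = \tfrac{1}{2}(\alpha_{2,0}^2 - \alpha_{3,0})$ and hence $\alpha_{2,2}^\infty = 0$, an induction on $n$ yields $\alpha_{n,2}^\infty = 0$ for all $n \geq 2$ and closes the argument. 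The non-routine point is thus identifying and using this algebraic cancellation in the row $p=1$, where the right-hand side contains $O(K)$ terms that do not individually disappear in the limit.
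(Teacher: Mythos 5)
Your proposal is correct and follows essentially the same route as the paper: the telescoping product for the row $p=0$, the row $p=1$ handled by the limit recursion $(n+2)\alpha_{n,2}^\infty=n\,\alpha_{n-1,2}^\infty$ coming from the cancellation $\alpha_{n,0}\alpha_{2,0}-\alpha_{n+1,0}\to 0$ together with the directly checked seed $\alpha_{2,2}=\tfrac12(\alpha_{2,0}^2-\alpha_{3,0})\to 0$, and then induction along the order $\ll$ for $p\geq 2$, treating $n=2$ separately via \eqref{eq_rec1}. Your explicit formula for $\alpha_{2,2}$ simply makes precise the paper's ``one can directly check'' step; otherwise the arguments coincide.
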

\begin{proof}
The formulas $\alpha_{n,0} = \Pi_{j=2}^{n-1} \frac{K}{K + j} $ found above show that $\alpha_{n,0} \rightarrow 1$ when $K \rightarrow + \infty$.

When $p=1$ we have 
\begin{equation}  \left ( \left (1+\frac{2}{n} \right ) + \frac{n^{2} -n - 2}{Kn} \right ) \alpha_{n,2} = 
 \alpha_{n,0} \alpha_{2,0}  +  ( \alpha_{n-1,2} - \alpha_{n+1, 0} ) \end{equation}
and knowing $\alpha_{n,0} \rightarrow 1$, the recursion hypothesis $\alpha_{n,2} \rightarrow 0$ for all $n \leq N$ implies $\alpha_{N+1,2} \rightarrow 0$.
One can directly check that $\alpha_{2,2} \rightarrow 0$, and so deduce $\alpha_{n,2} \underset{K \rightarrow + \infty}{\longrightarrow} 0$ for all $n \geq2$. 

When $p \geq 2$ (and $n\geq 2)$, using what we know about the case $p=0$ and $p=1$, and the relation 
\begin{equation}
\begin{split}
  \left ( \left (1+2\frac{p}{n} \right ) + \frac{n^{2} -n - 2p}{Kn} \right ) \alpha_{n,2p} & = 
\left ( \sum_{i+j=p-1} \left (1+2\frac{j}{n} \right ) \alpha_{n,2j} \alpha_{2,2i} \right ) \\ 
 &   +  \left ( \alpha_{n-1,2p} - \alpha_{n+1, 2(p-1)} \right ), 
\end{split}\end{equation}
we see that the recursion hypothesis 
$$\forall (m,2q) \ll (n,2p) \, \; \textnormal{with } \; q\geq 1, \; \alpha_{m,2q} \underset{K \rightarrow + \infty}{\longrightarrow} 0$$ implies 
$\alpha_{n,2p} \underset{K \rightarrow + \infty}{\longrightarrow} 0$ and so this is true for all $p \geq 2$ and $n \geq 2$ by induction principle.
\end{proof}

\par In the no-diffusion case of equation (\ref{eq}), the graph graph $\mathcal H_{OA} = \{ q_{OA}(x), \; x \in [0,1[ \} \subset L^2$,
 where $q_{OA}(x) = \frac{1}{2\pi} + \frac{1}{\pi} \sum_{1}^{+\infty} x^n \cos(n\theta) \;  (x \in [0,1[), $ 
is an invariant manifold for the dynamics (see \cite{MR2464324}).
In that case, the two-dimensional invariant manifold of Ott and Antonsen is obtained from $q_{OA}$ by rotations 
$$ \mathcal M_{OA} = \{ \frac{1}{2\pi} + \frac{1}{\pi} \sum_{1}^{+\infty} x^n \cos(n(\theta + \alpha)), \; x \in [0,1[,\; \alpha \in [0, 2\pi]  \} \subset L^2.$$
Proposition \ref{prop:cv_form} implies that $\varphi_n(x) \underset{K \rightarrow + \infty}{\longrightarrow} x^n $ 
and that $\Phi_\mathcal H \underset{K \rightarrow + \infty}{\longrightarrow} q_{OA}$ formally. 
\begin{prop}
Assume that for all $K$ large enough, we have $x_1^\prime(t) > 0$ for all $t \in ]- \infty , + \infty[$.
Consider $P_N : \, L^2 \rightarrow \vspan \{ \cos(k \theta), \, k \leq N \}$ the natural $L^2$-orthogonal projection. 
For all $z \in ]0,1[$ and $N < + \infty$, 
we have $$ P_N \, \Phi_\mathcal H(z) \underset{K \rightarrow + \infty}{\longrightarrow}  P_N  \,q_{OA}(z)  
\;\;\; \text{ with convergence in } \mathcal G_a \text{ for any } a>0. $$
\end{prop}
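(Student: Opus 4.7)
The plan is to reduce the $\mathcal G_a$ convergence to a coefficientwise statement and then to upgrade the termwise convergence of Taylor coefficients given by Proposition \ref{prop:cv_form} via a uniform tail estimate. For the first reduction, observe that $P_N$ takes values in the finite-dimensional subspace $V_N = \vspan\{\cos(k\theta):0\leq k\leq N\}$ of $L^2$, on which all norms are equivalent; in particular the $\mathcal G_a$-norm, for any $a>0$, is equivalent to coordinatewise convergence of the Fourier cosine coefficients. Since $P_N\Phi_{\mathcal H}(z) = \frac{1}{2\pi} + \frac{1}{\pi}\sum_{n=1}^{N}\varphi_n^K(z)\cos(n\theta)$ and $P_Nq_{OA}(z) = \frac{1}{2\pi} + \frac{1}{\pi}\sum_{n=1}^{N} z^n\cos(n\theta)$, the claim reduces to showing $\varphi_n^K(z) \to z^n$ as $K \to \infty$ for each fixed $n \in \{1,\dots,N\}$ and $z\in(0,1)$.

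Next, I would exploit the Taylor representation $\varphi_n^K(z) = z^n\sum_{p\geq 0}\alpha_{n,2p}^K z^{2p}$ given in the previous proposition. For an auxiliary integer $P\geq 1$ split
\begin{equation*}
\varphi_n^K(z)-z^n \, = \, z^n(\alpha_{n,0}^K-1) \, + \, z^n\sum_{p=1}^P\alpha_{n,2p}^K z^{2p} \, + \, z^n\sum_{p>P}\alpha_{n,2p}^K z^{2p}.
\end{equation*}
For any fixed $P$, Proposition \ref{prop:cv_form} implies that the first two summands tend to $0$ as $K\to\infty$. The whole argument therefore reduces to a uniform-in-$K$ tail bound: given $\epsilon>0$ one must exhibit $P$ and $K_0$ with $\left|z^n\sum_{p>P}\alpha_{n,2p}^K z^{2p}\right|<\epsilon$ for every $K\geq K_0$.

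The principal obstacle is precisely this tail estimate. My preferred route is complex-analytic: $\varphi_n^K(z)$ is the $n$-th Fourier cosine coefficient of the probability density $q_h^K(t)$ taken at the unique time $t$ with $x_1(t)=z$, so $|\varphi_n^K(z)|\leq 1$ on $[0,r_K)$, and $r_K\to 1$ as $K\to\infty$ by \eqref{eq:fixedpoint}. If one can promote this real-axis bound to a $K$-uniform sup bound on a complex disc of radius close to $r_K$---the convergence-radius conjecture suggested by Figure~\ref{Fig2} is consistent with such an extension, and one can try to derive it by complexifying the Fourier ODE system \eqref{eq_FourierODEs} near the incoherent fixed point---then a Cauchy estimate on a circle of radius $\rho$ with $(1+z)/2 < \rho < r_K$ yields $|\alpha_{n,2p}^K| \leq \rho^{-2p-n}$ uniformly in $K$, so that the tail is dominated by a geometric series with ratio $(2z/(1+z))^2 < 1$ that vanishes as $P\to\infty$. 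An alternative, more dynamical route would be a geometric-singular-perturbation argument on the truncated system \eqref{eq_FourierODEs}: rescaling time by $K$ turns the diffusion into an $\mathcal O(1/K)$ perturbation of the Ott--Antonsen ODE system, whose slow manifold is exactly $\{x_n=x_1^n\}$, and the heteroclinic of the perturbed system should converge to the Ott--Antonsen one. Either way, combining the reduction with the tail estimate gives $\varphi_n^K(z)\to z^n$ for each $n\leq N$, which by the first step is the claimed $\mathcal G_a$-convergence.
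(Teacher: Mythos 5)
Your opening reduction is fine: on the fixed finite-dimensional space $\vspan\{\cos(k\theta),\ k\le N\}$ all the $\mathcal G_a$ norms are equivalent to the Euclidean norm of the coefficients, so the claim is equivalent to $\varphi_n^K(z)\to z^n$ for each $n\le N$. The gap comes immediately after, because you pass to the full Taylor series $\varphi_n^K(z)=z^n\sum_{p\ge0}\alpha^K_{n,2p}z^{2p}$ evaluated at the fixed $z\in\,]0,1[$. This needs two ingredients that are not available. First, that for $K$ large the series converges at $z$ \emph{and} sums to $\varphi_n^K(z)$ there: the earlier proposition only gives smoothness near $0$ and the recursive computation of the Taylor coefficients, while the identification ``radius $=r$ and sum $=X_n$ on $]0,r[$'' is exactly what the paper presents as numerical observations (figures \ref{Fig2} and \ref{Fig4}), with the explicit $e^{-1/x^2}$ caveat that such an identification is not automatic. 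Second, the uniform-in-$K$ tail bound, which you yourself identify as the principal obstacle and then leave conditional: the Cauchy-estimate route presupposes a $K$-uniform bound on a complex disc of radius close to $r_K$, i.e.\ essentially the convergence-radius conjecture itself, and the singular-perturbation alternative is likewise only a sketch. So the decisive step of your argument is a hypothesis, not a proof.

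The paper's own proof never meets this obstacle because it only ever manipulates finitely many coefficients: after noting that $r\to1$ as $K\to+\infty$ (so $z<r$ and, under the monotonicity hypothesis, $\Phi_{\mathcal H}(z)$ is defined for $K$ large), it writes $P_N\,\Phi_{\mathcal H}(z)$ as a trigonometric polynomial of degree at most $N$ whose $n$-th coefficient is the finite sum $\sum_{0\le n+2p\le N}\alpha_{n,2p}z^{n+2p}$; then the termwise limits $\alpha_{n,0}\to1$ and $\alpha_{n,2p}\to0$ ($p\ge1$) of Proposition \ref{prop:cv_form} give the convergence with no tail estimate at all, and convergence in every $\mathcal G_a$ is automatic because the trigonometric degree is bounded independently of $K$. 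In other words, the proposition is formulated and proved at the level of fixed finite truncations, whereas the statement you attempted — with the full series $\varphi_n^K(z)$ — is essentially the stronger conjecture stated right after it, which the paper supports only numerically. To make your write-up correct within what is actually proved, either restrict to the truncated coefficients as the paper does, or genuinely establish the uniform complex-analytic (or slow-manifold) bound you invoke; as written, the proposal does not close that step.
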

\begin{proof}
Recall that (see section  \ref{sec:equilibria})
$r = \int_{-\pi}^{\pi} \hat q (\theta) \cos(\theta) d\theta = \underset{t \rightarrow + \infty}{\lim} x_1(t) 
\underset{K \rightarrow + \infty}{\longrightarrow} 1.$
For any $K$ large enough, we have $x_1 : \, \mathbb R \rightarrow ]0,r[$ is a bijection. 
For any $a \in ]0,1[$ and $K$ large enough, we have $0<a<r$, and there is a unique $t_{a,K} \in ]- \infty, + \infty[$
such that $\int_{-\pi}^{\pi} q_h (t_{a,K},\theta) \cos(\theta) d\theta = x_1(t_{a,K})= a$.
The previous proposition implies that
$$P_N \, \Phi_\mathcal H(z) =  \frac{1}{2\pi} + \frac 1\pi \sum_{n = 1}^N \left( \sum_{0\leq n+2p \leq N} \alpha_{n,2p} z^{n+2p} \right) \cos(n \cdot) \underset{K\rightarrow + \infty}{\longrightarrow}
\frac{1}{2\pi} + \frac 1\pi \sum_{n = 1}^N z^n \cos(n \cdot), $$
uniformly on $[0, 2 \pi]$. 
The degree of $P_N \, \Phi_\mathcal H(z)$ (trigonometric polynomial) is independent of $K$, and so convergence occurs in any $\mathcal G_a$, $a>0$. 
\end{proof}

\par We recall now a classical property of analytic functions. 
\begin{prop}
Let $f_K$ ($K \in \mathbb N$) be a sequence of analytic functions with radius $R_K > \alpha$ for some $\alpha >0$ independent of $K$.
Suppose that $g$ is analytic with radius $R_g > \alpha$ and 
$ f_K \underset{K \rightarrow + \infty}{\longrightarrow} g \;\; \text{ uniformly on } \; [0,\alpha].$
Then we have $ f_K \rightarrow g $ in analytic function space, and in particular for any $l \geq 0$, 
$\frac{d^l}{dx^l} f_K \rightarrow \frac{d^l}{dx^l} g \;\; \text{ uniformly on } \; [0,\alpha].$
\end{prop}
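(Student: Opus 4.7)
The statement is a classical consequence of Cauchy's integral formula together with Vitali's convergence theorem; the plan is to record the brief argument, reading the hypothesis as providing a uniform lower bound on the radii (so that there is some $\alpha'$ with $\alpha < \alpha' < R_g$ and $\alpha' < R_K$ for every $K$). Under this reading each $f_K$ and $g$ extends, via its Taylor series, to a holomorphic function on the complex disk $D(0, \alpha') \subset \bbC$.

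The first step is to obtain local uniform boundedness of the family $\{f_K\}$ on compact subsets of $D(0, \alpha')$. In the concrete contexts where this proposition is invoked, this is immediate from uniform coefficient estimates of the form $|a_{n,K}| \le C \beta^{-n}$ for some $\beta \in (\alpha, \alpha')$ independent of $K$, which encode the uniform radius hypothesis and are supplied by the recursion defining the Taylor coefficients; the Cauchy--Hadamard formula then gives a uniform sup bound on any closed subdisk.

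The second step is to apply Vitali's convergence theorem on $D(0, \alpha')$: the family $\{f_K\}$ is locally uniformly bounded and converges at every point of the real segment $[0, \alpha]$, a set admitting an accumulation point in $D(0, \alpha')$, and hence $f_K \to g$ uniformly on compact subsets of $D(0, \alpha')$ in the complex sense.

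Finally, for any $l \ge 0$, $x \in [0, \alpha]$ and $\rho \in (0, \alpha' - \alpha)$, Cauchy's integral formula for derivatives gives
\begin{equation*}
f_K^{(l)}(x) - g^{(l)}(x) \,=\, \frac{l!}{2\pi i} \oint_{|w - x| = \rho} \frac{f_K(w) - g(w)}{(w - x)^{l+1}} \, dw,
\end{equation*}
so that $\sup_{x \in [0, \alpha]} |f_K^{(l)}(x) - g^{(l)}(x)| \le (l!/\rho^l) \sup_{|w| \le \alpha + \rho} |f_K(w) - g(w)| \to 0$ as $K \to +\infty$ by the previous step, which yields the uniform convergence of each derivative on $[0, \alpha]$; convergence in the natural weighted-$\ell^1$ (analytic) norm on Taylor coefficients follows from the same Cauchy bounds. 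The main obstacle is the uniform boundedness of Step~1; everything else is entirely classical.
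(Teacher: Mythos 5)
The paper does not actually prove this proposition: it is recorded as ``a classical property of analytic functions'' and used as such, so there is no argument of the authors to compare yours with. Your Vitali-plus-Cauchy scheme is indeed the standard machinery for statements of this type, and your Steps 2 and 3 are fine as written. The problem is Step 1, and you have put your finger on it yourself: local uniform boundedness of $\{f_K\}$ on a complex disk does not follow from the hypotheses as stated, and in fact cannot, because the proposition as literally stated is false. Take $g=0$ and $f_K(x)=K^{-1}\sin(K^{2}x)$: each $f_K$ is entire (so $R_K=+\infty>\alpha$), $f_K\to 0$ uniformly on $[0,\alpha]$, yet $\sup_{[0,\alpha]}\vert f_K'\vert=K\to\infty$, so no derivative convergence holds. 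Uniform convergence on a real segment, even together with a uniform lower bound on the radii of convergence, gives no control on a complex neighbourhood. (A secondary point: the literal hypothesis $R_K>\alpha$ does not even yield your $\alpha'$ with $\alpha<\alpha'<R_K$ for all $K$, since the radii may accumulate at $\alpha$.)

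Consequently your text proves a corrected statement in which one adds the hypothesis you import at Step 1 --- a $K$-independent bound $\vert a_{n,K}\vert\le C\beta^{-n}$ with $\beta>\alpha$, equivalently a uniform sup bound on a complex disk of radius larger than $\alpha$. That extra hypothesis is genuinely needed, it is not part of the proposition, and your justification for it (``supplied by the recursion defining the Taylor coefficients'') is an unproved claim about the application rather than a consequence of the stated assumptions; in the paper the corresponding bound ($\vert\varphi_n(z)\vert\le z^{n}$, i.e. $\vert\sum_{p}\alpha_{n,2p}z^{2p}\vert\le 1$) is only observed numerically and belongs to what is being conjectured. With that hypothesis added, your Vitali argument and the Cauchy estimates for the derivatives are complete and correct; without it there is a genuine gap --- one located in the statement itself, which your write-up correctly identifies but does not close.
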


{
As explained above, figures \ref{Fig2} and \ref{Fig4} suggest that $\varphi_n$ are analytical for any $n$, 
and their Taylor series $\sum_p \alpha_{n,2p} x^{2p}$ have the same radius $R = r \underset{K \rightarrow + \infty}{\longrightarrow} 1$.
\begin{num_res}
\par For any $n$ we have $\varphi_n(x) \underset{K \rightarrow + \infty}{\longrightarrow} x^n$ uniformly on any compact subset of $[0,1[$
(see figure \ref{Fig_OttAntonsen}).
\end{num_res}
}
Figure \ref{Fig_OttAntonsen} shows convergence $\varphi_n \rightarrow x^n$ uniformly on any compact of $[0,1[$ when $K \rightarrow + \infty$. 
The difference $\varphi_n - x_n$ is increasing on $[0,r]$ (for any $K$ and $n$), and at $x=r$ we have 
$ \underset{K \rightarrow + \infty}{\lim} \varphi_n(r) - r^n = 0, $
which shows that the convergence is uniform. 
Furthermore we expect theoretically  $\vert \varphi_n(r) - r^n = 0 \vert = \frac{c}{K} + o(\frac 1K)$ when $K \rightarrow 0$, 
and this is confirmed numerically on figure \ref{Fig_OttAntonsen} too, suggesting that the convergence happens with speed $\frac 1K$.

\begin{figure}[h!tbp]
\begin{center}
\psfrag{(a)}[B][B][1][0]{ \small{(a)}}
\psfrag{(b)}[B][B][1][0]{ \small{(b)}}
\psfrag{f2N(x)}[B][B][1][0]{ \small{$\varphi_2(x)$}}
\psfrag{phi2(x)}[B][B][1][0]{ \small{$\varphi_2(x)$}}
\psfrag{x}[B][B][1][0]{ \small{$x$}}
\includegraphics[scale=0.92]{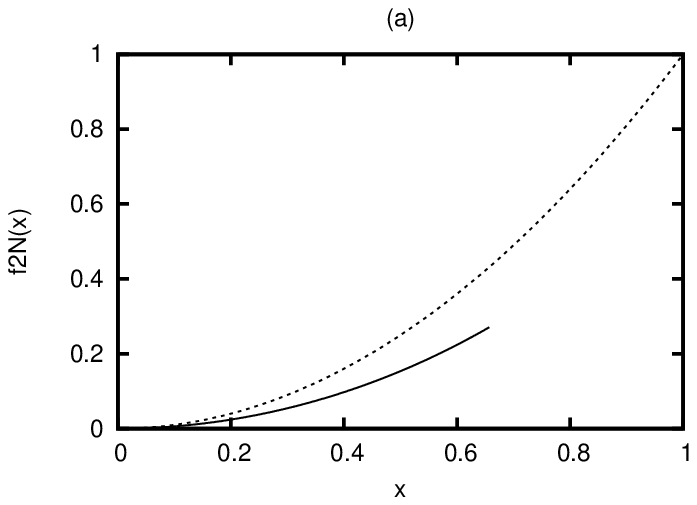}
\includegraphics[scale=0.92]{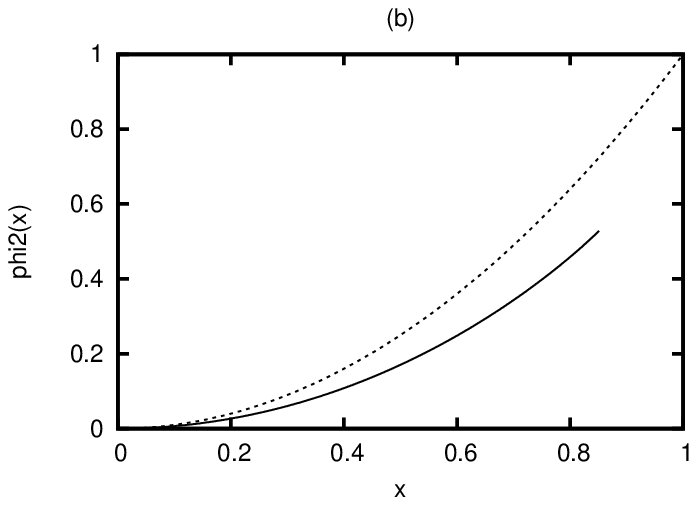} 
\end{center}
\begin{center}
\psfrag{(c)}[B][B][1][0]{ {(c)}}
\psfrag{(d)}[B][B][1][0]{ {(d)}}
\psfrag{phi2(x)}[B][B][1][0]{ \small{$\varphi_2(x)$}}
\psfrag{x}[B][B][1][0]{ \small{$x$}}
\includegraphics[scale=0.92]{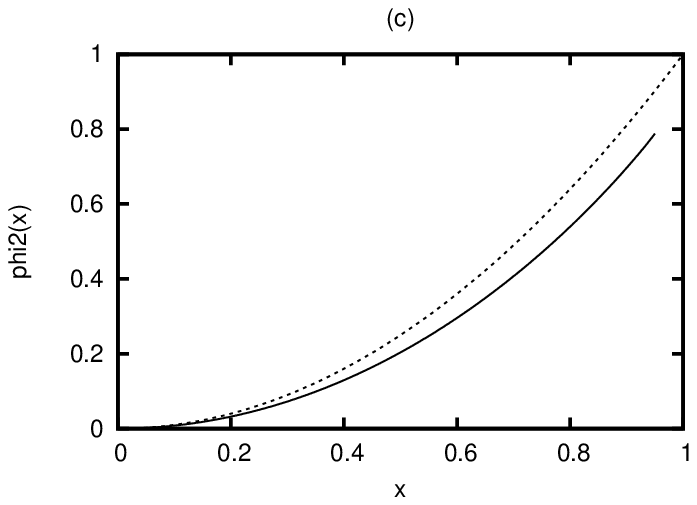}
\includegraphics[scale=0.92]{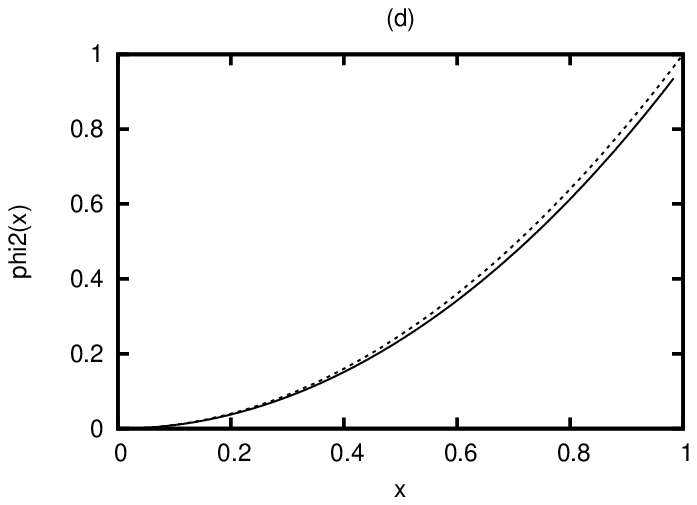} 
\end{center}
\vskip -2mm
\caption{Uniform convergence $\varphi_n (x)\stackrel{K \rightarrow + \infty}{\longrightarrow} x^n$.
\small{Plots of functions $\phi_2^N(x)$ for $x \in [0,r]$, for different values of $K$ :
$K=1.5$ (a), $K=2.0$ (b), $K=4.0$ (c), and $K=16.0$ (d). All values were computed with $N=100$ terms in the entire series defining $\varphi_2$. 
The top most curve in all plots is $y=x^2$. 
The functions $\varphi_2$ converges to the limit $y = x^2$ when $K \rightarrow + \infty$, uniformly on any closed set of $[0,1[$. 
The difference $x^n -\varphi_n(x) $ is increasing for $x \in [0,r[$, so that 
$\sup_{0 \leq x \leq r}  \vert x^n - \varphi_n(x) \vert = r^n - \varphi_n(r)
= \left ( \int \hat{q}(\theta) \cos(\theta) d \theta \right )^{n} - \int \hat{q}(\theta)\cos(n \theta) d \theta $.
One has $r = 1 - \frac{1}{4K} + \mathcal{O} \left ( \frac{1}{K^2} \right )$ and 
$\varphi_n(r) = \frac{I_n(2Kr)}{2 \pi I_0(2Kr)} = 1 - \frac{n^2}{4K} + \mathcal{O}\left ( \frac{1}{K^2} \right )$.
Thus one expects a convergence towards the Ott and Antonsen two dimensional invariant manifold at speed $\frac{1}{K}$ : 
$\sup_{0 \leq x \leq r}  \vert x^n - \varphi_n(x) \vert \leq \frac{C_n}{K}$}
}
\label{Fig_OttAntonsen}
\end{figure}

\par Assuming that these numerical observations are true, we have 
$$ \Phi_\mathcal H (z) = 
 \frac{1}{2\pi} + \frac 1\pi \sum_{n = 1}^N \left( z^n \sum_{0\leq n+2p \leq + \infty} \alpha_{n,2p} z^{2p} \right) \cos(n \cdot), $$
with $\sum_{0\leq n+2p \leq + \infty} \alpha_{n,2p} z^{2p} \underset{K \rightarrow + \infty}{\longrightarrow} 1 $ uniformly on any compact subset of $[0,1[$
 (for any $n$), and $ \vert \sum_{0\leq n+2p \leq + \infty} \alpha_{n,2p} z^{2p} \vert \leq 1 $ for all $n$, $K$ and $z \in [0,r]$.
This would imply $\vert \varphi_n(z) \vert \leq z^n $ (for all $n\geq 1$ and $z \in [0,r]$), and 
$\Phi_\mathcal H (z) = 
 \frac{1}{2\pi} + \frac 1\pi \sum_{n = 1}^N \varphi_n(z) \cos(n \cdot)$ is in Gevrey space 
$\mathcal G_a$ uniformly on $K$, for all $z \leq \alpha$ and $a < \frac 1\alpha$.
\begin{conj}
\par For any $\alpha \in [0,1[$ and any $z \in [0, \alpha]$,  we have 
$$ \Phi_\mathcal H (z) \underset{K \rightarrow + \infty}{\longrightarrow} q_{OA}(z) \;\; \text{ in } \; \mathcal G_a \;\; \forall\, 1<a< \frac 1\alpha. $$
\par The two dimensional global attractor $W^u(\frac{1}{2\pi})$ of (\ref{eq}) converges to the two dimensional invariant manifold 
$\mathcal M_{OA}$ of Kuramoto equation with no diffusion,
in analytic functions space, with speed $\frac 1K$ when $K \rightarrow + \infty$.
\end{conj}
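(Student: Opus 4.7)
The plan is to bootstrap the pointwise convergence of Taylor coefficients proved in Proposition \ref{prop:cv_form} into Gevrey-norm convergence, using the classical analytic-convergence proposition stated just before the conjecture. Writing the graph function as $\Phi_{\mathcal H}^{(K)}(z) = \tfrac{1}{2\pi} + \tfrac{1}{\pi} \sum_{n \geq 1} \varphi_n^{(K)}(z) \cos(n\cdot)$ with $\varphi_n^{(K)}(z) = z^n \sum_{p \geq 0} \alpha_{n,2p}^{(K)} z^{2p}$, the target reduces to showing that $\varphi_n^{(K)}(z) \to z^n$ uniformly on $[0,\alpha]$, together with a $K$-uniform tail bound that yields Gevrey summability. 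Concretely, I would first prove a uniform (in $K$, $n$, $p$) estimate of the form
\begin{equation*}
|\alpha_{n,2p}^{(K)}| \leq C \, \beta^{n+2p}
\end{equation*}
for some fixed $\beta > 1$ independent of $K$ large. Combined with Proposition \ref{prop:cv_form}, dominated convergence in $p$ gives $\varphi_n^{(K)}(z) \to z^n$ uniformly on compact subsets of $[0, 1/\beta)$, and the analytic-convergence proposition upgrades this to uniform convergence of all derivatives and hence of the Fourier coefficients in the Gevrey weight.

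The uniform coefficient bound is where the real work lies. The recurrences \eqref{eq_rec1} and \eqref{eq_rec2} express $\alpha_{n,2p}^{(K)}$ in terms of lower-order coefficients in the lexicographic order $\ll$, with a denominator of order $nK + n^2 - n - 2p$, which is positive of order $Kn$ for $K$ large and $p$ not too big compared to $n$. A double induction on $n+2p$ using the ansatz $|\alpha_{n,2p}^{(K)}| \leq C \beta^{n+2p}$ should close provided the denominator growth dominates the combinatorial factor coming from the convolution sum, which is essentially $p$. An alternative, softer route is the following a priori argument: since $\varphi_n^{(K)}(x_1(t))$ is the $n$-th Fourier coefficient of the probability density $q_h(t)$, one has $|\varphi_n^{(K)}(z)| \leq 1$ on $[0, r_K]$; monotonicity of Bessel ratios yields the sharper endpoint bound $\varphi_n^{(K)}(r_K) = I_n(2Kr_K)/I_0(2Kr_K) \leq r_K^n$; one then propagates this from $z=r_K$ back to all of $[0, r_K]$ by comparison with the explicit no-diffusion flow $\dot x_1 = \tfrac{K}{2} x_1 (1 - x_2)$ on $\mathcal M_{OA}$.

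The main obstacle is precisely this uniform radius estimate, i.e.\ showing that the radii of convergence of $\sum_p \alpha_{n,2p}^{(K)} z^{2p}$ are bounded below by a fixed positive number uniformly in $K$ large and in $n$. Extracting it from the recurrence is delicate because of the three-term coupling between modes $n-1$, $n$, $n+1$ at every level; extracting it from dynamics on the attractor requires an identification of $\Phi_{\mathcal H}^{(K)}$ as a $1/K$-perturbation of $q_{OA}$ in a Gevrey scale of Banach spaces near $x_1=0$, which I would attempt via a fibered-contraction argument exploiting the normal hyperbolicity of $\mathcal C$ (Corollary \ref{Corollaire_Normal_Hyperbolicity}), the shadow inertial manifold of Theorem \ref{thm_PIM}, and the Gevrey regularization of Theorem \ref{th:G1}. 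Once the uniform bound is secured, the announced rate $1/K$ follows from the expansions $r_K = 1 - 1/(4K) + \mathcal O(1/K^2)$ and $\varphi_n^{(K)}(r_K) = 1 - n^2/(4K) + \mathcal O(1/K^2)$, which together yield $\sup_{z \in [0,r_K]} |z^n - \varphi_n^{(K)}(z)| = r_K^n - \varphi_n^{(K)}(r_K) = \mathcal O(n^2/K)$; summing against the Gevrey weight then produces convergence at speed $1/K$ on any $[0,\alpha]$ with $\alpha < 1$.
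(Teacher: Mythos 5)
First, note that the statement you are proving is stated in the paper as a \emph{conjecture}: the paper itself does not prove it, but supports it by the coefficientwise limit of Proposition \ref{prop:cv_form}, the classical proposition on convergence of analytic functions, and purely numerical evidence (Figures \ref{Fig2}, \ref{Fig4}, \ref{Fig_OttAntonsen}) for the two facts that make the chain work: that each series $\sum_p \alpha_{n,2p}z^{2p}$ has radius of convergence $\geq r$ (with $r\to 1$ as $K\to\infty$), uniformly in $n$, and that its sum actually coincides with the graph function $X_n$ on $]0,r[$. Your plan reproduces exactly this chain and, to your credit, you correctly identify the missing ingredient. But that ingredient is not a technical detail to be ``secured''; it is the whole content of the conjecture, and your two proposed routes do not close it. The recurrence route: a bound $\vert\alpha_{n,2p}\vert\leq C\beta^{n+2p}$ with a \emph{fixed} $\beta>1$ would only give convergence on $[0,1/\beta[$, whereas the conjecture requires every $\alpha<1$, so you would need radius bounds tending to $1$ with $K$ (essentially $\geq r_K$), uniformly in $n$ --- nothing in the three-term recurrences \eqref{eq_rec1}--\eqref{eq_rec2} is shown to produce this, and the convolution term together with the backward coupling to $\alpha_{n+1,2(p-1)}$ is precisely why a naive induction on the lexicographic order does not visibly close with a $K$-independent $\beta$ close to $1$. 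The ``softer'' a priori route is weaker still: the bound $\vert x_n(t)\vert\leq 1$ along the heteroclinic orbit bounds the \emph{functions} $X_n$, not the Taylor coefficients of the formal solution $\varphi_n$, and gives no lower bound on a radius of analyticity; moreover, even granting analyticity near $0$, the identity $\sum_p\alpha_{n,2p}z^{n+2p}=X_n(z)$ on all of $]0,r[$ is exactly what the paper flags as nontrivial (the $e^{-1/x^2}$ caveat) and only checks numerically.

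Two further steps of your argument silently rely on unproved numerical facts from the paper: the global monotonicity of $t\mapsto x_1(t)$ (needed even to define $\Phi_{\mathcal H}$ as a global graph and invoked in the paper's proposition on $P_N\Phi_{\mathcal H}$ only as a hypothesis ``assume $x_1'>0$ for all $t$''), and the monotonicity of $z\mapsto z^n-\varphi_n(z)$ on $[0,r]$, which you use to convert the endpoint expansion $\varphi_n(r_K)=1-n^2/(4K)+\mathcal O(K^{-2})$ into the sup bound $\sup_{[0,r_K]}\vert z^n-\varphi_n(z)\vert=\mathcal O(n^2/K)$. Without these, the claimed rate $1/K$ in $\mathcal G_a$ does not follow (and note that after multiplying by the Gevrey weight $a^n$ you still need summability of $a^n\,n^2/K$-type errors against the decay of $\varphi_n$, i.e.\ again the uniform radius control in $n$). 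The suggestion to obtain $\Phi_{\mathcal H}$ as a $1/K$-perturbation of $q_{OA}$ via normal hyperbolicity of $\mathcal C$ (Corollary \ref{Corollaire_Normal_Hyperbolicity}) and the shadow inertial manifold of Theorem \ref{thm_PIM} is an interesting direction, but as written it is only a declaration of intent: the singular limit $K\to\infty$ degrades both the spectral gap constants and the Gevrey scale in which those results are proved, and no uniform-in-$K$ estimates are provided. In short, your proposal is a faithful reconstruction of the paper's heuristic, not a proof: the uniform radius/identification estimate you defer is precisely the open problem.
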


\par The dynamics on $\mathcal H$ are given by 
$x_1^\prime  = - \frac 12 x_1 + K x_1 ( 1 - \varphi_2(x_1)$. 
Considering $\tilde x_1 (t) = x_1(\frac tK)$, in the limit $K \rightarrow + \infty$, 
we find $\tilde x_1^\prime =  \tilde x_1 (1 - \tilde x_1 ^2)$, which is the equation ruling the dynamics on $\mathcal M_{0A}$ is the no diffusion case. 

\begin{conj}
Appropriately time-rescaled by a factor $\frac 1K$, the solutions of equation (\ref{eq}) on $W^u(\frac 1{2\pi})$ converges to solutions 
of the no-diffusion case on $\mathcal M_{0A}$, in Gevrey space on all intervals $]- \infty, T]$, with speed $\frac 1K$.
\end{conj}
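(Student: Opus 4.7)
The plan is to reduce the PDE convergence statement to a scalar ODE comparison on the 2D attractor $W^u(\tfrac{1}{2\pi})$, and then lift the comparison back to the Gevrey norm via the preceding conjecture on the convergence $\Phi_{\mathcal H}\to q_{OA}$. Throughout I denote the $K$-dependent objects by a superscript $K$ and use $r_K$ for the positive fixed point of \eqref{eq:fixedpoint}; by Proposition \ref{prop:GA_disk} and the rotational symmetry it is enough to work in the even subspace and compare the even heteroclinics.

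First I fix canonical parametrizations. Using the (assumed) monotonicity of $t\mapsto x_1^K(t)$ along $q_h^K$, the map $t\mapsto x_1^K(t)$ is a bijection $\mathbb R\to (0,r_K)$. I normalize the time origin so that $x_1^K(0)=r_K/2$; similarly normalize the Ott--Antonsen trajectory by $x_1^{OA}(0)=1/2$. Since $r_K\to 1$ at speed $1/K$, this alignment is compatible up to an $O(1/K)$ time shift that can be absorbed at the very end. The rescaled coordinate $\tilde x_1^K(t):=x_1^K(t/K)$ then satisfies
\begin{equation}
(\tilde x_1^K)'= \tilde x_1^K\bigl(1-\varphi_2^K(\tilde x_1^K)\bigr)-\frac{1}{2K}\tilde x_1^K,
\end{equation}
while the limiting OA reduction is $(\tilde x_1^{OA})'=\tilde x_1^{OA}(1-(\tilde x_1^{OA})^2)$. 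By the preceding conjecture, $\varphi_2^K(z)\to z^2$ uniformly on compacts of $[0,1)$ at speed $1/K$, so the right-hand sides differ by $O(1/K)$ uniformly on any interval $[\alpha_1,\alpha_2]\subset(0,1)$. A standard Gronwall estimate then yields $|\tilde x_1^K(t)-\tilde x_1^{OA}(t)|=O(1/K)$ uniformly on each compact time interval $[-T_0,T]$.

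Extending to $]-\infty,T]$ is where the unstable manifold at $\tfrac{1}{2\pi}$ enters. Near $\tilde x_1=0$, both ODEs linearize to $\dot z=z$, so the unstable manifold theorem (Proposition~\ref{Prop_VarStable_K>1}, after time rescaling) gives $C^1$ graph functions $\psi^K,\psi^{OA}:(-\varepsilon,\varepsilon)\to\mathbb R$, with $\psi^K\to\psi^{OA}$ in $C^1$ at speed $O(1/K)$, parametrizing the rescaled unstable manifolds by a strong-unstable coordinate $\xi$ for which the trajectories are simply $\xi(t)=\xi_0 e^t$. Expressing both $\tilde x_1^K(t)$ and $\tilde x_1^{OA}(t)$ through their strong-unstable coordinates, an $O(1/K)$ time shift (fixed by the choice of origin above) aligns the tails exactly, so the Gronwall bound on compact intervals propagates to the full half-line $t\le T$, giving
\begin{equation}
\sup_{t\le T}|\tilde x_1^K(t)-\tilde x_1^{OA}(t)|\le \frac{C(T)}{K}.
\end{equation}
Finally I lift to the PDE. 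Write $\tilde q^K(t)=\Phi_{\mathcal H}^K(\tilde x_1^K(t))$ and $q_{OA}^{\infty}(t)=q_{OA}(\tilde x_1^{OA}(t))$, so that for $a<1/\alpha$ and $\tilde x_1^K(t),\tilde x_1^{OA}(t)\in[0,\alpha]$,
\begin{equation}
\|\tilde q^K(t)-q_{OA}^{\infty}(t)\|_{\mathcal G_a}\le \|\Phi_{\mathcal H}^K(\tilde x_1^K(t))-q_{OA}(\tilde x_1^K(t))\|_{\mathcal G_a}+\|q_{OA}(\tilde x_1^K(t))-q_{OA}(\tilde x_1^{OA}(t))\|_{\mathcal G_a}.
\end{equation}
The first term is $O(1/K)$ by the preceding conjecture, and the second is controlled by the Gevrey-Lipschitz dependence of $z\mapsto q_{OA}(z)$ on $[0,\alpha]$ times the ODE estimate.

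The main obstacle is that this argument relies essentially on the two preceding conjectures: the analyticity of $\Phi_{\mathcal H}$ with radius $r_K$ and the $\mathcal G_a$-convergence $\Phi_{\mathcal H}^K\to q_{OA}$ at speed $1/K$. Proving them rigorously requires uniform-in-$K$ tail bounds on the Taylor coefficients $\alpha_{n,2p}^K$ supplied by the recurrences \eqref{eq_rec1}--\eqref{eq_rec2}; obtaining estimates of the form $|\alpha_{n,2p}^K|\le C r_K^{-(n+2p)}$ uniform in $K>K_0$, strong enough to give $\mathcal G_a$-convergence of the truncated series, is the technical heart of the problem. A secondary obstacle is the strict monotonicity of $x_1^K$ along $q_h^K$, assumed in the statement; establishing it analytically (rather than numerically via Figure \ref{Figx1}) would likely proceed via the Lyapunov functional or a convexity argument in Fourier coordinates.
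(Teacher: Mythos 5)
This statement is left as a conjecture in the paper: the authors offer only the formal rescaling computation (the reduced dynamics $x_1' = -\tfrac12 x_1 + K x_1(1-\varphi_2(x_1))$, rescaled by $t\mapsto t/K$, formally tends to the Ott--Antonsen reduction) together with the numerical evidence of Figures \ref{Fig2}--\ref{Fig_OttAntonsen}. Your proposal is essentially that same formal reduction, expanded into a Gronwall comparison of the two scalar ODEs and a lift to $\mathcal G_a$ via the graph functions, but it is explicitly conditional on the two preceding conjectures (analyticity of $\Phi_{\mathcal H}$ up to radius $r$, and $\mathcal G_a$-convergence $\Phi_{\mathcal H}^K\to q_{OA}$ at speed $1/K$), which are themselves unproven. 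So what you have is a conditional reduction, not a proof; the genuine gap is exactly the one you name yourself --- uniform-in-$K$ bounds on the coefficients $\alpha^K_{n,2p}$ strong enough to give Gevrey control of the graph, plus the monotonicity of $x_1$ along $q_h$ --- and this is precisely why the statement is a conjecture in the paper rather than a theorem.

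Beyond the conditionality, one concrete step of your argument would fail as written: the extension from compact time intervals to $]-\infty,T]$ by ``an $O(1/K)$ time shift aligns the tails exactly.'' After rescaling, the unstable rate at the incoherent state for the $K$-equation is $1-\tfrac{1}{2K}$ (or $\tfrac12-\tfrac1{2K}$ with the consistent factor from \eqref{eq_FourierODEs}), while for the OA reduction it is $1$ (resp.\ $\tfrac12$); a time shift changes only the asymptotic prefactor, never the exponential rate, so no shift matches the tails exactly. The uniform $O(1/K)$ bound on the half-line is still plausible, but it needs a genuine argument: for $t\lesssim -cK$ both solutions are super-exponentially small so the difference is trivially controlled, while on the intermediate range one must beat the Gronwall amplification over intervals whose length grows with $K$ (a naive constant $e^{L|t|}$ over a window of length $\sim\log K$ already costs a power of $K$), e.g.\ by comparing the asymptotic expansions $A_K e^{(1-\frac1{2K})t}$ vs.\ $Ae^{t}$ and showing $A_K\to A$ at rate $1/K$ under your normalization $x_1^K(0)=r_K/2$. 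As it stands, that step, together with the unproved $C^1$-convergence of the local unstable-manifold graphs uniformly in $K$ (Proposition \ref{Prop_VarStable_K>1} is for fixed $K$ only), leaves the half-line claim unsupported even granting the earlier conjectures.
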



\section*{Acknowledgments}
This work has been partially supported by the ANR grant MANDy. 
G. G. acknowledges also the support of ANR grant SHEPI.

\bibliographystyle{amsplain}
\bibliography{GlobalAttractorSub2.bib}

\section*{Appendix}

\subsection*{Proof of lemma \ref{Gevrey_Cv_Lemma} }
\par We will use the classical fact that for any positive time $\eta > 0$, the semiflow $S_\eta$ of equation (\ref{eq}) 
is well defined $L^2 : \, \rightarrow L^2$ and Lipschitz continuous on bounded sets, and the Lipschitz constant can 
be chosen uniformly for $\eta$ in bounded intervals $[0, \epsilon]$ (see \cite{MR610244} for example).

\par Consider $q_0$ and $\tilde{q_0}$ in $L^2$, and $q(t)$ and $\tilde{q}(t)$ the associated solutions of equation (\ref{eq}).
The equation for the difference $w = q - \tilde{q}$ is linear :
\begin{equation} 
\partial_t w = \frac12 \partial_{\theta}^2 w - K \partial_{\theta}( F w ) - K\partial_{\theta}( (J \ast w) \tilde{q} ) 
\end{equation}
where $F (t)=J \ast q (t) = x_1(t) \cos(\theta) - y_1(t) \sin(\theta)$.

Noting $w = \frac{1}{\pi} \sum \left ( \alpha_n \cos(n \theta) + \beta_n \sin(n \theta) \right )$, $\gamma_n = \alpha_n + i \beta_n$, 
$z_1 = x_1 + i y_1$ and $\tilde{z}_n = \tilde{x}_n + i \tilde{y}_n$, 
the previous equation implies the Fourier system :
\begin{equation} 
\gamma_n^{\prime} = -\frac{n^2}{2} \gamma_n + \frac{Kn}{2} ( z_1 \gamma_{n-1} - \overline{z_1} \gamma_{n+1} ) +  
\frac{Kn}{2} ( \gamma_1 \tilde{z}_{n-1} - \overline{\gamma_1} \tilde{z}_{n+1} ) 
\end{equation}
with $\gamma_0= 0$ for all $t\geq 0$.
And taking $A_n = z_1 \overline{\gamma_n} \gamma_{n-1} + \overline{z_1} \gamma_n \overline{\gamma_{n-1}}$
and $B_n = \gamma_1 \overline{\gamma_n} \tilde{z}_{n-1} +  \overline{\gamma_1} \gamma_n \overline{\tilde{z}_{n-1}} $
we have (similarly to the previous proof)
\begin{equation} 
\frac{d}{dt} \vert \gamma_n \vert^2 = - n^2 \vert \gamma_n \vert + \frac{Kn}{2}(A_n - A_{n+1}) + \frac{Kn}{2}(B_n - B_{n+1}).  
\end{equation}

Notice that for any time $\eta \geq 0$ and for any $n\geq 1$, 
\begin{equation}
\vert \gamma_n(\eta) \vert \leq \Vert w(\eta) \Vert_{L^2} = \Vert q(\eta) - \tilde{q}(\eta) \Vert_{L^2}  \leq Lip(S_\eta) \Vert q_0 - \tilde{q}_0 \Vert_{L^2},
\end{equation}
so for $f(n)=a^{n \min(t,1)}$, using $ \vert z_1 \vert \leq 2 $, we have 
\begin{equation} 
\frac{d}{dt} \left ( \sum_{n = N}^M \frac{f(n)}{n} \vert \gamma_n \vert^2 \right ) +
\left ( 1-\frac{2\log(a)}{N} - \frac{C}{\sqrt{N(N+1)}} \frac{a^2 -1}{a} \right )  \left ( \sum_{n = N}^M nf(n) \vert \gamma_n \vert^2 \right ) 
\end{equation}
\begin{equation} 
\leq C f(N) \vert \gamma_N \vert \vert \gamma_{N-1} \vert + C \vert \gamma_1 \vert \sum_{n = N+1}^M f(n) \vert \gamma_n \vert \vert \tilde{z}_{n-1} \vert
\end{equation}
\begin{equation}
\leq C(t) \Vert q_0 - \tilde{q}_0 \Vert_{L^2}^2 + \frac{C}{N(N+1)} \left ( \sum_{n = N}^M nf(n) \vert \gamma_n \vert^2  \right ) 
  + C \vert \gamma_1 \vert^2 \left ( \sum_{n = N}^M nf(n) \vert \tilde{z}_n \vert^2 \right )
\end{equation}
\begin{equation}
\leq C(t) \Vert q_0 - \tilde{q}_0 \Vert_{L^2}^2 + \frac{C}{N(N+1)} \left ( \sum_{n = N}^M nf(n) \vert \gamma_n \vert^2  \right ) 
+ C(t) \Vert q_0 - \tilde{q}_0 \Vert_{L^2}^2 \Vert \tilde{q} \Vert_{a^{\min(1,t)},N,M}^2
\end{equation}

From theorem \ref{th:G1}, we have that $\vert \tilde{q}(t) \vert_{G_{a^{\min(1,t)}}}$ is bounded on $[0, \epsilon]$,
 for any $a \geq 1 $, and so $\Vert \tilde{q}(t) \Vert_{G_{a^{\min(1,t)}}}$ is bounded on $[0, \epsilon]$ too (for any $a \geq 1$).
The constants $C$ above are all independent of $M$, 
so there is a $N$ large enough such that for all $t \in [0, \epsilon]$,
\begin{equation} 
2 \frac{d}{dt} \left ( \sum_{n \geq N} \frac{f(n)}{n} \vert \gamma_n \vert^2 \right )^{1/2} +
C  \left ( \sum_{n \geq N} nf(n) \vert \gamma_n \vert^2 \right )^{1/2} \leq C(t) \Vert q_1(0) - q_2(0) \Vert_{L^2}. 
\end{equation}

This proves that for any $q_0$ in $L^2$, for any $\epsilon >0$ and $\delta >0$, for any $a \geq 1$, there is a $M < + \infty$ such that for 
any $\tilde{q}_0 \in B_{L^2}(q_0, \epsilon)$ we have 
$\Vert q(\delta) - \tilde{q}(\delta) \Vert_{G_{a^{min(1,\delta)}}} \leq M \Vert q_0 - \tilde{q}_0 \Vert_{L^2}$.

\subsection*{Proof of theorem \ref{thm_PIM}}

\textit{Definition of the transform $\mathcal F$ and the new equation (\ref{eq2}).}
In the following we slightly abuse notations by using the same letter $V$ to denote both 
\begin{equation}
\begin{array}{rl}
V &= V(q)(\theta) = -K \left ( x_1(q) \cos(\theta) + y_1(q) \sin(\theta)  \right ),  \\
\;\; \text{ or } \; \;\; V &= V(a,b)(\theta) = -K \left ( a \cos(\theta) + b \sin(\theta)  \right )
\end{array}
\end{equation} depending on context. 
For $q(t)$ a solution of (\ref{eq}), the ordinary differential equations satisfied by $x_{1} = x_{1}(q(t))$ and $y_{1}=y_{1}(q(t))$ are 
$ x_{1}^{\prime} = - \frac{1}{2}x_{1} + \frac{K}{2} \left[ x_{1}(1-x_{2}) - y_{1}y_{2} \right] $ and 
 $y_{1}^{\prime} = - \frac{1}{2}y_{1} + \frac{K}{2} \left[ -x_{1}y_{2} + y_{1}(1+x_{2}) \right] $ 
and this gives in particular 
\begin{equation}
\partial_{t} V = - \frac{V}{2} - \frac{K}{2} \left[ F_{1}(u,x_{1},y_{1})\cos(\theta) + G_{1}(u,x_{1},y_{1}) \sin(\theta) \right]. 
\end{equation} 
Along a solution of the Kuramoto equation (\ref{eq}), 
where $u = e^V q$ and
\begin{equation} 
\begin{array}{rl}
F_{1}(u, a, b) &= \left[ a(1-X_{2}(u,a,b)) - bY_{2}(u,a,b) \right] \\
G_{1}(u, a, b) &= \left[ -aY_{2}(u,a,b) + b(1+X_{2}(u,a,b)) \right], 
\end{array}
\end{equation}
with
\begin{equation} 
\begin{array}{rl}
X_{2}(u, a, b) &= \int_{- \pi}^{\pi} u(\theta) e^{\frac{K}{2}(a\cos(\theta) + b \sin(\theta))}\cos(2 \theta) d\theta, \\
Y_{2}(u, a, b) &= \int_{- \pi}^{\pi} u(\theta) e^{\frac{K}{2}(a\cos(\theta) + b \sin(\theta))}\sin(2 \theta) d\theta. 
\end{array}
\end{equation}
We have that for any solution $q$ of equation (\ref{eq}), $\mathcal{U}$ satisfies (\ref{eq2})
where
\begin{equation} A \mathcal{U} = 
\left(  \begin{array}{ccc} 
- \frac{1}{2} \Delta & 0 & 0 \\ 
0 & \frac12 & 0 \\
0 & 0 & \frac12 
\end{array}  \right)
\left( \begin{array}{c} u \\ x_{1} \\ y_{1} \end{array} \right) = 
\left( \begin{array}{c} - \frac{1}{2} \partial_{\theta}^2 u \\ \frac12 x_{1} \\ \frac12 y_{1} \end{array} \right),   
\end{equation}
\begin{equation} N(\mathcal{U}) = 
\left( \begin{array}{c}
-\frac{K}{2} \left[ \frac12 V + \frac{K}{2}(\partial_{\theta}V)^{2} + \frac{K}{2} F_{1}(\mathcal{U})\cos(\theta) 
+ \frac{K}{2} G_{1}(\mathcal{U}) \sin(\theta)  \right]u \\
\frac12 K F_{1}(\mathcal{U}) \\
\frac12 K G_{1}(\mathcal{U})
\end{array}  \right) \end{equation}\\

We will show now that inertial manifolds exist for equation (\ref{eq2}).
There are several inertial manifolds existence theorem (see \cite{MR2511702,MR2725294,MR1409653} for instance), such as the following. 
\begin{thm}\label{thm_InertialManifolds}
Consider an abstract evolution equation on a Hilbert space $H$
\begin{equation}\label{eq_abstact} \frac{du}{dt} +Au = f(u) \end{equation}
where $A$ is a positive self-adjoint operator with compact resolvent and $f : \, H \rightarrow H$ is Lipschitz continuous on bounded sets.
The spectrum of $A$ consists in a sequence of eigenvalues $\lambda_n$ such that $\lambda_n \leq \lambda_{n+1} \underset{n \rightarrow \infty}{ \longrightarrow} + \infty$. 
Suppose that equation (\ref{eq_abstact}) has a bounded absorbing set in $H$. Then there exists a constant $C$, such that if 
\begin{equation}\label{eq_SGC} \lambda_{n+1} - \lambda_n > C \end{equation} for some $n \geq 0$, 
then there exists an asymptotically complete inertial manifold $\mathcal M^*$ for equation (\ref{eq_abstact}), 
which is the graph of a Lipschitz function $\Phi$ on $E_n = \oplus_{i \leq n} \ker( A - \lambda_i Id )$.
Furthermore, if $f$ is $C^1(H,H)$, then the graph function $\Phi$ and the inertial manifold $\mathcal M^*$ are $C^1$ too.
\end{thm}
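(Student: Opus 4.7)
My plan is to prove this via the classical Lyapunov--Perron fixed point method, which is the most direct route when one wants both the Lipschitz graph and the asymptotic completeness. Before doing anything else, I would exploit the bounded absorbing set: choose a smooth cutoff $\chi$ and replace $f$ by $\tilde f(u) = \chi(\|u\|) f(u)$ so that $\tilde f$ agrees with $f$ on a neighborhood of the absorbing set and is globally Lipschitz on $H$ with some constant $L$. Any inertial manifold for the modified equation is then an inertial manifold for the original one (after translation in time into the absorbing set). This reduction is what lets the spectral gap constant $C$ be taken as an explicit multiple of $L$.

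Next I would set up the spectral decomposition $H = P H \oplus Q H$ with $P = P_n$ the spectral projection onto $E_n$ and $Q = I - P$, and rewrite the equation in coordinates $u = p + q$ as $p' + Ap = P\tilde f(p+q)$, $q' + Aq = Q\tilde f(p+q)$. Fix $\eta \in (\lambda_n, \lambda_{n+1})$. For a candidate trajectory bounded on $(-\infty,0]$ with prescribed $p(0) = p_0$, the mild formulation reads
\begin{equation}
u(t) = e^{-(t-0)A}p_0 + \int_0^t e^{-(t-s)A} P\tilde f(u(s))\, ds + \int_{-\infty}^t e^{-(t-s)A} Q\tilde f(u(s))\, ds,
\end{equation}
where the $P$-integral is understood via the bounded group $e^{-tA}|_{PH}$ and the $Q$-integral converges because $A|_{QH} \geq \lambda_{n+1} \mathrm{Id}$. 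I would then prove by Banach fixed point in the weighted space $C_\eta = \{ u : (-\infty,0] \to H \,:\, \sup_t e^{\eta t}\|u(t)\| < \infty \}$ that there is a unique such $u$, provided $L$ is small enough relative to $\lambda_{n+1} - \eta$ and $\eta - \lambda_n$. Optimizing over $\eta$ gives the concrete spectral gap requirement $\lambda_{n+1} - \lambda_n > C\cdot L$. Setting $\Phi(p_0) = Qu(0)$ defines the candidate manifold; the same contraction yields at once that $\Phi$ is Lipschitz in $p_0$ with a constant depending only on $L$ and the gap. Positive invariance of the graph under the semiflow follows because shifting the Perron trajectory in time preserves the defining integral equation.

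For asymptotic completeness I would, given an arbitrary solution $u(t)$ on $[0,\infty)$, look for a ``shadow'' trajectory $\bar u(t)$ lying on the graph of $\Phi$ such that $\|u(t) - \bar u(t)\| = O(e^{-\beta t})$ for some $\beta < \lambda_{n+1}$. This amounts to another fixed point problem: writing $u = \bar u + w$ with $Qw \in $ span of exponentially decaying perturbations, one solves for the initial $p$-component $\bar p(0)$ by a Perron-type integral equation on $[0,\infty)$, which again contracts under the same spectral gap. The $C^1$ statement is obtained by formally differentiating the Perron equation with respect to $p_0$ and verifying that the linearized fixed point equation is also a contraction (in the same weighted space but acting on $L(PH, H)$-valued paths); this is the standard Henry--Foias--Sell--Temam route, and one can invoke the Rosa--Temam framework cited in the excerpt for a clean statement.

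The main obstacle, as always for inertial manifolds, is pinning down the spectral gap constant $C$: one must carefully balance the choice of weight $\eta$ against both the $P$- and $Q$-norm estimates so that the contraction factor is strictly less than one. Everything else (invariance, Lipschitz regularity, asymptotic completeness, $C^1$ lift) is obtained as a corollary of the same fixed point, so the entire proof rests on this single quantitative estimate on the weighted integral operator.
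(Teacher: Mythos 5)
The paper does not actually prove this theorem: it is stated as a representative of known inertial-manifold existence results and is invoked as a black box, with the proof deferred to the cited literature (in particular Rosa--Temam \cite{MR1409653}, which is also what the paper later uses for the $C^k$ and normal-hyperbolicity statements). Your proposal, by contrast, reconstructs the standard Lyapunov--Perron proof that underlies those references, and the architecture is right: the preliminary truncation of $f$ using the absorbing set (which is exactly why the gap constant $C$ may depend on the Lipschitz constant $L$ of $f$ near that set), the spectral splitting $H = P_nH \oplus Q_nH$, the backward-in-time mild formulation in an exponentially weighted space $C_\eta$ with $\eta \in (\lambda_n,\lambda_{n+1})$, the contraction yielding the Lipschitz graph $\Phi(p_0)=Qu(0)$, invariance by time-shifting the Perron trajectory, asymptotic completeness by a second fixed-point problem for the shadow phase, and $C^1$ regularity by contracting the formally differentiated equation. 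Since $f$ maps $H$ into $H$ (no loss of derivatives), the gap condition indeed reduces to $\lambda_{n+1}-\lambda_n > cL$, matching the statement. Two points deserve slightly more care than your sketch gives them: the $C^1$ conclusion generally requires either a somewhat larger gap constant or a separate verification that the derivative operator is a contraction in the operator-valued weighted space (this is precisely what Rosa--Temam supply), and the asymptotic-completeness fixed point needs the decay rate $\beta$ chosen strictly between $\lambda_n$ and $\lambda_{n+1}$ so that both the forward $Q$-estimate and the backward $P$-estimate close. Neither is a gap in the approach, only in the level of detail; your route is the proof the paper is implicitly relying on, so nothing is lost relative to the paper and something is gained in self-containedness.
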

During the next two steps of this proof, we show that theorem \ref{thm_InertialManifolds} applies to equation (\ref{eq2}).\\

\par \textit{The Cauchy problem for equation (\ref{eq2}) is well-posed.} 
The operator $A$ is diagonal in the natural Hilbert basis of $E := H^{s} \times \mathbb{R} \times \mathbb{R}$,
and the spectral gap condition of theorem \ref{thm_InertialManifolds} holds for the operator $A$.
The maps $X_{2}, Y_{2}, F_1, \text{ and } G_1 \, : \, L^{2} \times \mathbb{R} \times \mathbb{R} \rightarrow \mathbb{R}$ are $C^{\infty}$. 
Since the first line of the non linear term $N ( \mathcal{U})$ is the product of a Fourier polynomial 
(whose coefficients are $C^{\infty}(H^{s} \times \mathbb{R} \times \mathbb{R})$) and $u$, we have that $N$ is 
$C^{\infty} : \; H^{s} \times \mathbb{R}^2 \rightarrow H^{s} \times \mathbb{R}^2 $. 
In particular, the non linear term $N$ is Lipschitz on bounded subsets of any 
$H^{s} \times \mathbb{R} \times \mathbb{R}$, and  classical semi-linear parabolic equations results in Hilbert spaces apply 
(see for example \cite{MR1691574}).
\begin{prop} Let $s \in \mathbb{N}$. 
For any $\mathcal{U}_{0} \in H^{s+2} \times \mathbb{R} \times \mathbb{R}$, 
there is a unique solution $\mathcal{U}(t)$ of (\ref{eq2}) with 
$\mathcal{U} \in C^{0}([0, T_{max}[; H^{s+2} \times \mathbb{R} \times \mathbb{R} ) \cap 
C^{1}([0, T_{max}[; H^{s} \times \mathbb{R} \times \mathbb{R} )$, 
and we have either $T_{max} = + \infty$ or 
$\underset{t \rightarrow T_{max}}{\lim} \Vert \mathcal{U}(t) \Vert_{H^{s} \times \mathbb{R} \times \mathbb{R}} = + \infty$. \\
Furthermore, if $ \mathcal{U}_{0} \geq 0 $, then we have $\mathcal{U}(t) \geq 0$  for all $t \in [0 , T_{max}[$.
\end{prop}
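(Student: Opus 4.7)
The plan is to apply classical local existence theory for semilinear parabolic equations in Hilbert spaces (along the lines of \cite{MR1691574}) to equation \eqref{eq2}, written in the abstract form $\frac{d\mathcal U}{dt} + A\mathcal U = N(\mathcal U)$ on the Hilbert space $H = H^s(\bbS) \times \R \times \R$. The operator $A$ is block-diagonal: it acts as $-\frac12 \partial_\theta^2$ on the first component (with periodic boundary conditions and domain $H^{s+2}(\bbS)$) and as multiplication by $\frac12$ on the last two, so $A$ is nonnegative, self-adjoint and has compact resolvent (the compact Sobolev embedding $H^{s+2} \hookrightarrow H^s$ takes care of the first component). The nonlinearity $N$ has already been shown just above the proposition to be $C^\infty : H^s \times \R^2 \to H^s \times \R^2$: the scalar functionals $X_2, Y_2, F_1, G_1$ are smooth in $(u, a, b)$, and the top component of $N(\mathcal U)$ is the product of $u$ with a trigonometric polynomial whose coefficients depend smoothly on $(u, x_1, y_1)$. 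In particular $N$ is locally Lipschitz on $H$.

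Under these hypotheses, the mild-formulation contraction argument based on $\mathcal U(t) = e^{-tA}\mathcal U_0 + \int_0^t e^{-(t-\sigma)A} N(\mathcal U(\sigma))\,d\sigma$ produces, for every $\mathcal U_0 \in H^{s+2} \times \R^2$, a unique local solution with the claimed regularity $\mathcal U \in C^0([0, T_{max}[; H^{s+2} \times \R^2) \cap C^1([0, T_{max}[; H^s \times \R^2)$; the higher regularity of the initial datum propagates thanks to the smoothing properties of $e^{-tA}$ combined with the fact that $H^{s+2}$ is a Banach algebra on $\bbS$ (for $s\ge 0$), so that $N$ restricts to a locally Lipschitz map $H^{s+2} \times \R^2 \to H^{s+2} \times \R^2$. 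The blow-up alternative then follows from the usual continuation argument: the minimal local existence time on each restart depends only on the $H^s$-norm of the current datum, so as long as $\Vert \mathcal U(t)\Vert_{H^s \times \R^2}$ stays bounded the solution extends past any given time, contradicting maximality of $T_{max}$ unless $T_{max} = +\infty$.

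For the positivity statement, the idea is to freeze the already-constructed solution $\mathcal U(t) = (u(t), x_1(t), y_1(t))$ in the coefficient and regard the first line of \eqref{eq2} as a linear parabolic equation for $u$:
\begin{equation*}
\partial_t u - \tfrac12 \partial_\theta^2 u \,=\, c(t,\theta)\, u,
\end{equation*}
where
\begin{equation*}
c(t,\theta) \,=\, -\tfrac{K}{2}\Bigl[\tfrac12 V + \tfrac{K}{2}(\partial_\theta V)^2 + \tfrac{K}{2} F_1(\mathcal U)\cos\theta + \tfrac{K}{2} G_1(\mathcal U)\sin\theta\Bigr]
\end{equation*}
is, once $\mathcal U$ has been determined, a given continuous function on $[0,T] \times \bbS$ for every $T < T_{max}$. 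The classical weak maximum principle for linear parabolic equations with bounded coefficients then yields $u(t,\theta) \ge 0$ on $[0, T_{max}[$ as soon as $u_0 \ge 0$, proving the last assertion of the proposition. The only mild obstacle in the whole plan is checking the algebra-type estimate needed to see that $N$ preserves the $H^{s+2}$-regularity (and not merely the $H^s$-regularity on which the abstract theorem is stated); this is immediate from the smoothness of the scalar quantities $X_2, Y_2, F_1, G_1$ and the algebra property of $H^{s+2}(\bbS)$.
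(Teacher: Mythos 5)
Your proposal is correct and follows essentially the same route as the paper: the paper also notes that $A$ is self-adjoint (diagonal in the Fourier basis) with compact resolvent and that $N$ is $C^\infty$, hence Lipschitz on bounded sets of $H^s\times\mathbb{R}^2$, and then invokes the classical semilinear parabolic theory in Hilbert spaces (the same reference) for existence, uniqueness, regularity and the blow-up alternative. For positivity the paper relies on the maximum principle in the equivalent $q$-formulation \eqref{eq2q}, which is the same frozen-coefficient linear-parabolic argument you apply directly to the $u$-component, so there is no substantive difference.
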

\par In the following we denote by $S^*_t$ the semiflow generated by equation (\ref{eq2}).
Thanks to the regularizing properties of equation (\ref{eq}), without loss of generality we may assume that $q_{0} \in H^{s}$ 
and $\mathcal{U}_{0} \in H^{s} \times \mathbb{R} \times \mathbb{R}$ (for any fixed $s \geq 0$). \\

\textit{Existence of a bounded absorbing set for equation (\ref{eq2}).} 
We have already established that all hypotheses of theorem \ref{thm_InertialManifolds} hold for equation (\ref{eq2}), 
but the existence of a bounded absorbing set.
\par For any $\mathcal U$ solution of (\ref{eq2}),  $\left ( q,a,b \right ) = \mathcal F^{-1} \left( \mathcal U \right) $ satisfies 
\begin{equation}\label{eq2q}
 \partial_{t} q = \frac{1}{2} \partial_{\theta}^2 q + K \partial_{\theta} \left [ (- a \sin(\cdot) + b \cos(\cdot)) q \right ],  
\end{equation}
\begin{equation}\label{eq_ab}
\begin{array}{ccc}
a^{\prime} &=& -\frac12 a + \frac12 K F_{1}(e^{-K(a \cos (\cdot) + b \sin (\cdot) )}q, a, b),  \\
b^{\prime} &=& -\frac12 b + \frac12 K G_{1}(e^{-K(a \cos (\cdot) + b \sin (\cdot) )}q, a, b).
\end{array}
\end{equation}
As for Kuramoto equation (\ref{eq}), we prove that equations (\ref{eq2q}) and (\ref{eq_ab}) 
have a bounded absorbing set in a Gevrey space, by mean of an equivalent Fourier ODE system.
\par Denoting $w_{1}(t) = a(t) + i b(t) \in \mathbb{C}$, 
and $z_{n}(t) = x_{n}(t) + i y_{n}(t) = \int_{- \pi}^{\pi} e^{in\theta} q(t,\theta) d \theta$
we have
\begin{equation} z_{n}^{\prime} = - \frac{n^{2}}{2} z_{n} + \frac{Kn}{2}[ w_{1} z_{n-1} - \overline{w_{1}} z_{n+1}  ]. 
\end{equation} 

\par With the method of theorem \ref{th:G1} and lemma \ref{Gevrey_Cv_Lemma}, we obtain
\begin{equation} \frac{d}{dt} \vert q \vert_{f,N}^{2}   
- \sum_{N}^{+ \infty} \frac{f^{\prime}(n,t)}{n} \vert z_{n} \vert^{2} 
+2 \Vert q \Vert_{f,N}^{2} \leq 2K f(N,t) + \frac{2 K}{\sqrt{N(N-1)}} \frac{(a^{2}-1)}{a} \vert w_{1}(t) \vert \Vert q \Vert_{t,N}^{2}. 
\end{equation}
If $w_{1}$ is bounded on some time interval $[0,T]$, then for $N$ large enough we have
 $\frac{d}{dt} \vert q \vert_{f,N}^{2} + \Vert q \Vert_{f,N}^{2} \leq C, $
and for any solution $\mathcal{U} \in C^{1}([0,T], L^{2} \times \mathbb{R} \times \mathbb{R})$, we have
$u(t) \in H_{f}$ for all $t \in ]0; T]$. We also see from the previous argument that
 if we have a uniform bound $\vert w_{1} \vert \leq C_{1}$  for all $t \in [0 , + \infty[$, then $N$ can be chosen large enough such that 
$\frac{d}{dt} \vert q \vert_{f,N}^{2} + \Vert q \Vert_{f,N}^{2} \leq C $
and the bounded ball $B_{H_{f}}(0, 2 C)$ in Gevrey space is an absorbing set for our system. \\

\par We now prove a uniform bound for $w_{1}$ on $[0 , + \infty[$. 
Equation (\ref{eq2q}) preserves the integral of $q$,  
$\int_{- \pi}^{\pi} q(t) d \theta = \int_{- \pi}^{\pi} q_{0} d \theta$ along solutions. 
Since $q_{0} \geq 0$ implies $q(t) \geq 0$ for solutions of equations (\ref{eq2q}) and (\ref{eq_ab}), 
we have $\vert x_{1} \vert \leq \int_{-\pi}^{\pi} q_0(\theta) d \theta= C$ and $\vert y_{1} \vert \leq C$ as long as the solution exists
(where $x_1 = \int_{-\pi}^\pi q(t,\theta) \cos(\theta) d\theta$ and $y_1 = \int_{-\pi}^\pi q(t,\theta) \sin(\theta) d\theta$).
Consider $(q_0, a_0, b_0) \in L^2 \times \mathbb R \times \mathbb R$ and the associated solution $ ( q(t),a(t),b(t) ) $ 
of equations (\ref{eq2q}) and (\ref{eq_ab}).
We have
\begin{equation}\label{eq_x1y1}
\begin{array}{ccc}
x_1^{\prime} &=& -\frac12 x_1 + \frac12 K F_{1}(e^{-K(a \cos (\cdot) + b \sin (\cdot) )}q, a, b),  \\
y_1^{\prime} &=& -\frac12 y_1 + \frac12 K G_{1}(e^{-K(a \cos (\cdot) + b \sin (\cdot) )}q, a, b), 
\end{array}
\end{equation}
\begin{equation}\label{eq:x1a_y1b}
\text{ so that } \;\;\; \frac{d}{dt} (x_1 - a) = - (x_1 -a) \;\;\; \text{and} \;\;\; \frac{d}{dt} (y_1 - b) = - (y_1 -b).
\end{equation}
The quantities $\vert x_1 -a \vert $ and $\vert y_1 - b \vert$ are exponentially decreasing. 
Since $\vert x_1(t) \vert \leq 1$ and $\vert y_1(t) \vert \leq 1$ for all time $t \geq 0$, 
there is a time $t_0$ such that $\vert w_1 \vert \leq \vert a \vert + \vert b \vert \leq 4$ for all $t \geq t_0$.
As seen in the previous paragraph, this implies that 
\begin{equation} \frac{d}{dt} \vert q \vert_{f,N}^{2} + \Vert q \Vert_{f,N}^{2} \leq C_{1} 
\;\;\; \forall t \in [0, + \infty[,
\end{equation} 
and the bounded ball $B_{H_{f}}(0, 2C_{1})$ in Gevrey space is an absorbing set for our equations (\ref{eq2q}) and (\ref{eq_ab})
and so for equation (\ref{eq2}). \\

\textit{Existence of an AcIM for equation (\ref{eq2})  and an AcSIM for equation (\ref{eq}).}
By theorem \ref{thm_InertialManifolds}, there exists an inertial manifold $\mathcal M^*$ for equation (\ref{eq2}).
Furthermore, since the non-linear term $\mathcal N \left ( \mathcal U \right )$ is $C^k$ ($k\geq1$), 
the inertial manifold $\mathcal M^* $ is $C^1$, and
for any $\mathcal U_0 \in H^s \times \mathbb R^2 $ and $t_0$ large enough 
there is a unique phase $\mathcal V_0 \in \mathcal M^*$ such that 
\begin{equation}
\left \Vert S_{t+t_0}^* \mathcal U_0 - S_t^* \mathcal V_0 \right \Vert = \mathcal O \left ( e^{- \eta t} \right )
\end{equation}
and the leaves 
\begin{equation}
\mathcal L_{\mathcal V_0} = \left  \{  \mathcal U_0, \;\; 
 \left \Vert S_{t}^* \mathcal U_0 - S_t^* \mathcal V_0 \right \Vert = \mathcal O \left ( e^{- \eta t} \right )    \right  \} 
\end{equation}
form a $C^1$ foliation of space $H^s \times \mathbb R^2 $.\\

\par To deduce properties of the flow of equation (\ref{eq}) from results for solutions of equation (\ref{eq2}), 
it is essential that the transform  $\mathcal{F} \circ I \, : \, q \mapsto \mathcal{U}$  and the inverse 
$\mathcal{F}^{-1} $ are smooth. 
With the definition $u(\theta) = q(\theta) e^{\frac12 K V(x_{1}, y_{1}, \theta)}$, one sees that 
\begin{equation} \Vert u \Vert_{L^{2}} \leq C_{K} \Vert q \Vert_{L^{2}} \Vert e^{- \frac12 K(x_{1}\cos(\cdot) + y_{1} \sin(\cdot) )} \Vert_{L^{\infty}} 
\leq C_{K} C(K, x_1, y_1 \Vert q \Vert_{L^{2}} ) \end{equation} 
and also for any $s \in \mathbb{N}$
$ \Vert u \Vert_{H^{s}} \leq  C_{K,s} C(K, s, x_1, y_1) \Vert q \Vert_{H^{s}} , $ 
so that $q \mapsto u$ is well defined $H^{s} \rightarrow H^{s}$.
Besides the mappings $q \mapsto (x_{1}, y_{1}) \mapsto V(x_{1}, y_{1}, \cdot) \mapsto e^{\frac{K}{2} V} $ are smooth 
and one finally finds that $ F \, : \, q \mapsto \mathcal{U}$ is $C^{\infty}(H^{s} ; H^{s} \times \mathbb{R} \times \mathbb{R})$ 
for any $s \in \mathbb{N}$. 
For the same reasons, the inverse mapping $\mathcal F^{-1}$ 
$ \mathcal U=(u,a,b) \mapsto (q,a,b) = (u e^{\frac 12 K (a \cos(\cdot) + b \sin(\cdot)) }) $
is also well defined and $C^{\infty}$ smooth 
$H^{s} \times \mathbb{R} \times \mathbb{R} \rightarrow H^{s} \times \mathbb{R} \times \mathbb{R} $
for any $s \geq 0$. \\

\par For any solution $q$ of equation (\ref{eq}), $\mathcal U = \mathcal F \circ I (q)$ is a solution of (\ref{eq2}), 
and $\{ (q,a,b) \in H^s \times \mathbb R^2, \; x_1(q) =a, y_1(q)=b \}$ is an invariant manifold for equation (\ref{eq2}).
To any $q_{0} \in H^{s}$, we associate 
$\mathcal{U}_{0} = \mathcal{F}(q_{0})$, which has a phase on the inertial manifold $\mathcal M^*$ of equations (\ref{eq2q}) and (\ref{eq_ab}), 
 ie there is a $t_{0}$ and a $\mathcal{V}_{0} \in \mathcal M_n^*$ such that 
$$ \Vert S^{*}(t+t_{0})\mathcal{U}_{0} - S^{*}(t) \mathcal{V}_{0} \Vert_{\mathcal{E}} = \mathcal{O}\left( e^{- \eta t} \right). $$
Consider the projection $P : H^s \times \mathbb R^2 \rightarrow H^s$ defined by $P (q,a,b) = q$.
Taking $v(t) = P \mathcal{F}^{-1} S^*_t \mathcal{V}_{0}$, since the projection $P$ and the inverse $\mathcal F^{-1}$ are smooth, we have
$$ \Vert S(t+t_{0})q_{0} - v(t) \Vert_{H^{s}} = \mathcal{O}\left( e^{- \eta t} \right). $$ 
The inertial manifold $\mathcal M^*$ is finite dimensional, so that $\tilde S_t =  S^*_t \vert_{\mathcal M^*}$ is a flow on a finite dimensional space. 
Since $\mathcal M^*$ is a smooth graph on $E_n$ (see theorem \ref{thm_InertialManifolds}), 
$\mathcal M = P \mathcal F^{-1} \mathcal M^*$ is a smooth graph too. 
We conclude that $\mathcal M = P \mathcal F^{-1} \mathcal M^*$ is an AcSIM for equation (\ref{eq}).

\subsection*{Proof of lemma \ref{lm:new_local_coordinates}}

%
%
To simplify the notation we will write $H^{-1}$
 as as a short-cut for $H^{-1}_{1/\hat q}$ in this proof.
Recall that $L_{\hat q}$ is normal, and we have the orthogonal sum $ L^2 = D(L_{\hat q}^{\frac 12}) = \ker(L_{\hat{q}}) \oplus R(L_{\hat{q}}) $.
 
Let us denote by $P$ the orthonormal projection onto $\ker(L_{\hat{q}})$. 
For $u$ and $x_\varphi$ in a neighborhoods of the origins, 
the equality $u = x_{\varphi} + y$ with $y \in R(L_{\hat q})$ is equivalent to $Pu = Px_{\varphi}$ and $y = z_{u} - z_{\varphi}$.
We check that for each $u$ in a neighborhood of $0$ there is a unique $x_{\varphi(u)}$ 
such that $Pu = Px_{\varphi(u)}$, and variables change $u \mapsto (x_{\varphi(u)}, z_{u} - z_{\varphi(u)})$ will be well defined.
 
Since $x_{\varphi} = \hat{q}( \cdot +\varphi) - \hat{q}( \cdot) $, 
and $\frac{dx_{\varphi}}{ d \varphi}\big \vert _{\varphi = 0} = \partial_{\theta} \hat{q} \neq 0$,
the set $\{ x_{\varphi}: \; \varphi \in ]-\epsilon, \epsilon[ \}$ is a graph above 
$\mathbb{R}\cdot\partial_{\theta} \hat{q} = \ker(L_{\hat{q}})$ for small enough $\epsilon$. 
In particular that for $u$ close enough to zero, there is a unique $\varphi(u)$ such that $Pu = Px_{\varphi(u)}$.
Moreover, $\varphi \mapsto x_{\varphi}$ is $C^{k}( ]- \epsilon, \epsilon[, H^{-1}_{1/\hat{q}} )$ ($k \geq 0$) 
and the decomposition $u \mapsto (x_{\varphi(u)}, y = z_{u} - z_{\varphi(u)})$ 
is $C^{k}$ in a neighborhood of $u = 0$ in $H^{-1}_{1/q}$.
The mapping $u \mapsto (\varphi, y)$ is well defined and $C^{k} \, : \, H^{-1}_{1/q} \rightarrow ]- \epsilon, \epsilon[ \times H^{-1}_{1/q}$ on a neighborhood of the origin. 
%

We can now write the dynamics for the new variables $\varphi$ and $y$.
The $x_{\varphi}$ are equilibria, ie we have $L_{\hat{q}} x_{\varphi} = f( x_{\varphi})$, for $\varphi \in [0 , 2 \pi]$.
If $u$ is a solution of $\frac{du}{dt} + L_{\hat{q}} u = f(u)$, we have
\begin{equation} \left ( \partial_{\varphi} x_{\varphi} \right ) \frac{d \varphi}{dt} + \frac{dy}{dt} + L_{\hat{q}} y = f(x_{\varphi} + y) - f(x_{\varphi})\, .
\end{equation}
Let us denote $A_{2}$ the restriction of $L_{\hat{q}}$ on $R(L_{\hat{q}})$, ie  $A_{2} = L_{\hat{q}} \vert_{R(L_{\hat{q}})}$,
and take $v = {\partial_{\theta}q}/{\Vert \partial_{\theta}q \Vert^{2}}$
 (so that $L_{\hat{q}} v = 0$ and $\ll v,\partial_{\varphi} x_{\varphi }\gg 
 \vert_{\varphi =0} = \ll v, \partial_{\theta} \hat{q}\gg = 1$). 
Then we get the system 
\begin{equation}  \frac{d \varphi}{dt} = \Phi(\varphi , y)  \, \; \textnormal{ and } \, \;
 \frac{dy}{dt} + A_{2} y = g(\varphi , y)\, , \end{equation}
where 
\begin{equation} 
\Phi(\varphi , y) = \frac{1}{\ll v , \partial_{\varphi} x_{\varphi}\gg} \ll v , f(x_{\varphi} + y) - f(x_{\varphi})\gg  \, ,
\end{equation}
and
 \begin{equation}  g(\varphi, y ) =f(x_{\varphi} + y) - f(x_{\varphi}) - \partial_{\varphi}x_{\varphi} \Phi(\varphi, y) \, .
\end{equation} 

We notice that the non linearity $f$ is well defined $L^{2}(\mathbb{S}) \rightarrow H^{-1}(\mathbb{S})$, 
and since $f$ is bilinear, we have $f \in C^{k}(L^{2}, H^{-1} )$ (like before, $k$ is arbitrary). 
Since $f \in C^{k}(L^{2}, H^{-1} )$ , $x_{\varphi} \in C^{k}( [0, 2 \pi] ; L^{2} )$ and 
$\partial_{\varphi} x_{\varphi} \vert_{\varphi = 0} = \partial_{\theta} \hat{q} \neq 0$, 
it is easy to check that there is a $\epsilon > 0$ and an $\eta > 0$ such that 
$ \Phi : ]- \epsilon, \epsilon[ \times B_{L^{2}}(0, \eta) \rightarrow \mathbb{R} $
and $ g : ]-\epsilon, \epsilon[ \times B_{L^{2}}(0, \eta) \rightarrow H^{-1}$ are $C^{k}$. 
The  Frechet derivative of $f$ at the origin is null, ie $Df(0) = 0$, 
and the expressions for $\Phi$ and $g$ give $D \Phi(0,0) = 0$ and $D g(0,0) = 0$. 

\subsection*{Proof of lemma \ref{th:lem4}}

Inequality (\ref{eq:estim_A_2}) also holds if we put $D(A_{2}^{\gamma + \alpha})$ and $D(A_{2}^{\gamma})$ instead of $D(A_{2}^{\alpha})$ and $H^{-1}_{1/q}$. 
For $\alpha = 0$, one directly has $\Vert e^{tA_{2}} v \Vert_{D(A_{2}^{\gamma})} \leq e^{- \beta t} \Vert v \Vert_{D(A_{2}^{\gamma})} $ 
for any $t \geq 0$. We now choose $\alpha=1/2$ and  a  value for  $\epsilon$.
\par Since the Frechet derivative of $\Phi$ and $g$ are smooth, for any $\delta > 0$, there is an $\eta> 0$ such 
that $ \vert \varphi \vert + \Vert y \Vert \leq \eta$ implies 
\begin{equation} 
\Vert D \Phi(\varphi,y) \Vert \leq \delta, \ \ \  \textnormal{ and } \ \ \  \Vert D g(\varphi,y) \Vert \leq \delta\, ,
\end{equation} 
where the norms are the appropriate operator norms.
We also see that for any $\varphi \in ]- \epsilon , \epsilon[$, we have 
$ \Phi(\varphi, 0) = 0$ and $ g(\varphi, 0) = 0$. 
Thus for any $\rho > 0$, there is a $\gamma( \rho ) $ such that on the set 
$\{ (\varphi, y) ; \; \vert \varphi \vert + \Vert y \Vert_{L^{2}} \leq \rho\}$, we have 
\begin{equation} 
\label{eq:gammarho}
\vert \Phi(\varphi, y) \vert + \Vert g(\varphi, y) \Vert_{H^{-1}} \leq \gamma( \rho ) \Vert y \Vert_{L^{2}}, \, \; \textnormal{and } \, \;
\gamma( \rho) \underset{ \rho \rightarrow 0}{\longrightarrow} 0\, . \end{equation}

By the hypothesis on the initial condition and by the continuity of $t \mapsto \vert \varphi_{t} \vert + \Vert y_{t} \Vert_{L^{2}} $
we have that there exists $T>0$ such that $\vert \varphi_{t} \vert + \Vert y_{t} \Vert_{L^{2}} \le \rho$.
We check first that the estimates of the lemma are true on $[0,T]$.

For $y_{0} \in R(L_{\hat{q}})$, we have
\begin{equation} y(t) = e^{-tA_{2}}y_{0} + \int_{0}^{t} e^{-(t-s)A_{2}}g(\varphi(s),y(s))ds
\, ,  \end{equation}
which implies on $[0,T]$
\begin{equation} 
\begin{split}
\Vert y(t) \Vert_{L^{2}} \, &\leq\,  \Vert y_{0} \Vert_{L^{2}} e^{- \vert \lambda _1\vert t} + 
\int_{0}^{t} \Vert e^{-(t-s)A_{2}} \Vert_{\mathcal{L}(H^{-1}, L^{2})} \Vert g(\varphi(s),y(s)) \Vert_{H^{-1}}ds
\\
&\leq\,  \Vert y_{0} \Vert_{L^{2}} e^{- \vert \lambda _1\vert t} +  C_{1/2, \epsilon}
\int_{0}^{t}  \frac{1}{\sqrt{t-s}} e^{- (1-\epsilon) \vert \lambda _1\vert(t-s)} \gamma(\rho) \Vert y(s) \Vert_{L^{2}} ds\, ,
\end{split}  
\end{equation}
where $\epsilon$ is arbitrarily chosen in $]0,1/2[$.

Then we consider $u(t) := \sup_{0 \leq s \leq t} \Vert y(s) \Vert_{L^{2}} e^{(1-2\epsilon)\vert \lambda _1\vert  s}$. 
We have for $t \in [0,T]$
\begin{multline} \Vert y(t) \Vert_{L^{2}} e^{(1-2\epsilon)\vert \lambda _1\vert t} \leq \Vert y_{0} \Vert_{L^{2}} + 
\gamma(\rho) C_{1/2, \epsilon} \int_{0}^{T} \frac{1}{\sqrt{t-s}} e^{- \epsilon \vert \lambda _1\vert (t-s)} u(s) ds 
 \\
 \leq \Vert y_{0} \Vert_{L^{2}} + \gamma(\rho) M_\epsilon u(T)\, ,
 \end{multline}
 where $M_\epsilon := C_{1/2, \epsilon} \int_{0}^{\infty} \frac{1}{\sqrt{s}} e^{- \epsilon \vert \lambda _1\vert (s)}  ds$
and this means (choose $\rho$ so that $\gamma(\rho) \le  1/(2M_\epsilon)$)
\begin{equation} 
u(T) \leq 2 \Vert y_{0} \Vert_{L^{2}}\, ,
\end{equation}
which directly implies \eqref{eq:lem4.2} with $\beta:= (1-2\epsilon) \vert \lambda_1\vert$, still for $t\in [0,T]$.

We now want to relax the condition $t \in [0,T]$. For this we first observe that since
 $\vert \frac{d \varphi}{dt} \vert = \vert \Phi(\varphi,y) \vert \leq \gamma(\rho) \Vert y \Vert_{L^{2}}$, we have
\begin{equation} 
\left\vert \frac{d \varphi}{dt} \right\vert \leq {2\gamma(\rho)} \Vert y_{0} \Vert_{L^{2}} e^{- \beta t} \, ,
\end{equation}
and therefore $ \vert \varphi (t) \vert \leq \vert \varphi_{0} \vert + 
 {\gamma(\rho)}\frac{\rho}{4\beta}  $. If we now choose  
 $\rho_{0}$ such that 
 we have 
$\gamma(\rho) \le   \beta $, then on $[0,T]$ we have
\begin{equation} \vert \varphi (t) \vert \leq \vert \varphi_{0} \vert + \frac{1}{4} \rho \leq \frac{3}{8} \rho\, , 
\end{equation}
and 
\begin{equation} \Vert y(t) \Vert_{L^{2}} + \vert \varphi (t) \vert \leq 
2 \Vert y_{0} \Vert_{L^{2}} + \frac{3}{8} \rho \leq \frac{5}{8} \rho < \rho\, . 
\end{equation}
This implies that there is a $T_{1} > T$ such that $ \Vert y(t) \Vert_{L^{2}} + \vert \varphi (t) \vert \leq \rho $ on $[0, T_{1}]$
and therefore that $T$ can be chosen arbitrarily large. In fact if we set 
 $T_\star:= \sup\{ t:\, \vert \varphi_{t} \vert + \Vert y_{t} \Vert_{L^{2}} \le \rho\}$ and if $T_\star <\infty$
 we can take $T=T_\star$ in the first part of the proof and we arrive to a contradiction.

\subsection*{Proof of corollary \ref{Corollaire_Normal_Hyperbolicity}}
Because of rotation symmetry, it is enough to study the system around $\hat{q}=q_{\varphi = 0}$. 
We consider the splitting $E = L^{2} = \ker(L_{\hat{q}}) \oplus R(L_{\hat{q}}) = T_{\hat{q}} \mathcal{C} \oplus N^{s}_{\hat{q}}$, setting $N^{s}_{\hat{q}} = R(L_{\hat{q}})$.
This splitting extends continuously to all  $\hat q ( \cdot + \varphi ) \in \mathcal C$ by rotations.

Any point of $\mathcal{C}$ is an equilibrium, so that for any time $t$ and integer $n$, we have 
$S_{nt} \vert_{\mathcal{C}} = Id \vert_{\mathcal{C}}$, and so 
$ D S_{nt}(\hat{q}) . v = v $ for any $v \in T_{\hat{q}} \mathcal{C}$. 
This implies that $ \Vert D S^{nT}(\hat{q}).v \Vert \leq \Vert v \Vert $ for any $T > 0$ and $n \in \mathbb{Z}$. 

We consider now $(\varphi_{0}, y_{0})$ and $(\varphi(t), y(t))$ the associated solution. 
The estimates from the lemma \ref{th:lem4} above imply 
 $\vert \varphi(t) - \varphi_{0} \vert \leq 2{\gamma(\rho)} \frac{1}{\beta} \Vert y_{0} \Vert, \;\;\;
 \text{and} \;\;\; \Vert y(t) \Vert \leq 2 \Vert y_{0} \Vert e^{- \beta t} ,$
where $\rho = 8 (\vert \varphi_{0} \vert + \Vert y_{0} \Vert)$.\\
For $v \in R(L_{\hat{q}})=N^{s}_{\hat{q}}$, we have $v = (0, y_{v}) = (0, v)$ in the new coordinates, 
and if we consider $q(0) = (\varphi_{0}, y_{0}) = \hat{q} + \epsilon v$, we find 
\begin{equation} \vert \varphi(t) \vert + \Vert y(t) \Vert \leq \left [ \frac{ 2 \gamma(\rho)}{ \beta} 
+ 2{e^{- \beta t}} \right ] \Vert y_{0} \Vert. \end{equation}

With $\Vert y_{0} \Vert = \frac{1}{8} \rho = \epsilon \Vert v \Vert$, we have 
$\gamma(\rho) \underset{\epsilon \rightarrow 0}{\longrightarrow} 0$, and so there is a 
$\epsilon_{1}$ such that for any $\epsilon \leq \epsilon_{1}$ we have 
$ \frac{ 2 \gamma(\rho)}{ \beta} \leq \frac{1}{10} $. 
And choosing $T$ such that $e^{- \beta T} \leq \frac{1}{20}$ gives 
\begin{equation} \Vert S^{T}(\hat{q} + \epsilon v) - S^{T} (\hat{q}) \Vert \leq
 \left [ \frac{1}{10} + 2 \frac{1}{20} \right ] \epsilon \Vert v \Vert \leq \frac{\epsilon}{5} \Vert v \Vert, \end{equation}
and $\Vert D S^{T}(\hat{q}) . v \Vert \leq \frac{1}{5} \Vert v \Vert.  $ 

In the same way, for any $n \geq 1$, there is a $ \epsilon_{n} >0$ such that 
for all $\epsilon < \epsilon_{n} < \epsilon_{1}$, we have
$\frac{ 2 \gamma(\rho)}{ \beta} \leq \frac{1}{10^{n}}$, and then  
\begin{equation} \Vert S^{nT}(\hat{q} + \epsilon v) - S^{nT} \hat{q} \Vert \leq 
\left [ \frac{1}{10^{n}} + 2 \frac{1}{20^{n}} \right ] \epsilon \Vert v \Vert \leq \frac{2 \epsilon}{10^{n}} \Vert v \Vert \end{equation}
for any $v \in R(L_{\hat{q}})= N^{s}_{\hat{q}}$, 
and this implies 
 $ \Vert D S^{nT}(\hat{q}) . v \Vert \leq \frac{2}{10^{n}} \Vert v \Vert.  $
This completes the proof of normal hyperbolicity of $\mathcal{C}$. 

\end{document}